\documentclass[11pt]{article}
\usepackage[utf8]{inputenc}
\usepackage{amsmath,amsfonts,amssymb}
\usepackage{amsthm}
\usepackage{latexsym}
\usepackage[noadjust]{cite}
\usepackage{graphicx}
\usepackage{xcolor}
\usepackage{mathtools}
\usepackage[margin=1in]{geometry}
\usepackage{thm-restate}
\usepackage{enumitem}
\usepackage[linesnumbered,ruled]{algorithm2e}
\usepackage[colorlinks=true, allcolors=blue]{hyperref}
\usepackage[nameinlink,capitalise]{cleveref}
\hypersetup{
    citecolor={violet}
}

\newtheorem{theorem}{Theorem}[section]

\newtheorem{lemma}[theorem]{Lemma}

\newtheorem{claim}[theorem]{Claim}
\theoremstyle{definition}
\newtheorem{definition}[theorem]{Definition}
\newtheorem{remark}[theorem]{Remark}

\newcommand{\N}{\mathbb{N}}
\newcommand{\E}{\mathbb{E}}
\newcommand{\R}{\mathbb{R}}

\newcommand{\rank}{\text{rank}}

\newcommand{\eps}{\epsilon}

\newcommand{\M}{\mathcal{M}}
\newcommand{\I}{\mathcal{I}}

\title{An $\widetilde{O}(n^{3/7})$ Round Parallel Algorithm for Matroid Bases}
\date{\today}

\author{Sanjeev Khanna\thanks{Courant Institute, Warren Weaver Hall, New York University, 251 Mercer Street, New York, NY 10012.  Supported in part by NSF award CCF-2625203 and AFOSR award FA9550-25-1-0107. Email: {\tt sanjeev.khanna@nyu.edu}.} \and Aaron Putterman\thanks{School of Engineering and Applied Sciences, Harvard University, Cambridge, Massachusetts, USA. Supported in part by the Simons Investigator Awards of Madhu Sudan and Salil Vadhan and AFOSR award FA9550-25-1-0112. Email: \texttt{aputterman@g.harvard.edu}.} \and Junkai Song\thanks{Courant Institute, Warren Weaver Hall, New York University, 251 Mercer Street, New York, NY 10012. Supported in part by NSF award CCF-2625203. Email: \texttt{junkaisong@nyu.edu}.}}

\begin{document}

\pagenumbering{gobble}

\maketitle

\begin{abstract}
We study the parallel (adaptive) complexity of the classic problem of finding a basis in an $n$-element matroid, given access via an \emph{independence oracle}. 
In this model, the algorithm may submit polynomially many independence queries in each round, and the central question is: 
how many rounds are necessary and sufficient to find a basis?

Karp, Upfal, and Wigderson (FOCS~1985, JCSS~1988; hereafter KUW) initiated this study, showing that $O(\sqrt{n})$ adaptive rounds suffice for any matroid, and that $\widetilde\Omega(n^{1/3})$ rounds are necessary even for partition matroids. 
This left a substantial gap that persisted for nearly four decades, until Khanna, Putterman, and Song (FOCS~2025; hereafter KPS) achieved $\widetilde O(n^{7/15})$ rounds, the first improvement since~KUW.

In this work, we make another conceptual advance beyond KPS, giving a new algorithm that finds a matroid basis in $\widetilde O(n^{3/7})$ rounds. 
We develop a structural and algorithmic framework that brings a new lens to the analysis of random circuits, moving from reasoning about individual elements to understanding how dependencies span multiple elements simultaneously.
Specifically, our framework introduces three new ideas:

\begin{enumerate}
  \item A new \emph{subset-based decomposition} that provides precise guarantees on how random circuits intersect groups of elements, yet remains computable in few adaptive rounds.
  \item A new method for identifying and removing redundant elements in bulk, based on short circuit witnesses that certify redundancy across large portions of the matroid.
  \item An adaptive early-stopping strategy that uses the evolving structure of the matroid to decide when to contract or delete, preventing wasted rounds.
\end{enumerate}

Each of these contributions, in isolation, already yields meaningful improvements over the round complexity achieved in KPS; their combination enables our main result of $\widetilde O(n^{3/7})$ rounds.

As further consequences, incorporating our improved basis-finding algorithm into known reductions yields an
$\widetilde O(n^{17/21})$-round parallel algorithm for matroid intersection, as well as an 
$\widetilde O(n^{3/7})$-round parallel algorithm for approximate monotone submodular maximization under a matroid constraint.
\end{abstract}

\pagebreak  

\tableofcontents

\pagebreak 

\pagenumbering{arabic}

\section{Introduction} \label{sec:intro}

In this work, we continue the study of designing efficient parallel algorithms for fundamental combinatorial optimization problems. 
Parallel efficiency is typically measured through {\em adaptive round complexity}, the number of rounds of operations/queries required to solve a problem, where each round allows polynomially many operations/queries to be executed in parallel. 
Many foundational problems have been explored under this lens, beginning with the seminal works on maximal independent sets~\cite{Lub86}, spanning trees \cite{KUW85, khanna2025optimal}, and  graph matchings~\cite{Lov79, KUW86, FGT16, ST17}, and extending to more recent advances in submodular function minimization~\cite{BS20, CCK21, CGJS22}, submodular function maximization~\cite{BS18a,BS18b,BRS19a,BRS19b,CQ19a,CQ19b,EN19,ENV19,FMZ19a,FMZ19b,KMZ+19,CFK19,BBS20,LLV20}, and matroid intersection~\cite{GGR22, GT17, Bli22, BT25}.

In this paper, we focus specifically on understanding the parallel complexity of finding a \emph{basis} of a matroid.
Formally, a matroid $\mathcal{M} = (E, \mathcal{I})$ consists of a ground set $E$ of $n$ elements and a collection of independent sets $\mathcal{I} \subseteq 2^E$ satisfying:
\begin{itemize}
\item $\emptyset \in \mathcal{I}$;
\item if $S \in \mathcal{I}$ and $S' \subseteq S$, then $S' \in \mathcal{I}$; and
\item if $S, T \in \mathcal{I}$ and $|S| < |T|$, then there exists $e \in T \setminus S$ such that $S \cup \{ e \} \in \mathcal{I}$.
\end{itemize}

A \emph{basis} is an independent set $S \subseteq E$ that is maximal under inclusion. 
Because matroids capture independence across a wide range of structures, from cycle-freeness in graphs to linear independence in vector spaces, finding a basis is a fundamental combinatorial primitive. 
For example, in graphs, a matroid basis corresponds to a spanning forest; in vector spaces, it coincides with the standard notion of a basis.

Given the broad applicability of matroids, understanding the parallel complexity of basis finding remains a central theoretical challenge. 
Despite decades of work, a substantial gap persists in characterizing how efficiently a matroid basis can be computed in parallel. 
The difficulty stems from the generality of matroids: the number of distinct matroids grows super-exponentially in the size $n$ of the ground set~\cite{BPV15}. This fact points both to the rich combinatorial structure of matroids, and the impossibility of representing them succinctly.
Thus in the general setting considered in this work, it is typical to assume that the access to a matroid is given via an oracle.
Specifically, we will assume the matroid~$\mathcal{M}$ is presented via an \emph{independence oracle}~$\mathrm{Ind}$, which reports membership in~$\mathcal{I}$ for any queried subset~$S \subseteq E$. 
In each adaptive round, the algorithm may make polynomially many independence queries.

This formalization of the problem was first introduced by Karp, Upfal, and Wigderson ~\cite{KUW85, KUW88} (KUW), who asked:

\begin{quote}
\emph{Given independence oracle access to an arbitrary matroid $\mathcal{M}$, how many rounds of polynomially many queries are required to find a basis?}
\end{quote}

KUW~\cite{KUW85} provided the first set of foundational results, showing that $O(\sqrt{n})$ adaptive rounds suffice and that $\widetilde{\Omega}(n^{1/3})$ rounds are necessary even for partition matroids. 
This left a broad gap that persisted for nearly forty years, until Khanna, Putterman, and Song (KPS)~\cite{KPS25} developed a new decomposition-based algorithm achieving $\widetilde{O}(n^{7/15})$ rounds, the first improvement since~KUW. 
Nevertheless, the precise adaptive complexity of matroid basis finding remains unresolved.

\paragraph{Our Contributions.}

We make further progress on this long-standing question by presenting a new parallel algorithm with improved round complexity:

\begin{theorem}\label{thm:3-7roundsintro}
There is an algorithm which for any arbitrary matroid $\mathcal{M}$ on $n$ elements, makes polynomially many independence queries per round and recovers a basis of $\mathcal{M}$ in $\widetilde{O}(n^{3/7})$ rounds with high probability. 
\end{theorem}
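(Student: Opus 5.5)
The plan follows the KUW/KPS template. We maintain a current matroid, obtained from $\mathcal{M}$ by contracting a growing independent set $B$ and deleting elements certified to be spanned. In each phase we either contract a large independent set, or delete many elements in bulk. The theorem then follows from a potential argument: every phase of $\widetilde O(1)$ rounds either cuts the remaining rank $r$ by a substantial amount or cuts the number of remaining elements $n'$ by a substantial amount. Balancing the two progress rates gives $\widetilde O(n^{3/7})$ total rounds.

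\textbf{Step 1 (subset-based decomposition).} Sample random sets $S$ of prescribed sizes and partition the ground set into groups $G_1,\dots,G_k$. The decomposition should guarantee, for each group, tight bounds on the probability that a random circuit in $S\cup\{e\}$ meets $G_i$, uniformly over the elements of the group rather than element by element. Elements are classified by the smallest sample size $s$ at which they become spanned by a random $S$ of size $s$. The decomposition is built by greedily peeling off groups whose spanning behaviour is stable, testing $\mathrm{Ind}(S)$ and $\mathrm{Ind}(S\cup\{e\})$ on polynomially many samples in parallel. With a logarithmic number of size scales this needs only $\widetilde O(1)$ rounds.

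\textbf{Step 2 (progress per phase).} Two sub-steps drive progress.
\begin{enumerate}
\item \emph{Bulk deletion.} If a group admits short circuit witnesses, meaning that for most of its elements a random independent subset of small size already spans them, then all of those elements can be deleted at once. A short circuit of length $\ell$ is witnessed by a single round of queries on all $\ell$-subsets of a sample, and a union bound shows the deletions are safe with high probability.
\item \emph{Contraction.} Otherwise, the elements are ``far'' from being spanned. A KUW-style random prefix contraction then adds about $\sqrt{\cdot}$ many basis elements per round. In that regime the decomposition certifies that few circuits cross groups, which is what allows contracting more than the KUW amount.
\end{enumerate}

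\textbf{Step 3 (early stopping).} The algorithm monitors the rank drop and the deletion count and switches from contracting to deleting, or vice versa, as soon as the current mode's rate falls below a threshold. This keeps every round productive. The thresholds are set as powers $n^{\alpha}$, and the exponents are optimized so that the worst case over mixed strategies gives $n^{3/7}$. We write the recursion $T(n',r)\le \widetilde O(1)+\max\{T(n',r-\Delta_c),\,T(n'-\Delta_d,r)\}$ and solve it.

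The main obstacle is Step 2's quantitative guarantee: showing that when bulk deletion fails, contraction gains more than the KUW rate. This requires controlling how random circuits distribute across groups after contraction, since contracting alters the decomposition. I would first try to prove that the group-level intersection probabilities are approximately preserved under contracting a random independent subset of size below the decomposition's scale, re-running Step 1 whenever they drift. The $n^{3/7}$ exponent should come out of balancing this drift against the deletion threshold.
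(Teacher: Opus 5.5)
Your outline has a genuine gap. It never establishes a progress guarantee from which $n^{3/7}$ could follow, and its round accounting is not justified.

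The assertion that the decomposition costs only $\widetilde{O}(1)$ rounds is unsupported, and the decomposition the paper actually uses is inherently sequential. Each globally optimal set $S_i$ must satisfy its guarantees ($q_{S_i}\ge 1-2^{-20}$ and $p_{T,\M|_{S_i}}\ge |T|/(2^{21}|S_i|\log n)$ for all $T\subseteq S_i$) with respect to the residual matroid $\M-S_1-\dots-S_{i-1}$. So peeling costs one round per set, for up to $O(n^{1/3})$ sets. The paper does not try to make every round productive. It pays for these peeling rounds and amortizes, stopping as soon as the progress available from $S_1,\dots,S_i$ reaches $i\cdot f$ with $f=n^{4/7}/\log^{c}n$.

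Your Step 3 (switching modes so every round is productive) and your recursion for $T(n',r)$ contain no concrete $\Delta_c,\Delta_d$. The only contraction mechanism you actually name is KUW's, which gives about $\sqrt{n}$ per round. The exponent comes from quantitative facts absent from your plan:
\begin{itemize}
\item contraction of $\Omega(\alpha(S)n/|S|)$ independent elements;
\item deletion of $\widetilde{\Omega}(\min(|S|,|S|^{3/2}/\alpha(S)))$ redundant elements;
\item $\alpha$-growth of $\Omega(i^2)$ within a size class, together with near-monotonicity of $\alpha(S_j)/|S_j|$;
\item the dominant-category recursion $a_i=\widetilde{O}(fb_i/n)$, $b_i=\widetilde{O}(f^4a_{i-1}^2/n^2)$, $a_0=\widetilde{O}(f^2/n)$, which forces either early termination or an $\widetilde{\Omega}(n^{3/4})$-progress step within $n^{1/7}$ rounds.
\end{itemize}

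Your deletion step also fails as written. Elements that each lie in some short circuit cannot all be deleted at once: every element of a single short circuit qualifies. You must commit to keeping the witnesses, and a greedy argument then gives only $\Omega(|W|/\ell)$ deletions; their safety is deterministic, not a union bound. Querying all $\ell$-subsets of a sample is superpolynomial for non-constant $\ell$; the paper instead obtains short circuits as first circuits $C_\pi$ of random permutations.

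Most importantly, you give no reason why short circuits exist for most elements, and this is the key new idea. Split $S$ into core elements ($p_{x,\M|_S}\ge\alpha(S)^2/|S|^2$) and non-core elements. Core elements are handled by KPS's one-round bound $\widetilde{\Omega}(\sum_x\min(1,p_{x,\M|_S}|S|^2/\alpha(S)^2))$. Let $\ell=\E_\pi|C_\pi\cap(S\setminus\mathrm{CORE}_S)|$ be the total non-core mass. Suppose more than $|S|/4$ non-core elements had no circuit with $O(\ell\log n)$ non-core elements. By subset-hitting, that set would be hit with probability $\widetilde{\Omega}(1)$, pushing the expectation above $\ell$, a contradiction. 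Balancing $\ell|S|^2/\alpha(S)^2$ against $|S|/\ell$ then yields $|S|^{3/2}/\alpha(S)$.

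Finally, the obstacle you single out, preserving group intersection probabilities under contraction, is not needed. After each contraction or deletion the paper simply restarts on the new matroid, giving $F(n)=i+F(n-\widetilde{\Omega}(i\,n^{4/7}))$.
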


KUW established a lower bound of $\widetilde{\Omega}(n^{1/3})$ rounds, so our result narrows the possible exponent to the interval $[\tfrac{1}{3}, \tfrac{3}{7}] \approx [0.333, 0.429]$, a significant improvement over the previous upper bound of $\tfrac{7}{15} \approx 0.467$ from~KPS. 
Our approach introduces a new structural and algorithmic framework that brings a \emph{new analytical lens} to random-circuit structure, moving from reasoning about individual elements to understanding how dependencies spread across groups of elements. 
This shift enables three new algorithmic components: a refined matroid decomposition, a method for bulk identification of redundant elements, and an adaptive meta-strategy for deciding when to contract or delete, which together yield the $\widetilde{O}(n^{3/7})$ bound. We explain these contributions and their context in more detail in \cref{sec:technicaloverview}.

\paragraph{Applications.}

The adaptive complexity model has also been explored in several related matroid optimization problems, where the task of finding a matroid basis or computing the rank of the matroid (equivalently, the size of a basis) often serves as a fundamental subroutine. Consequently, our improved basis-finding algorithm directly yields sharper bounds.

\paragraph{Matroid Intersection.}

In the matroid intersection problem, we are given two matroids $\mathcal{M}_1 = (E, \mathcal{I}_1)$ and $\mathcal{M}_2 = (E, \mathcal{I}_2)$ on the same ground set, and the goal is to find a largest set $S \subseteq E$ such that $S \in \mathcal{I}_1 \cap \mathcal{I}_2$. In the sequential setting, Edmonds~\cite{Edm70,Edm09} gave the first polynomial time algorithm, and subsequent work has progressively improved the running time, culminating in an $\tilde{O}(n\sqrt{r})$ rank-query algorithm~\cite{CLS+19} and an $\tilde{O}(nr^{3/4})$ independence-query algorithm~\cite{Bli21} when the size of the intersection is $r$.

In the parallel setting, Chakrabarty, Chen, and Khanna~\cite{CCK21} established an $\Omega(n^{1/3})$ round lower bound for rank-query algorithms, improving upon the classical $\Omega(n^{1/3})$ lower bound for independence-query algorithms by~\cite{KUW85}. On the upper bound side, Blikstad~\cite{Bli22} presented the first sublinear round algorithms in both the rank-query and independence-query models, later improved by~\cite{BT25}, who gave an $O(n^{2/3})$ round rank-query algorithm, and an $O(n^{5/6})$ round independence-query algorithm.

Our improved basis finding algorithm has immediate implications for this classic problem: indeed, using our improved basis-finding procedure within the work of~\cite{BT25}, we obtain the following round complexity for matroid intersection (proved formally in \cref{sec:matroidIntersection}).
\begin{theorem}\label{thm:matroid-intersection-intro}
There is an $\widetilde{O}(n^{17/21})$-round algorithm that, given any two matroids $\M_1$ and $\M_2$ on the same ground set of $n$ elements, 
makes polynomially many independence queries per round and, with high probability, outputs a maximum common independent set of $\M_1$ and $\M_2$.
\end{theorem}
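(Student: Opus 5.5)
We obtain \cref{thm:matroid-intersection-intro} by running the independence-query algorithm of~\cite{BT25} as a black box and replacing its basis-finding subroutine with the algorithm of \cref{thm:3-7roundsintro}. The first step is to extract from~\cite{BT25} a parameterized round bound. Their independence-query algorithm has the following structure, parameterized by a threshold $p$ and by the round cost $T(n)$ of finding a basis in an $n$-element matroid:
\begin{itemize}
\item \emph{Phase 1.} Grow a common independent set of size $r - n/p$ by repeatedly finding greedy/maximal common independent sets and augmenting along short paths in the exchange graph. Each stage either uses a number of rounds governed by the length of augmenting paths, or invokes a basis (or maximal independent set) computation in a contracted/restricted matroid. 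The number of such invocations is bounded polynomially in $p$.
\item \emph{Phase 2.} Finish the remaining $O(n/p)$ augmentations one at a time. Each augmentation costs a constant or polylogarithmic number of rounds, using polynomially many parallel independence queries to build the exchange graph and run BFS within the oracle model.
\end{itemize}
With $T(n)=O(\sqrt{n})$ from KUW, these phases balance to give $O(n^{5/6})$. I will restate their accounting in the form
\[
\text{rounds} = \widetilde O\!\left(p^{a}\cdot T(n) + n/p\right)
\]
for the appropriate exponent $a$. Matching $5/6$ with $T=\sqrt{n}$ suggests that $p^{a}\sqrt{n}=n/p$ should give $p=n^{1/6}$, hence $a=2$.

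The second step is the substitution. Plug in $T(n)=\widetilde O(n^{3/7})$ and balance $p^{2}n^{3/7}=n/p$. This gives $p^{3}=n^{4/7}$, so $p=n^{4/21}$, and the total is $n^{1-4/21}=n^{17/21}$. This matches the target exponent and corroborates that the recurrence has this form.

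Two loose ends remain in this step:
\begin{itemize}
\item \emph{Success probability.} Our algorithm succeeds only with high probability. The number of invocations is polynomial in $n$, so a union bound preserves high-probability success. The failure probability can be made $n^{-c}$ for any constant $c$ by repetition, or by verifying a candidate basis in one round, since maximality can be checked with $n$ parallel queries.
\item \emph{Oracle access.} The subroutine is applied to minors (restrictions and contractions by known independent sets). Independence oracles for these minors are simulated with one query each, so no round overhead is incurred.
\end{itemize}

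The main obstacle is the first step: verifying that~\cite{BT25}'s analysis really depends on basis-finding only through a term of the form $p^{2}T(n)$ (or whatever exact exponent appears), with no other place where the $\sqrt n$ bound of KUW is used implicitly. Such an implicit use could appear, for example, in computing ranks of sets or maximal independent subsets within sets, which is again basis-finding and so is also improved. The risk is an independent $\sqrt n$-type step, such as a KUW-style rank estimate, that does not benefit from the substitution. I would go through each subroutine of~\cite{BT25} and confirm that it reduces to basis computations in minors. I would then redo their parameter optimization symbolically in $T(n)=n^{\beta}$, giving the exponent $1-(1-\beta)/3$. Setting $\beta=3/7$ yields $17/21$.
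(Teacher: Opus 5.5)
Your overall route is the paper's: use the independence-query algorithm of \cite{BT25} as a black box, substitute the $\widetilde O(n^{3/7})$ basis algorithm, and rebalance. In the worst case $r=n$ your parameters even coincide with the paper's up to constants. The gap is in the one step that carries all the content, namely the form of the round bound. You never establish that the reduction costs $\widetilde O(p^{2}T(n)+n/p)$. Instead you posit a shape $p^{a}T(n)$ and fit $a$ to the single data point ``$T=\sqrt n$ gives $n^{5/6}$''. One data point cannot pin down the functional form. For example, an accounting of $p\,n^{1/6}T(n)+n/p$ is still linear in $T(n)$ and also yields $n^{5/6}$ at $T=\sqrt n$ (with $p=n^{1/6}$). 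Yet at $T=n^{3/7}$ it balances at $n^{67/84}$ rather than $n^{17/21}=n^{68/84}$. Your phase descriptions (invocations ``polynomial in $p$'', ``constant or polylogarithmic'' rounds per augmentation) are likewise unverified, and you leave the needed check as a to-do. As written, this is a consistency check rather than a proof.

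What closes the gap is the parameterized statement of \cite{BT25} that the paper invokes. Given any $T(n)$-round basis algorithm, one can find in $\widetilde O\bigl(nT(n)/(\varepsilon\Delta)\bigr)$ rounds a common independent set of size at least $r-(\varepsilon r+\Delta)$. Moreover, any common independent set can be enlarged by one element, or certified maximum, in a single round. Since this holds for arbitrary $T(n)$, your worry about a hidden KUW-type step disappears. Split the deficit $D=\varepsilon r+\Delta$ evenly, so that $\varepsilon r=\Delta=D/2$. Then the first phase costs $4nrT(n)/D^{2}$, and taking $D=n/p$ gives at most $4p^{2}T(n)$, which confirms $a=2$. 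The optimum is at $D^{3}\approx nrT(n)$, which up to constants is the paper's choice $\varepsilon=n^{10/21}r^{-2/3}$, $\Delta=\varepsilon r$. That choice gives $\widetilde O(n^{10/21}r^{1/3})\le\widetilde O(n^{17/21})$ rounds in total. With this lemma cited, your remaining remarks (union bound over the polynomially many basis calls, one-query simulation of oracles for minors) are fine.
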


This improves upon the previous best complexity of $\widetilde{O}(n^{37/45})$ rounds established in~\cite{KPS25}. 

\paragraph{Submodular Function Maximization under a Matroid Constraint.}

In the \emph{submodular function maximization} problem, we are given a monotone submodular function $f: 2^E \to \mathbb{R}_{\ge 0}$ and a matroid $\M = (E, \I)$, and the goal is to find an independent set $S \in \I$ that approximately maximizes $f(S)$. In the sequential setting, a $(1-1/e)$-approximation algorithm has been known in the case where $\M$ is a uniform matroid~\cite{NWF78}, and Vondr\'{a}k~\cite{Von08} extended this guarantee to general matroids. The approximation factor is optimal with polynomially many queries (here queries refer to evaluating the function $f$ on a chosen set $S \subseteq E$)~\cite{NW78,Von13}.

Balkanski and Singer~\cite{BS18a} initiated the study of this problem in the parallel complexity model. When $\M$ is a uniform matroid, they proved that any approximation better than $O(1/\log n)$ requires $\Omega(\log n/\log \log n)$ rounds, and designed a $1/3$-approximation algorithm in $O(\log n)$ rounds. Subsequent works achieved $1-1/e-\epsilon$ approximation algorithms using $O(\text{poly}(\log n,1/\epsilon))$ rounds, both when $\M$ is a uniform matroid~\cite{BRS19a,CQ19a,EN19,FMZ19a}, and when $\M$ is an arbitrary matroid accessed via rank queries~\cite{BRS19b,CQ19b,ENV19}. However, when $\M$ is accessed via independence queries, their algorithms only achieved a round complexity of $\widetilde{O}(\sqrt{n} / \eps^3)$, as their computation relies explicitly on finding a basis\cite{BRS19b}. Later, a separate work by Li, Liu and Vondr\'{a}k~\cite{LLV20} showed that a polynomial dependence on $\epsilon$ in the round complexity is unavoidable.

As in matroid intersection, by combining the algorithm of~\cite{BRS19b} with our improved matroid basis finding algorithm, we obtain immediate improvements in the round complexity. The resulting guarantee is stated below; a formal proof appears in~\cref{sec:matroidIntersection}.

\begin{theorem}\label{thm:submodular-maximization}
For any $\epsilon > 0$, there is an $\widetilde{O}(n^{3/7}\epsilon^{-3})$-round algorithm that, given any matroid $\M=(E,\I)$ on $n$ elements and any monotone submodular function $f : 2^E \to \R_{\ge 0}$, 
makes polynomially many independence queries per round and outputs, with high probability, a $(1 - 1/e - O(\epsilon))$-approximation to the maximum of $f$ under the matroid constraint~$\M$.
\end{theorem}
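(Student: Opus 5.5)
The plan is to treat the algorithm of~\cite{BRS19b} as a black box that is parameterized by its basis-finding subroutine, and then substitute our basis algorithm (\cref{thm:3-7roundsintro}) for that subroutine. The first step is to re-examine the structure of the~\cite{BRS19b} algorithm for monotone submodular maximization under a matroid constraint. It runs in $O(\mathrm{poly}(\log n)/\epsilon^{c})$ outer iterations; with their accounting, the $\widetilde{O}(\sqrt{n}/\epsilon^3)$ bound comes from roughly $\widetilde{O}(\epsilon^{-3})$ sequential stages. Within each stage, the only place where independence queries create adaptivity beyond $O(1)$ rounds is a call that finds a basis, or a maximal independent subset, of a derived matroid.

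The derived matroids that appear there are restrictions or contractions of $\M$, for instance $\M/S$ restricted to a candidate set $X$. An independence oracle for such a matroid is simulated by one query to $\mathrm{Ind}$ on $S' \cup T$, where $S'$ is a basis of $S$ already in hand. So each oracle call costs one round of queries to the original oracle, and polynomially many per round remain polynomially many. The matroids have at most $n$ elements, so each call to our algorithm costs $\widetilde{O}(n^{3/7})$ rounds and succeeds with high probability. All remaining work in a stage is evaluating $f$ on polynomially many sets and sampling, which takes $O(\log n)$ rounds.

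Summing over stages gives $\widetilde{O}(\epsilon^{-3})\cdot \widetilde{O}(n^{3/7}) = \widetilde{O}(n^{3/7}\epsilon^{-3})$ rounds. For correctness we take a union bound over the polynomially many (in $n$ and $1/\epsilon$) basis-finding calls; amplifying each failure probability to $n^{-c}$ for large enough $c$ costs only logarithmic factors. Conditioned on every call succeeding, the run is identical in distribution to a run of~\cite{BRS19b} with an exact basis oracle, so their $(1-1/e-O(\epsilon))$ guarantee transfers directly.

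The main obstacle is checking, against~\cite{BRS19b}, that their round bound really factors as (number of stages) $\times$ (cost of one basis computation) and does not depend on $\sqrt{n}$ in some other way. For example, they might interleave a KUW-style procedure with the function evaluations, or rely on a specific property of the KUW output such as a greedy or random ordering. If the algorithm needs a basis produced in a particular order, such as a maximum-weight basis under threshold weights, I would handle it by bucketing elements into $O(\log n/\epsilon)$ weight classes and computing bases of nested contractions. That multiplies the cost by only a polylogarithmic factor, which $\widetilde{O}$ absorbs.
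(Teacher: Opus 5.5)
Your overall strategy (plug the new basis algorithm into the \cite{BRS19b} reduction and pay $\widetilde{O}(\epsilon^{-3})$ times its round cost) is the same as the paper's. However, the step that carries all the content is missing. You never pin down what \cite{BRS19b} actually needs from the matroid side, and your description of it is inaccurate. Their reduction is parameterized not by the cost of finding \emph{a} basis, but by the cost $T(n)$ of generating a \emph{random feasible sequence} $(a_1,\dots,a_{\rank(\M)})$. This is an ordered basis in which each $a_i$ is uniform over the elements $a$ with $\{a_1,\dots,a_{i-1},a\}\in\I$, and their analysis relies on exactly this uniformity. A single call to the algorithm of \cref{thm:3-7roundsintro} returns an unordered basis whose distribution depends on that algorithm's internals. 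So your assertion that, conditioned on success, the run is ``identical in distribution'' to a run of \cite{BRS19b} is unjustified. Your fallback does not repair this. Bucketing into $O(\log n/\epsilon)$ weight classes addresses a max-weight-basis requirement, and coarsening the order leaves the within-class order arbitrary, which destroys the uniformity property.

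The missing idea, which the paper uses (Algorithm 4 and Lemma 15 of \cite{BRS19b}), goes as follows. Draw a uniformly random permutation $e_1,\dots,e_n$. Run $n$ copies of the basis algorithm \emph{in parallel} on the restrictions $\M|_{\{e_1,\dots,e_i\}}$ to obtain $r_i=\rank(\{e_1,\dots,e_i\})$. Then output, in order, the elements $e_j$ with $r_j-r_{j-1}=1$. This is the random greedy sequence, which is a random feasible sequence, and it costs $\widetilde{O}(n^{3/7})$ rounds. Crucially, the output depends only on the ranks, which do not depend on which basis each call happens to return. This is what actually makes your distributional claim true once all calls succeed, via a union bound over the polynomially many calls. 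Put differently, your bucketing idea is what is needed if you take it to singleton classes and do the nested computations in parallel rather than as sequential contractions. Apply this with $T(n)=\widetilde{O}(n^{3/7})$ to the contracted or restricted matroids that arise; these have at most $n$ elements, and you simulate their oracles correctly. The \cite{BRS19b} reduction then gives $\widetilde{O}(n^{3/7}\epsilon^{-3})$ rounds.
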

This improves the previous best bound of $\widetilde{O}(n^{7/15}\epsilon^{-3})$ rounds, achievable via the algorithm of~\cite{KPS25}, though not explicitly stated in their work.

We next provide background on prior work and terminology in \cref{sec:priorWork}, before turning to an overview of our new techniques in \cref{sec:technicaloverview}.

\subsection{Prior Work}\label{sec:priorWork}

We start by recalling some matroid terminology that will be used throughout the introduction and reviewing the approaches of KUW~\cite{KUW85,KUW88} and KPS~\cite{KPS25}, as these provide the key backdrop for our contribution.

\paragraph{Notation and Terminology.}
For a matroid $\M = (E, \I)$, a \textbf{basis} is a set $B \subseteq E$ that is a \emph{maximal independent set}: $B \in \I$, but $B \cup \{e \} \notin \I$ for all $e \in E \setminus B$. A \textbf{circuit} is a set $C \subseteq E$ that is a \emph{minimal dependent set}: $C \notin \I$ but $C \setminus \{e\} \in \I$ for all $e \in C$. 
We write $\mathrm{rank}(S)$ for the size of the largest independent subset of $S \subseteq E$, and define the \textbf{span} of $S$ as 
$\mathrm{span}(S) = \{e \in E : \mathrm{rank}(S) = \mathrm{rank}(S \cup \{e\})\}$. 
That is, $\mathrm{span}(S)$ consists of all elements whose addition to $S$ does not increase its rank.

We will rely on several standard facts about matroids (see, e.g.,~\cite{Oxl11}). 
If $S \subseteq E$ satisfies $\mathrm{rank}(S) = \mathrm{rank}(E)$, then any basis of $S$ is also a basis of $E$. 
Moreover, if $S$ is independent, then there exists a basis of~$\M$ that contains~$S$ by the extension property of matroids. 
This fact motivates the operation of \textbf{contraction}: given $\M = (E, \I)$ and $S \in \I$, the contracted matroid $\M / S$ is defined on the ground set $E \setminus S$ so that $T \subseteq E \setminus S$ is independent in $\M / S$ if and only if $T \cup S$ is independent in~$\M$. 
Intuitively, contraction corresponds to ``committing'' to include $S$ in the eventual basis; if one finds a basis~$T$ of $\M / S$, then $S \cup T$ forms a basis of the original matroid $\M$.

\paragraph{Formal Problem Statement} 
As mentioned above, we study the \emph{parallel complexity} of finding a basis of a matroid. 
Formally, an algorithm~$A$ is given access to a matroid $\M = (E, \I)$ through an \emph{independence oracle}~$\mathrm{Ind}$, which, for any set $S \subseteq E$, returns $\mathbf{1}[S \in \I]$. 
The computation proceeds in \emph{rounds}: in each round, the algorithm may issue up to $\mathrm{poly}(n)$ oracle queries in parallel, where each query is a subset $S \subseteq E$, and the oracle responds whether $S$ is independent. 
Importantly, the queries in the $i$th round are made in \emph{parallel}, meaning that these queries depend only on responses to queries in rounds $1, \dots i-1$ (and on the algorithm's internal randomness). 

The objective is to find a basis of the matroid $\M$ (with high probability over the randomness of the algorithm) in as few rounds of queries as possible. Lastly, our algorithm should work \emph{for all} matroids $\M$: that is, we define the round complexity to be the \emph{maximum over all matroids} $\M$ on $n$ elements, of the number of rounds the algorithm requires to find a basis of $\M$ with high probability.

\paragraph{Overarching Themes.}

With these preliminaries in place, we turn to the high-level intuition behind both KUW and KPS. 
In both works, progress toward finding a matroid basis proceeds through two complementary operations:

\begin{enumerate}
    \item \textbf{Deleting redundant elements:} If one can identify a set $S$ such that every element of $S$ lies in the span of $E \setminus S$, then deleting $S$ does not reduce the matroid’s rank. 
This means $\mathrm{rank}(E) = \mathrm{rank}(E \setminus S)$ and therefore there exists a basis supported entirely on $E \setminus S$. This reduces the search space of the problem; instead of finding a basis over $\M = (E, \I)$, we instead search over the matroid $\M|_{E \setminus S} = (E \setminus S, \I \cap 2^{E \setminus S})$.
    \item \textbf{Contracting on an independent set:} 
    If one can find an independent set $S$, then by the extension property, there exists a basis containing $S$. 
This allows the algorithm to contract on $S$ and focus subsequent queries on the residual matroid $\M / S$.
\end{enumerate}

\paragraph{\cite{KUW85}'s $O(\sqrt{n})$ Round Algorithm.} 
Using these two operations, \cite{KUW85} designed a simple yet powerful algorithm: partition the ground set $E$ into $\sqrt{n}$ groups of size $\sqrt{n}$ each. 
Within each group $S=\{e_1,\dots,e_{\sqrt{n}}\}$, query the independence oracle on all prefixes $\{e_1\}, \{e_1, e_2\}, \ldots, S$. 
Two outcomes are possible:

\begin{itemize}
    \item If any group $S$ is fully independent, we can contract on it, adding at least $\sqrt{n}$ \emph{independent} elements to the basis.
    \item Otherwise, in every group, the first element that introduces dependence is redundant: 
if $\{e_1,\ldots,e_j\}$ is independent but $\{e_1,\ldots,e_{j+1}\}$ is dependent, then $e_{j+1} \in \mathrm{span}(\{e_1,\ldots,e_j\})$. 
Thus at least one element can be deleted per group, removing $\sqrt{n}$ \emph{redundant} elements in total.
\end{itemize}

Hence, in each round, the instance size decreases from $n$ to $n - \sqrt{n}$, and after $O(\sqrt{n})$ rounds, the algorithm outputs a basis. 
To complement this upper bound, \cite{KUW85} also proved an $\widetilde{\Omega}(n^{1/3})$ lower bound, leaving a large gap that persisted for next four decades.

\paragraph{\cite{KPS25}'s $\widetilde{O}(n^{7/15})$ Round Algorithm.}

The algorithm of \cite{KPS25} takes a fundamentally different approach from the $O(\sqrt{n})$-round algorithm of~\cite{KUW85}. 
Instead of insisting on immediate progress in every round, their algorithm deliberately spends several rounds gathering structural information about the matroid, performing what they refer to as a \emph{matroid decomposition}. 
Only once this decomposition is established does the algorithm proceed to delete redundant elements and contract independent sets. 
While this strategy may not yield $\widetilde{\Omega}(n^{8/15})$ progress in each round individually, it guarantees an \emph{average} progress of that order over the full execution. 
This flexibility is key: it allows the algorithm to partition $\M$ into ``well-behaved'' regions that can then be processed efficiently for contraction or deletion.

To formalize this notion of ``well-behaved'', \cite{KPS25} introduces two key parameters:

\begin{enumerate}
    \item \textbf{The $\boldsymbol{\alpha(S)}$ parameter.} 
    For any subset $S \subseteq E$, $\alpha(S)$ is defined as the smallest integer $\ell$ for which a uniformly random $\ell$-subset of $S$ is independent with probability at most $1/2$:
    \[
    \alpha(S) = \min \left\{ \ell \in [n] : \Pr_{T \sim \binom{S}{\ell}} [\mathrm{Ind}(T) = 1] \le \tfrac{1}{2} \right\}.
    \]
    Intuitively, $\alpha(S)$ captures the point at which dependence typically appears when the elements of $S$ are revealed in random order. 
If we fix $S$ and query every prefix under a random permutation $\pi$ of its elements, $\alpha(S)$ corresponds to the \emph{median} prefix length at which dependence first arises.

 \item \textbf{Marginal circuit probabilities.} 
    For each element $i \in S$, the \emph{marginal circuit probability} $p_i$ measures how likely $i$ is to appear in the \emph{first circuit} formed during a random permutation process. 
Specifically, for a random permutation $\pi$ of $S$, let $C_\pi$ be the first circuit\footnote{Formally, if $i$ is the first index such that $\{e_{\pi(1)},\ldots,e_{\pi(i-1)}\}$ is independent but $\{e_{\pi(1)},\ldots,e_{\pi(i)}\}$ is dependent, then the unique circuit contained in this latter set is denoted $C_\pi$.} that appears when elements are added in order $\pi(1),\pi(2),\dots$.
Then $p_i = \Pr_{\pi}[\,i \in C_\pi\,]$. When needed, we may write $p_{i,S}$ to emphasize that the probability is taken with respect to the set $S$.
   
\end{enumerate}

Using these parameters, \cite{KPS25} establishes the following key structural lemma.

\begin{lemma}[Informal; \cite{KPS25}]
\label{lem:informal_decomp}    
There is a decomposition algorithm for $\mathcal{M} = (E, \mathcal{I})$ that uses $\mathrm{poly}(n)$ independence queries per round and, if it terminates after $\gamma$ rounds:
    \begin{enumerate}
        \item Recovers disjoint sets $S_1, \dots, S_{\gamma}$ such that for every $j\in [\gamma]$ and $i \in S_j$, \begin{align}\label{eq:marginalprob}
            p_{i,S_j} = \widetilde{\Omega} \left ( \frac{1}{|S_j|} \right ).
        \end{align} Note here that we use $p_{i, S_j}$ to denote the marginal probability with respect to $S_j$, \emph{not} the parent matroid $\mathcal{M}$. \label{item:informal_decomp_1}
        \item For any $j,k \in [\gamma], j < k$, 
        \begin{align}\label{item:informal_decomp_2}
            \alpha(S_k) = \frac{\alpha(S_j) |S_k|}{|S_j|} + \Omega \left( \sqrt{ \frac{\alpha(S_j) |S_k|}{|S_j|} } \right).
            \end{align}
        \item For each $j \in [\gamma]$, we can recover an independent set of size $\Omega\left( \frac{\alpha(S_j)}{|S_j|}\cdot n\right)$ in $\mathcal{M}\setminus (\bigcup_{i<j} S_i)$.\label{item:informal_decomp_3}
    \end{enumerate}
\end{lemma}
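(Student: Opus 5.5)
The plan is to build the decomposition greedily, peeling off one ``well-behaved'' set per phase, and to read all three properties off the invariants maintained along the way.

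Let $R_j = E \setminus (S_1 \cup \dots \cup S_{j-1})$ denote the residual ground set at phase $j$. In each phase we first estimate $\alpha(R_j)$ in one round. For every $\ell$ on a geometric grid, we sample polynomially many uniform $\ell$-subsets and query them. Since independence of a random $\ell$-subset is monotone non-increasing in $\ell$, Chernoff bounds let us locate $\alpha$ to within a $(1\pm\eps)$ factor.

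Next we estimate marginal circuit probabilities in a single round. We draw polynomially many random permutations $\pi$ of the current candidate set $S$ and query all prefixes. This locates the first dependent prefix. We then query, for each $i$ in that prefix, whether removing $i$ from it restores independence; this identifies $C_\pi$ exactly, since $i \in C_\pi$ iff the prefix minus $i$ is independent. Averaging gives $\hat p_{i,S}$ to additive accuracy $1/\mathrm{poly}(n)$.

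The phase then iterates a filtering loop. Starting from $S = R_j$, we remove every element with $\hat p_{i,S} < c/(|S|\log n)$ and recompute. Using $\sum_i p_{i,S} = \E|C_\pi| \ge 1$ together with Markov-type reasoning, each removal pass either leaves all survivors with $p_{i,S} = \widetilde\Omega(1/|S|)$, in which case we declare $S_j = S$ (item~\ref{item:informal_decomp_1}), or it removes elements carrying a small total probability mass. The harder sub-step is showing that this loop terminates in few rounds. Removing low-marginal elements changes the distribution of $C_\pi$ for the remaining ones, so marginals can drop.

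For termination, I would argue that deleting a set $D$ of total marginal mass at most $1/\log n$ changes $\alpha$ only by a lower-order amount: with probability $1 - O(1/\log n)$ the first circuit avoids $D$, so the first-circuit process restricted to $S \setminus D$ is close in distribution. The removed elements are then carried into later phases, and they become $R_{j+1}$.

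Item~\eqref{item:informal_decomp_2} is the heart of the argument and the step I expect to be the main obstacle. We must show that the removed (low-marginal) elements, rescaled by density, have strictly larger $\alpha$: a random $\ell$-subset of $S_k$ is ``less dependent'' than a random subset of $S_j$ of proportional size, by an additive $\sqrt{\alpha}$ term. The approach I would try first is a coupling. A random subset of $S_j$ of size $\alpha(S_j)$ is a random subset of $R_j$ of size about $\alpha(S_j)|R_j|/|S_j|$ intersected with $S_j$. The first circuit lies mostly inside $S_j$, because elements outside $S_j$ have tiny marginals; hence the elements of $S_k \subseteq R_j \setminus S_j$ participate in a circuit only with probability $o(1)$ at that scale. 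Hypergeometric fluctuations of order $\sqrt{\alpha}$ in the stopping time then give the $\Omega(\sqrt{\cdot})$ gap. If the gap fails, I would instead design the phase so that $S_j$ is chosen as the set of elements that become dependent within the window $[\alpha - \sqrt\alpha, \alpha]$, forcing the gap by construction.

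Item~\ref{item:informal_decomp_3} follows from the definition of $\alpha$. Partition $R_j$ into $n/|S_j|$ random blocks, each of which looks like a scaled copy of $S_j$. In each block, a random prefix of length $\alpha(S_j)/2$ inside $S_j$ is independent with constant probability. Combining the prefixes greedily across blocks, and contracting in parallel via the prefix-query trick of KUW, yields an independent set of size $\Omega(\alpha(S_j)n/|S_j|)$.
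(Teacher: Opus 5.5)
Your proposal has a genuine gap at the filtering loop, which you flag yourself. That loop is also where the ingredient needed for items~2 and~3 should come from.

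\textbf{Rounds and the missing invariant.} Each pass recomputes $p_{i,S}$ in $\M|_S$. The first-circuit distribution of $\M|_S$ changes with $S$, and you cannot query prefixes of $\pi|_S$ for every future candidate $S$ in advance. So each pass costs a fresh adaptive round, and nothing bounds the number of passes (a pass may remove a single element). The lemma charges one round per peeled set.

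The missing idea is to stay with the fixed distribution of the first circuit $C_\pi$ of a random permutation of the whole residual set $R_j$. One round of sampling estimates the circuit mass $q_T=\Pr_\pi[C_\pi\subseteq T]$ for all $T\subseteq R_j$ at once (\cref{clm:boundedErrorEstimation}). Whenever $C_\pi\subseteq S$, it is also the first circuit of $\pi|_S$, so the drop $q_S-q_{S\setminus T}$ lower-bounds $p_{T,\M|_S}$ (\cref{clm:hittingProbability}). Hence all peeling can be done offline.

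This framework also yields the invariant that drives items~2 and~3: $q_{S_j}\ge 1-2^{-20}$, i.e., the first circuit of $R_j$ lies inside $S_j$ with high probability. It follows by telescoping. Removing $T$ from a current set of size $s$ costs at most $|T|/(2^{20}s\log n)$, and these costs form a harmonic sum of order $2^{-20}$ (\cref{clm:circuitMass}). Your per-pass bound of $1/\log n$ on removed mass does not sum to anything small over many passes. Also, what you try to show (that $\alpha$ moves only by a lower-order amount) is not the statement needed. Your elementwise threshold is salvageable on this point: a pass removing $D$ from $S$ changes the first circuit with probability at most $c|D|/(|S|\log n)$, which telescopes the same way. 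But the invariant must be stated and proved, and the round count still needs the offline trick.

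\textbf{Items 2 and 3.} Given $q_{S_j}\ge 1-2^{-20}$, both items are short, but your arguments for them do not work as written.

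For item~3, take a single random prefix of $R_j$ of length $\ell=\alpha(S_j)|R_j|/(10|S_j|)$. Its $S_j$-part has about $\alpha(S_j)/10$ elements and is independent with probability above $1/2$. The prefix can be dependent while its $S_j$-part is independent only if $C_\pi\not\subseteq S_j$, which has probability at most $2^{-20}$. So the prefix is independent with probability at least $1/4$ (\cref{clm:independentRecovery}). Your block construction fails for three reasons:
\begin{itemize}
\item random blocks of $R_j$ are not copies of $S_j$, and $\alpha(S_j)$ says nothing about $R_j\setminus S_j$ without the invariant;
\item independent prefixes from different blocks need not have an independent union;
\item combining them greedily is sequential.
\end{itemize}

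For item~2, argue by contradiction from the same invariant. Suppose $\alpha(S_k)\le \frac{\alpha(S_j)|S_k|}{|S_j|}+c\sqrt{\alpha(S_j)|S_k|/|S_j|}$ for a small constant $c$. Pick a prefix length at which the hypergeometric count of $S_j$-elements stays below $\alpha(S_j)$ with constant probability. By anti-concentration at the scale of its standard deviation, together with negative correlation, the count of $S_k$-elements then reaches $\alpha(S_k)$ with constant probability. So with constant probability the $S_k$-part is dependent while the $S_j$-part is independent, giving $C_\pi\not\subseteq S_j$ and contradicting $q_{S_j}\ge 1-2^{-20}$ (compare the argument in \cref{sec:lem-proof}). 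Your heuristic points this way, but it rests on marginals rather than on this event. Redefining $S_j$ via a dependence window is not an argument.
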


This decomposition yields several key consequences. 
First, the recurrence in~\Cref{item:informal_decomp_2} implies that the number of parts $\gamma$ is at most~$O(n^{1/3})$. 
Second, if for any $S_j$ the ratio $\alpha(S_j)/|S_j|$ is large, item (3) guarantees that the residual matroid contains a large independent set that can be efficiently recovered, allowing rapid progress via contraction.

The complementary case, when $\alpha(S_j)/|S_j| \ll 1$, indicates the presence of many redundant elements in $S_j$. 
\cite{KPS25} shows that these can be identified and deleted in parallel via the following progress lemma.

\begin{lemma}[Informal; \cite{KPS25}] \label{lem:informal_progress}
    Let $S_j$ be a set peeled off in the above decomposition, such that for all $i \in S_j$, $p_{i,S_j} = \widetilde{\Omega} \left( \frac{1}{|S_j|} \right )$, and $\alpha(S_j) \leq |S_j| / \log^2 |S_j|$. Then:
    \begin{enumerate}
        \item There is an $O(\sqrt{|S_j|})$-round algorithm that recovers $\widetilde{\Omega} ( |S_j| )$ redundant elements.
        \label{item:informal_progress_1}
        \item There is also a $1$-round algorithm that recovers
        \[
            \widetilde{\Omega} \left( \min \left( |S_j|, \frac{|S_j|^2}{\alpha(S_j)^2} \right) \right )
        \]
        redundant elements.
        \label{item:informal_progress_2}
    \end{enumerate}
\end{lemma}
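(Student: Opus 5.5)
We prove the two items of \cref{lem:informal_progress} separately. Both rest on one observation. If we draw a random permutation $\pi$ of $S_j$ and take the first circuit $C_\pi$, every element of $C_\pi$ lies in the span of the other elements of $C_\pi$. So any single element of $C_\pi$ is redundant in $S_j$, and hence in $\M$, provided the rest of the circuit is not deleted at the same time.

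Throughout, write $s = |S_j|$ and $\alpha = \alpha(S_j) \le s/\log^2 s$.

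\textbf{Step 1: circuits are short.} Since $C_\pi$ is contained in the first dependent prefix, whose length is at most about $\alpha$ with constant probability, we get $|C_\pi| \le \alpha$ with constant probability. We also have $\sum_i p_{i,S_j} = \E|C_\pi| \le \text{(prefix length)}$.

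\textbf{Step 2: every element is hit often.} Because $p_{i,S_j} = \widetilde\Omega(1/s)$ for every $i$, each fixed element lies in the first circuit of a random permutation with probability $\widetilde\Omega(1/s)$. Hence $\widetilde O(s)$ independent permutations cover every element of $S_j$ by some observed circuit, with high probability.

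\textbf{Item 2 (one round).} We sample $\mathrm{poly}(n)$ random permutations in parallel and query all prefixes of each. For each permutation we locate the first dependent prefix. We then identify the circuit $C_\pi$ by querying, in the same round, the prefix with each single element removed: the circuit is exactly the set of removed elements that make the prefix independent.

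This gives many circuits, each of size at most about $\alpha$, and by Step 2 together they cover essentially all of $S_j$. The remaining task is to choose a set $D$ of elements to delete such that each $d \in D$ has a witnessing circuit $C$ with $C \setminus \{d\}$ disjoint from $D$. Then $d \in \mathrm{span}(C \setminus \{d\}) \subseteq \mathrm{span}(S_j \setminus D)$, so $D$ is redundant.

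To build $D$, we mark each element independently with probability $q$ and keep a marked $d$ if some witness circuit of $d$ contains no other marked element. A witness of size at most $\alpha$ survives with probability about $(1-q)^{\alpha}$. Taking $q \approx 1/\alpha$ would give $\widetilde\Omega(s/\alpha)$ deletions. To reach the stated $s^2/\alpha^2$ we exploit that each element has many distinct witness circuits, about $s^{O(1)}$ of them. So we only need one of these circuits to avoid $D$, and we can take $q$ larger, up to about $s/\alpha^2$ (capped at $1$).

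The main obstacle is the correlation among the witness circuits of a fixed element. I would first try to control it with a second-moment computation, using the bound $\sum_i p_i \lesssim \alpha$ on how much circuits can concentrate. If that fails, I would fall back to a birthday-type argument: random circuits of size $\alpha$ in a universe of size $s$ rarely collide on marked elements when $q\alpha \cdot \alpha/s \lesssim 1$, which again gives $q \approx s/\alpha^2$.

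\textbf{Item 1 ($O(\sqrt s)$ rounds).} We run the KUW scheme restricted to the witnesses. Partition the discovered circuits into $\sqrt{s}$ batches. In each round, delete a greedily chosen set of elements, one per disjoint circuit, and recompute the circuits in the remaining set. Because $p_{i,S_j}$ is $\widetilde\Omega(1/s)$ uniformly, deleting $o(s)$ elements only mildly perturbs the marginals. So each round yields $\widetilde\Omega(\sqrt s)$ new redundant elements, and $\sqrt s$ rounds accumulate $\widetilde\Omega(s)$. The hypothesis $\alpha \le s/\log^2 s$ guarantees that the circuits are short enough for the disjointness and greedy steps to succeed.

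The step I expect to be the genuine difficulty is maintaining the marginal lower bound after deletions. I would handle it by rerunning the uniformity check on the residual set every round and discarding elements whose marginal has dropped.
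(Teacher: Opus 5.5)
Your one-round circuit extraction (querying the first dependent prefix with each element removed) and your redundancy criterion for $D$ are fine. The quantitative step of item~2, however, fails. A witness circuit of size $\Theta(\alpha)$ contains about $q\alpha$ marked elements, not $q\alpha^2/s$. At $q=s/\alpha^2$ that is $s/\alpha\ge\log^2 s$, so each witness avoids the marked set with probability only $e^{-\Omega(s/\alpha)}$, and $\mathrm{poly}(n)$ witnesses per element help only while $q\alpha=O(\log n)$.

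This is a limitation of the information you collect, not just of the analysis. Take $S_j=U_{\alpha-1,s}$ with $\alpha=s^{3/4}$; every $p_x=\alpha/s$, so the hypotheses hold. Your sampled first circuits are independent uniform $\alpha$-sets. A union bound over ordered candidate sets $D$ and over assignments of sampled circuits shows that, with high probability, no $D$ of size $\gg s\log n/\alpha$ has every $d\in D$ witnessed by a sampled circuit avoiding the other (or even only the later) elements of $D$. So your scheme is capped at $\widetilde O(s^{1/4})$ there, while item~2 promises $s^{1/2}$.

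The missing idea, used in the paper's proof of \cref{clm:basicRecoverRedundant} (\cref{sec:appendixS2alpha2}), is to \emph{share} witnesses. Draw $\ell=s/(4t)$ random $t$-prefixes $A_i$ with $t=\Theta(\alpha\log n)$, and keep $K=\bigcup_i A_i$; it has at most $s/4$ elements, so each $x$ lies outside it with probability at least $1/2$. In the same round, query $\mathrm{Ind}(P)$ and $\mathrm{Ind}(P\cup\{x\})$ for every prefix $P$ of every $A_i$ and every $x$. Everything caught outside $K$ lies in $\mathrm{span}(K)$, so no conflict resolution is needed, and one prefix can certify many elements at once (in $U_{\alpha-1,s}$ a single prefix spans everything). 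The count then comes from $\Pr[x\in B_i\mid x\notin A_i]=p_{x,1}\ge\frac{s}{2t}p_x$, together with independence of the $\ell$ prefixes. The inequality uses that $p_{x,r}$ is nonincreasing in the position $r$ and that the first circuit lies in the first $t$ positions with high probability.

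The paper does not reprove item~1 (it is imported from \cite{KPS25}), but your sketch does not establish it either. First, ``one element per disjoint circuit'' gives at most $s/(\text{circuit length})$ deletions per round: $s^{1/4}$ per round, hence $s^{3/4}$ over $\sqrt{s}$ rounds, in the example above. Second, marginals are not stable under deletion: removing a single $y$ can destroy every short circuit through $x$, and discarding elements whose marginal dropped forfeits the count.

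What is really missing is a proof that $S_j$ has $\widetilde\Omega(s)$ redundancy at all, i.e.\ $\mathrm{rank}(S_j)\le s-\widetilde\Omega(s)$. Given that bound, you can simply run KUW on $\M|_{S_j}$ for $O(\sqrt{s})$ rounds and output $S_j\setminus B$ for the basis $B$ it finds. One way to get the rank bound from the hypotheses:
\begin{itemize}
\item Random-order greedy rejects exactly $s-\mathrm{rank}(S_j)$ elements, and $x$ is rejected iff it is spanned by the elements preceding it.
\item A uniform $t$-subset of $S_j-x$, with $t=\Theta(\alpha\log s)$, spans $x$ with probability at least about $p_{x,1}\ge\frac{s}{2t}p_x\ge 1/\mathrm{poly}(s)$.
\item The up-set $\mathcal{U}=\{A: x\in\mathrm{span}(A)\}$ satisfies $\mu_{ab}(\mathcal{U})\le\mu_a(\mathcal{U})\mu_b(\mathcal{U})$ under $p$-biased measures. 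This boosts the previous bound to constant probability at density $1-1/L$ with $L=O(\log s/\log(s/t))$.
\item Since $\alpha\le s/\log^2 s$ gives $s/t=\Omega(\log s)$, each $x$ is rejected with probability $\Omega(\log\log s/\log s)$.
\end{itemize}
This gives rank deficiency $\widetilde\Omega(s)$.
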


Together, these lemmas yield a ``win–win'' structure: when $\alpha(S_j)$ is relatively large, a sizable independent set can be contracted; when $\alpha(S_j)$ is relatively small, many redundant elements can be deleted. 
By grouping the $S_j$’s by size (into $\log n$ geometric buckets), and focusing on the group with the most sets, \cite{KPS25} shows that there always exists a group where one can achieve $\widetilde{\Omega}(n^{8/15})$ average progress per round (where progress refers to both independent elements that are contracted and redundant elements that are deleted), leading to their $\widetilde{O}(n^{7/15})$-round bound for matroid basis finding.

\subsection{Our New Algorithm}\label{sec:technicaloverview}

We introduce three conceptual advances that change how progress is measured and managed in the matroid-basis problem, leading to a sharper framework and improved round complexity over~\cite{KPS25}. We summarize them here before turning to their technical details.

\begin{enumerate}
    \item \textbf{Subset-hitting decomposition.} 
    We strengthen the decomposition guarantee from element-wise control to \emph{subset-level} control: instead of ensuring $p_{i,S} \gtrsim 1/|S|$ for each element~$i$, we require that \emph{for every subset} $T \subseteq S$, the first circuit intersects $T$ with probability $\widetilde{\Omega}(|T|/|S|)$. 
 This subset-hitting property rules out highly correlated pathological behavior, yielding a more uniform distribution of first circuit mass across elements in a peeled set. Surprisingly, a decomposition with this stronger (global) guarantee is still computable within the \emph{same} round complexity as in~\cite{KPS25} using only polynomially many queries, as before.

    \item   \textbf{Short-circuit witnesses for bulk deletion.} 
    Leveraging subset-hitting, we design a one-round deletion primitive that exploits \emph{circuit sizes}. 
    If there is a set $R \subseteq S$ such that for every $x \in S\!\setminus\!R$ we can certify a circuit $C_x$ with $x \in C_x$ and $|C_x \cap (S\!\setminus\!R)| \le \ell$, then we can delete $\Omega(|S\!\setminus\!R|/\ell)$ redundant elements in one round. By carefully instantiating the set $R$, we show a new win-win paradigm for deleting redundant elements: namely we can either (a) delete lots of elements by finding short circuits that they participate in, or (b) \emph{re-use} a redundant element finding algorithm from \cite{KPS25} but now with \emph{much better guarantees}!
    
  This win–win paradigm relies crucially on subset-hitting; without it (e.g., under the KPS element-wise guarantee), the improved bounds do not materialize.

  \item \textbf{Adaptive early-stopping via evolving $\boldsymbol{\alpha}$-profiles.} 
    Lastly, we introduce a much improved strategy for deciding \emph{when} to contract on independent sets versus when to delete redundant elements. This is based on a new quantitative analysis of how the $\alpha$-parameters of the peeled pieces evolve across the decomposition. 
    This yields an \emph{early-stopping} rule that avoids investing rounds when the average progress target is already met, improving amortized progress.

\end{enumerate}

These new ingredients can be applied in a \emph{modular} yet \emph{complementary} way. 
Using only (1) and (2) with the original~\cite{KPS25} global analysis yields an $\widetilde{O}(n^{0.44})$-round algorithm. 
Using only (3), the adaptive early-stopping analysis along with~\cite{KPS25} primitives, gives an $\widetilde{O}(n^{4/9})$ bound with a substantially simpler proof (see \cref{sec:appendix49}). 
When combined, the three components reinforce one another, though the resulting analysis becomes significantly more delicate. It is this combination that ultimately enables our $\widetilde{O}(n^{3/7})$-round algorithm. 
We now describe each ingredient and how they interact in more detail.

\subsubsection{Intuition for Improved Redundant Element Recovery}

We now elaborate on our new algorithm for identifying redundant elements (the second advance described above). The key starting point is that the structure of \emph{circuits}, that is, the minimal dependent sets, can be exploited much more deeply than before.

Suppose we are trying to find a basis of a matroid $\M = (E, \I)$, and that we have peeled off a set $S_j$ according to \cref{lem:informal_decomp}. For clarity, assume that the marginal circuit probabilities satisfy $p_i = \widetilde{\Omega}(1/|S_j|)$, as in the weaker decomposition of \cref{lem:informal_decomp}. Intuitively, this means that under a random permutation of the elements, the first circuit encountered contains each $i \in S_j$ with probability $\widetilde{\Omega}(1/|S_j|)$—that is, a nontrivial fraction of these random circuits involve~$i$.

Now suppose that for every element $i \in S_j$, we can find a circuit $C_i$ containing $i$ of size at most~$\ell$. In this ideal case, a simple greedy argument shows that at least $|S_j|/\ell$ elements are redundant. Indeed, we may process the elements of $S_j$ in an arbitrary order: for each $i$, keep all other elements of $C_i \setminus \{i\}$ (at most $\ell-1$ of them), since $i \in \mathrm{span}(C_i \setminus \{i\})$. We delete $i$, commit to keeping these witnesses, and repeat. Each deletion costs at most $(\ell-1)$ commitments, so we can delete $\Omega(|S_j|/\ell)$ elements in total.

Of course, not every element will participate in such a small circuit. This motivates examining the \emph{distribution} of circuit sizes. Within our experiment of randomly sampling permutations and identifying the first emerging circuit, our first observation is that the \emph{expected} circuit size equals the sum of the marginal circuit probabilities:
\[
\sum_{i \in S_j} p_i = \mathbb{E}_{\pi}[|C_\pi|],
\]
since $p_i = \Pr[i \in C_{\pi}] = \E_{\pi}[\mathbf{1}[i \in C_{\pi}]]$, and summing over all elements gives the expected size of $C_\pi$.

This identity suggests a natural dichotomy for progress:
if $\sum_i p_i$ is large, many elements appear frequently in the first circuit, and existing methods (such as those in~\cite{KPS25}) can delete a large number of redundant elements.
Conversely, if the average circuit size is small, one might hope that many elements lie in short circuits, opening the door to fast progress through our short-circuit deletion scheme.

However, this intuition requires care: a small \emph{average} circuit size does not guarantee that \emph{most} elements actually belong to small circuits.
A small subset of elements might have large $p_i$ values and appear disproportionately often in small circuits, while the majority of elements participate only in rare but very large circuits. 
In such cases, the algorithm could make little progress on most elements, even when the average circuit size is small.

To illustrate, consider a set $S_j$ consisting of two disjoint uniform matroids, $P_1$ and $P_2$.
Let $P_1$ contain $\sqrt{n}$ elements with rank $\sqrt{n}/2$ (meaning any subset of $\leq \sqrt{n}/2$ elements is independent), and $P_2$ contain $n - \sqrt{n}$ elements with rank $(n - \sqrt{n})/2 + \eta$, for a parameter $\eta$.
Under a random permutation of all elements, when $\eta=0$ the first circuit $C_\pi$ is equally likely to arise in either $P_1$ or $P_2$. 
As $\eta$ increases, circuits become increasingly unlikely to form in $P_2$ first. 
Choosing $\eta \approx \sqrt{n\log n}$ ensures that circuits appear in $P_2$ only a $1/n$ fraction of the time. 
Then:

\begin{enumerate}
    \item The expected circuit size is $\approx \sqrt{n}/2$, since with probability $1 - 1/n$, the first dependence occurs in $P_1$.
    \item Almost all elements are in $P_2$, and they participate only in circuits of size $\ge (n - \sqrt{n})/2$.
\end{enumerate}

Thus, even though the \emph{expected circuit size} is small, there is no way to make better progress than \cite{KPS25}. The main reason is that many elements with small marginal circuit probabilities often co-occur in the same circuits, creating \emph{correlated low-probability clusters}. To overcome this, our new decomposition explicitly prevents such correlations from emerging, ensuring that elements with small marginal probabilities do not systematically appear together.

\subsubsection{An Improved Decomposition}\label{sec:hittingset}

The decomposition guarantee achieved in~\cite{KPS25} (see Lemma~\ref{lem:informal_decomp}) only ensures that for each individual element $i\in S_j$, we have that $p_{i,S_j}  = \Pr_\pi[i\in C_\pi] = \widetilde\Omega\bigl(1/|S_j|\bigr)$.
While this prevents any single element from being too rare, it still admits highly correlated configurations in which many low‐marginal elements systematically co‐occur in the same circuits (as in the preceding example).

Formally, the issue is that although each element may satisfy $p_{i,S_j}\gtrsim 1/n$, there may exist a \emph{large subset} $T$ (e.g., $P_2$ in the example above) with $\Pr_\pi[C_\pi\cap T\neq\emptyset]\approx 1/n$. This occurs because elements in $T$ are highly correlated: either none appear in the first circuit, or $\Omega(|T|)$ of them do.

Our first step is therefore to rule out such set‐level correlations, not just rare elements.
To eliminate this obstacle, we strengthen the decomposition to enforce the following \emph{subset‐hitting property}:
\[
  \forall\,T\subseteq S_j,\quad
  p_{T,S_j}
  = \Pr_\pi[T\cap C_\pi\neq\emptyset]
  = \widetilde\Omega\bigl(|T|/|S_j|\bigr).
\]
That is, \emph{every} subset $T$ of elements is hit by the first circuit with probability
(up to polylogarithmic factors) proportional to its size.
In the example above, $P_2$ would violate this condition, since the property demands a $\widetilde{\Omega}(1)$ intersection probability rather than the $\Theta(1/n)$ that arises from correlation.

Although this requirement is strictly stronger than element-wise marginals (it constrains exponentially many subsets), we show that such parts \emph{do} exist and can be computed within the same round and query bounds as in~\cite{KPS25}. Moreover, our decomposition continues to satisfy all guarantees of Lemma~\ref{lem:informal_decomp} while adding the new subset‐hitting property.
Operationally, this property prevents mass from concentrating on a few large, highly correlated circuits and ensures that circuit mass is distributed in a stable manner across each peeled part, allowing for more powerful routines for deleting redundant elements.

\subsubsection{Core vs. Non-core Elements}\label{sec:introcore}

With this strengthened decomposition theorem established, it still remains to show how we can use the decomposition to explicitly make progress by deleting redundant elements.
The next step is to stratify elements by their marginal probabilities. Intuitively, elements with small marginals are conducive for finding short‐circuit witnesses (thus enabling many deletions), whereas elements with large marginals can be handled by a deletion procedure from \cite{KPS25}.

To this end, we partition the elements of each peeled set $S_j$ into two groups:
\begin{itemize}
  \item[(a)] {\em Core elements:} elements \(i\in S_j\) with $p_{i,S_j}>\frac{\alpha(S_j)^2}{\lvert S_j\rvert^2}$,
  so that \(p_{i,S_j}\cdot(|S_j|^2/\alpha(S_j)^2)>1\), and
  \item[(b)] {\em Non-core elements:} the remaining elements, each with $p_{i,S_j}\le\frac{\alpha(S_j)^2}{\lvert S_j\rvert^2}$.
\end{itemize}

We will denote by $\mathrm{CORE}$, the set of core elements. The choice of the threshold $\frac{\alpha(S_j)^2}{\lvert S_j\rvert^2}$ in defining core vs. non-core elements is deliberate: a lemma (implicitly) established in~\cite{KPS25} shows that one can delete 

\begin{align}\label{eq:progressKPS25deletion}
    \widetilde\Omega\Bigl(\sum_{i\in S_j}\min\{1,\,p_{i,S_j}\cdot(|S_j|^2/\alpha(S_j)^2)\}\Bigr)
\end{align}
redundant elements using a single round of queries.

Restricting \eqref{eq:progressKPS25deletion} to the core immediately yields
$\widetilde\Omega\bigl(\lvert\mathrm{CORE}\rvert\bigr)$ deletions in one round (the second term in the min is $1$ by definition of core). Hence if $\mathrm{CORE}$ is a constant fraction of $S_j$, we delete $\widetilde\Omega(|S_j|)$ elements in one round.

We therefore focus on the complementary case where \(\lvert\mathrm{CORE}\rvert\) is small.
In this event, let $\ell \;=\;\sum_{\,i\in S_j\setminus\mathrm{CORE}}p_{i,S_j}$ denote the sum of all the marginal probabilities of the non-core elements.
Applying \cref{eq:progressKPS25deletion} to these non-core elements implies we can delete
$\widetilde\Omega\Bigl(\tfrac{|S_j|^2}{\alpha(S_j)^2}\,\ell\Bigr)$
elements in a single round.  Thus if $\ell$ is large, we once again ensure many deletions. 

The remaining case is when $\ell$ is small. In this regime we \emph{prove} (using the subset-hitting property) that most non-core elements admit short-circuit witnesses of size $\widetilde{O}(\ell)$ (or more formally, admit circuits who only have a small number of \emph{non-core} elements), which allows us to delete $\widetilde{\Omega}(\lvert S_j\rvert/\ell)$ redundant elements in one round via a modification of the aforementioned short-circuit deletion procedure. We establish this formally below.

\paragraph{Finding redundant elements when non-core circuit mass is small.}

As before, we let $\ell$ denote the total marginal probability mass of the non-core elements: $\ell =\sum_{\,i\in S_j\setminus\mathrm{CORE}}p_{i,S_j}$.
Now, observe that this quantity equals the expected number of non‐core elements in the first circuit:
\begin{align}\label{eq:nonCoreExpectedSize}
\ell =\sum_{\,i\in S_j\setminus\mathrm{CORE}}p_{i,S_j} = \E_{\pi}[|(S_j\setminus\mathrm{CORE}) \cap C_{\pi}|].
\end{align}
This implies that, in expectation, the number of elements in $S_j\setminus\mathrm{CORE}$ in each circuit that appears is bounded by $\ell$.
The question is whether this expectation can be algorithmically leveraged to certify many short circuits for non‐core elements.

It turns out that the subset‐hitting guarantee of our new decomposition yields an affirmative answer.
Informally, sample many (a large polynomial number) of random permutations $\pi$ over $S_j$, and find the first circuits $C_{\pi}$ that form.
Suppose there is a large set $T \subseteq S_j \setminus \mathrm{CORE}$ (say, $|T| = \Omega(|S_j|)$) such that for every $x \in T$ we \emph{fail} to find a short circuit containing $x$,
namely, no circuit $C_x$ with $x\in C_x$ and $|C_x\cap(S_j\setminus\mathrm{CORE})|\le \ell\log n$.
Then every time $C_{\pi}$ intersects $T$ we must have $|C_{\pi} \cap (S_j\setminus\mathrm{CORE})| \geq \ell \log n$.
By the subset‐hitting property, $\Pr_\pi[C_\pi\cap T\neq\emptyset]\gtrsim |T|/|S_j|=\Omega(1)$.
Hence
\[
\E_{\pi}\bigl[\;|C_{\pi} \cap (S_j\setminus\mathrm{CORE})|\;\bigr] \;\ge\; \Omega(1)\cdot \ell\log n \;>\; \ell,
\]
contradicting \eqref{eq:nonCoreExpectedSize}.
Therefore $|T|=o(|S_j|)$: for \emph{most} non‐core elements we do find short circuits, enabling one‐round deletion of $\widetilde\Omega(|S_j|/\ell)$ elements.

To summarize, if the set of core elements forms only a small fraction of $S_j$, the analysis above yields a one-round win–win: either (a) the~\cite{KPS25} deletion bound gives $\widetilde\Omega\left(\tfrac{|S_j|^2}{\alpha(S_j)^2} \cdot \ell\right)$ deletions, or (b) short-circuit witnesses allow $\widetilde\Omega(|S_j|/\ell)$ deletions.
Balancing these outcomes gives an unconditional bound of $\widetilde\Omega\left(\tfrac{|S_j|^{3/2}}{\alpha(S_j)}\right)$ deletions, strictly improving upon~\cite{KPS25}’s $\widetilde\Omega\left(\tfrac{|S_j|^2}{\alpha(S_j)^2}\right)$.
We next show how this stronger local progress integrates into the global round analysis.

\subsubsection{Global Analysis by Leveraging Evolving $\alpha$-Values}

In the preceding discussion, we introduced a new subroutine for making progress on each set $S_j$ produced by our decomposition.
To summarize, we now have two primary ways of making progress:
\begin{itemize}
    \item Contraction: contracting an independent set of size $\Omega\left(\frac{\alpha(S_j)}{|S_j|}\cdot n\right)$ (\cref{lem:informal_decomp}, \cref{item:informal_decomp_3}).
    \item Deletion: deleting $\tilde \Omega\left(\frac{|S_j|^{3/2}}{\alpha(S_j)}\right)$ redundant elements, improving on the $\tilde \Omega\left(\frac{|S_j|^2}{\alpha(S_j)^2}\right)$ bound from \cite{KPS25} (\cref{lem:informal_progress}, \cref{item:informal_progress_2}). (For clarity we omit the $\min\{\cdot,|S_j|\}$ truncation since it does not affect asymptotics here.)
\end{itemize}

The key parameter that controls the efficiency of these routines is $\alpha(S_j)$, which governs both the likelihood of quickly finding a large independent set or deleting in parallel a large number of redundant elements.
To achieve an overall round complexity of $\widetilde{O}(n^{3/7})$, we need to ensure that on {\em average}, each round leads to contraction or deletion of $\widetilde{\Omega}(n^{4/7})$ elements.
We set $f:=n^{4/7}$ as this target average progress per round. While some rounds may fall short (depending on $\alpha(S_j)$), we prove that a prolonged small‐progress regime is impossible; the process must transition to a regime that achieves amortized progress at least~$f$ when averaged over all sets peeled off in the decomposition.

\paragraph{An adaptive strategy.}
In \cite{KPS25}, the matroid decomposition algorithm is always executed until the total size of the peeled off parts reaches at least $n/2$ elements, with each peeling step costing one round.
Our adaptive strategy is to terminate \emph{as soon as} the average progress meets $f$.
In particular, when the algorithm peels off the $j$th part $S_j$, it can terminate immediately if:
\[
\sum_{j'\leq j} \frac{|S_{j'}|^{3/2}}{\alpha(S_{j'})}\geq j\cdot f \text{ (deletion)} \quad \text{or} \quad \frac{\alpha(S_j)}{|S_j|}\cdot n\geq j\cdot f  \text{ (contraction)},
\]
At that point, the progress guarantees imply that the average progress per round is already $\tilde \Omega(f)$.

Intuitively, small $\alpha(S_j)$ favors deletion; large $\alpha(S_j)$ favors contraction. Because $\alpha$ increases along the decomposition (for comparable part sizes), there is a point at which contraction dominates. Continuing to peel beyond this point yields diminishing average progress (the independent set size grows only by a constant factor, while rounds accrue), so we stop as soon as we achieve average progress~$f$.

\paragraph{Improved global analysis.}
The potentially problematic scenario is when the decomposition produces a sequence of peeled sets $S_1, \dots ,S_{\gamma}$ where the average progress per round is always much smaller than $f$.
To rule this out, we choose $f$ as large as possible subject to the condition that no such sequence exists.
Formally, we must forbid sets $S_1,\dots,S_\gamma$ satisfying:
\begin{enumerate}[label=(\alph*)]
    \item $\sum_{j=1}^\gamma |S_j|\leq n$.
    \label{item:prop_1}
    \item $\alpha(S_j)$'s satisfy the recurrence given in \cref{lem:informal_decomp}, \cref{item:informal_decomp_2}.
    \label{item:prop_2}
    \item For every $j\in [\gamma]$, $\sum_{j'\leq j} \frac{|S_{j'}|^{3/2}}{\alpha(S_{j'})}< j\cdot f $ and $\frac{|S_j|}{\alpha(S_j)}\cdot n< j\cdot f$.
    \label{item:prop_3}
\end{enumerate}

The benefit of our adaptive strategy is that the constraints in \cref{item:prop_3} must hold for all prefixes $j\in[\gamma]$, since the decomposition algorithm will terminate as soon as the average progress reaches $f$, not just for the final index $j=\gamma$ as in \cite{KPS25}.
This prefix condition is \emph{strictly stronger} than the terminal condition in~\cite{KPS25}, and permits a larger feasible~$f$.

A surprising consequence is that even without the new decomposition theorem and $\widetilde \Omega\left(\frac{|S_j|^{3/2}}{\alpha(S_j)}\right)$ deletion subroutine, using only the existing $\widetilde \Omega \left( \frac{|S_j|^2}{\alpha(S_j)^2} \right)$ deletion algorithm from \cite{KPS25} (\cref{lem:informal_progress}, \cref{item:informal_progress_2}), the adaptive strategy alone improves the round complexity to $\tilde O(n^{4/9})$.
We give a simple self-contained proof of this fact in \cref{sec:appendix49}, avoiding the intricate case analysis of~\cite{KPS25}.

\paragraph{Incorporating the stronger deletion subroutine.}
Once the new $\widetilde \Omega\left(\tfrac{|S_j|^{3/2}}{\alpha(S_j)}\right)$ deletion routine is used, the analysis becomes substantially more delicate.
In obtaining the $\tilde O(n^{4/9})$ round result, homogeneity in $|S_j|$ and $\alpha(S_j)$ allows reduction to the single ratio $\alpha(S_j)/|S_j|$. The bound $\tfrac{|S_j|^{3/2}}{\alpha(S_j)}$ breaks this homogeneity, so we must track both parameters explicitly, leading to a more complex analysis.

To analyze it, we group the peeled sets $S_1, \dots ,S_{\gamma}$ by size into $\log n$ categories where each category contains sets whose sizes differ by a factor of at most $2$.
We then focus on the \emph{dominant category}, i.e., the size range currently containing the most such sets.
We keep track of when the dominant category changes, and what the new dominant category becomes.
Let $a_j$ denote the number of rounds spent between the $(j-1)$st and $j$th change of dominant category, and let $b_j$ denote the representative size of the sets in the dominant category during this interval.
Using \cref{item:prop_2} and \cref{item:prop_3}, we are able to derive a tight recursive relationship on how these parameters evolve.
In particular, we show in \cref{sec:analysisProg} that for any $j\geq 1$,
\[
b_j = \frac{f^{7 \cdot 2^j - 6}}{n^{4 \cdot 2^j - 4}} \cdot \log^{O(2^j)}(n) , \quad \quad  a_j = \widetilde{O}\left ( \frac{f \cdot b_j}{n}\right ).
\]
Importantly, if we select $f$ to be slightly smaller than $n^{4/7}$, then in the above recursive formulation, these bounds on $b_j$, and hence also $a_j$, will be strongly decaying.
This immediately implies that our decomposition must terminate in a small number of rounds, which is only possible if either (1) the average progress exceeds $f$, or (2) we have peeled off $\Omega(n)$ elements in our decomposition.
Clearly, if case (1) occurs, we have succeeded, and so the only remaining case is (2).
The final piece of our argument shows that, after recovering $\Omega(n)$ elements in our decomposition, there is always a way to recover $\widetilde{\Omega}(n^{3/4})$ redundant or independent elements. Dividing this by our bound on the number of rounds invested so far (using the recurrence relation above) then guarantees our average progress is $\tilde \Omega(f)$.

A more detailed analysis and discussion of this argument appears in \cref{sec:analysisProg}.

\paragraph{Organization.}
The remainder of the paper is organized as follows.
\Cref{sec:newDecomp} introduces our new decomposition algorithm, which guarantees the stronger ``subset‐hitting'' property described above. 
\Cref{sec:circuitProbs} formalizes the tradeoff between circuit sizes and marginal probabilities, yielding our improved subroutine that recovers $\widetilde{\Omega}\!\left(\tfrac{|S_j|^{3/2}}{\alpha(S_j)}\right)$ redundant elements from each part $S_j$ in a single round. 
Finally, \Cref{sec:analysisProg} leverages this refined subroutine to establish the global $\widetilde{O}(n^{3/7})$ round bound through a careful amortized analysis. 
For completeness, several proofs and generalizations of arguments from~\cite{KPS25} are included in the appendix.

\section{Preliminaries}
\begin{definition}[Matroids]
A \emph{matroid} $\M=(E,\I)$ is a pair where $E$ is a finite ground set and $\I \subseteq 2^{E}$ is a collection of independent sets with the following properties: (i) $\emptyset \in \I$ (non-triviality), (ii) for every $S\in \I$ and $S'\subset S$, $S'\in \I$ (downward-closedness), and (iii) for every $S,S'\in \I$ and $|S'|<|S|$, there exists some $x\in S\setminus S'$ such that $S+x\in \I$ (exchange property).
\end{definition}

\begin{definition}[Independent Sets, Circuits, Bases]
For a matroid $\M=(E,\I)$, we say a set $S\subseteq E$ is \emph{independent} if $S\in \I$ and \emph{dependent} otherwise. We call a set $B$ a \emph{basis} if it is a maximal independent set, i.e. for any $x\notin B$, $B+x\notin \I$. We call a set $C$ a \emph{circuit} if it is a minimal dependent set, i.e. for any $x\in C$, $C-x\in \I$.
\end{definition}

\begin{definition}[Rank]
For a matroid $\M=(E,\I)$, we define the \emph{rank} of $\M$ as $\rank(\M) = \max_{S\in \I} |S|$. Further, for any $S\subseteq E$, we define $\rank_{\M}(S) = \max_{T\subseteq S,T\in \I} |T|$. The rank function of a matroid is submodular.
\end{definition}

\begin{definition}[Span]
In a matroid $\M=(E,\I)$, we define $\text{span}(S)$ as
\[
\text{span}(S) = \{x\in E \mid \text{rank}(S\cup \{x\})=\text{rank}(S)\}.
\]
\end{definition}

\begin{definition}[Restriction, Contraction]
Let $\M=(E,\I)$ be a matroid and $S\subseteq E$. We write $\M|_S$ for the restriction of $\M$ to $S$, and we use $M - S$ to mean the restriction of $M$ to $E \setminus S$, and $\M/S$ for the contraction of $\M$ by $S$, whose rank function is $\rank_{\M/S}(T)=\rank_{\M}(S\cup T)-\rank_\M (T)$ for any $T\subseteq E\setminus S$.
\end{definition}

\begin{definition}[Permutation-Induced Circuit]
    Let $\M=(E,\I)$ be a matroid, and let $\pi$ be an arbitrary permutation over $E$. We let $C_{\pi}$ denote the unique first circuit which appears when adding elements in the order of $\pi$. I.e., if $\{e_{\pi(1)}, \dots e_{\pi(j)}\} \in \I$, but $\{e_{\pi(1)}, \dots e_{\pi(j)}, e_{\pi(j+1)}\} \notin \I$, we let $C_{\pi}$ denote the unique circuit in $\{e_{\pi(1)}, \dots e_{\pi(j)}, e_{\pi(j+1)}\}$.
\end{definition}

\section{A New Decomposition Algorithm}\label{sec:newDecomp}

In this section, we introduce the key notion of globally optimal sets and present our new decomposition and its implications. \Cref{subsec:globally_optimal} introduces globally optimal sets and how they can be efficiently identified using the independence oracle, and in \Cref{subsec:iterative_peeling}, we present our iterative decomposition process and highlight properties of the evolving $\alpha()$ values during the decomposition.

\subsection{Globally-Optimal Sets}
\label{subsec:globally_optimal}

To start, we present a modification of the decomposition procedure presented in \cite{KPS25}. Whereas this previous work constructed \emph{greedily-optimal} sets, here we instead introduce \emph{globally-optimal} sets. Before defining these notions, we introduce some key parameters that we will utilize.

\begin{definition}
For a matroid $\M=(E,\I)$ and $S\subseteq E$, we let $
    \alpha(S)$ denote the median number of elements sampled from the $S$ before a circuit forms. I.e., 
    \[
    \alpha(S) = \min \left \{k \in \N: \Pr_{T\sim {S\choose k}} [\mathrm{Ind}(T)=1]\leq \frac{1}{2} \right \}.
    \]
\end{definition}

\begin{definition}
    Let $\M=(E,\I)$ be a matroid over $n$ elements. For $i \in E$, we say that
    \[
    p_{i, \M} = \Pr_{\pi}[i \in C_{\pi}],
    \]
    where $C_{\pi}$ is the unique first circuit that appears when adding elements in the order of $\pi$.
    More generally, for an arbitrary set $T \subseteq E$, we define the \textbf{hitting probability} of the set $T$ as 
    \[
    p_{T, \M} =  \Pr_{\pi}[T \cap C_{\pi} \neq \emptyset].
    \]
\end{definition}

\begin{definition}
    Let $\M=(E,\I)$ be a matroid over $n$ elements. For $T \subseteq E$, we define $T$'s \textbf{circuit mass in $\M$} to be 
    \[
    q_{T,\M} = \Pr_{\pi}[C_{\pi} \subseteq T].
    \]
    Often we will denote this by $q_T$ when the parent matroid $\M$ is clear by context. 
\end{definition}

Before defining globally-optimal and greedily-optimal sets, we first recall that in a single round (and using only polynomially many queries), we can estimate $q_S$ to small error \emph{for every} $S$:

\begin{claim}[Claim 4.4 of \cite{KPS25}]\label{clm:boundedErrorEstimation}
    There is a one round algorithm using polynomially many independence queries, which for a matroid $\M=(E,\I)$ and every subset $S \subseteq E$, yields $\hat{q}_S$ such that
    \[
    |\widehat{q}_S - q_S| \leq \frac{1}{n^2},
    \]
    with probability $1 - 2^{-\Omega(n)}$.
\end{claim}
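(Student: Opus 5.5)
We estimate all $q_S$ from a single shared batch of random permutations. In one round, sample $N = n^5$ independent uniformly random permutations $\pi_1,\dots,\pi_N$ of $E$. For each $\pi_t$ and each prefix length $j\in[n]$, query $\mathrm{Ind}(\{e_{\pi_t(1)},\dots,e_{\pi_t(j)}\})$. This is $N\cdot n = \mathrm{poly}(n)$ queries, and all of them are fixed before any answer is seen, so they fit in one round.

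From the answers we identify, for each $t$, the first index $j+1$ at which the prefix becomes dependent. (If the whole ground set is independent, no circuit exists and every $q_S=0$; we detect this and output $\widehat q_S=0$.) We do not need to find $C_{\pi_t}$ itself. We need only the indicator $\mathbf{1}[C_{\pi_t}\subseteq S]$ for every $S$. Since $C_{\pi_t}$ is the unique circuit of $P_t=\{e_{\pi_t(1)},\dots,e_{\pi_t(j+1)}\}$, it can also be recovered with further queries of the form $\mathrm{Ind}(P_t - x)$ for $x\in P_t$: we have $x\in C_{\pi_t}$ if and only if $P_t - x$ is independent. These are another $n$ queries per permutation, and they can be issued in the same round by querying $\mathrm{Ind}(\{e_{\pi_t(1)},\dots,e_{\pi_t(k)}\}-x)$ for all $k$ and all $x$ among the first $k$ elements. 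That is $O(Nn^2)$ queries, still polynomial, and it keeps everything non-adaptive. Afterwards, offline, we set $\widehat q_S = \frac1N\sum_t \mathbf{1}[C_{\pi_t}\subseteq S]$ for every $S\subseteq E$. This needs no further queries, so the estimate is defined simultaneously for all $2^n$ sets.

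For accuracy, fix $S$. The indicators $\mathbf{1}[C_{\pi_t}\subseteq S]$ are i.i.d.\ Bernoulli with mean $q_S$. Hoeffding's inequality gives
\[
\Pr\bigl[|\widehat q_S-q_S|>n^{-2}\bigr]\le 2\exp(-2Nn^{-4})=2\exp(-2n).
\]
A union bound over all $2^n$ subsets gives total failure probability at most $2^{n+1}e^{-2n}=2^{-\Omega(n)}$, since $e^{2}>2$. Taking $N=n^5$ makes the exponent $2n$; a larger polynomial $N$ gives a comfortable margin if needed.

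The main obstacle is the uniform guarantee over exponentially many $S$. The union bound only works because one shared sample set serves every $S$, with each estimate computed offline, and because $N$ is a large enough polynomial that the Hoeffding tail beats $2^n$. The other subtle point is ensuring that circuit recovery needs no second round. Including all "prefix minus one element" queries up front handles this.
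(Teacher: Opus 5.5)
Your proposal is correct and is essentially the standard argument behind this claim; the paper does not reprove it and simply imports it from \cite{KPS25}, where it rests on this same sampling-and-Chernoff idea. You use one shared batch of polynomially many random permutations, with non-adaptive prefix and prefix-minus-one-element queries recovering each $C_{\pi_t}$ within the same round, empirical frequencies computed offline for all $2^n$ sets, and Hoeffding plus a union bound; your circuit-recovery criterion ($x\in C_{\pi_t}$ iff $P_t - x$ is independent) and your bound $2^{n+1}e^{-2n}=2^{-\Omega(n)}$ are both right.
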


With this, we can now introduce the notion of a \emph{greedily-optimal} set:

\begin{definition}[Definition 4.9, Claim 4.11 of \cite{KPS25}]\label{def:greedilyOpt}
For a matroid $\M=(E,\I)$, we say a set $S\subseteq E$ is \emph{greedily-optimal} if
\begin{enumerate}
    \item $q_{S}\geq 1-2^{-20}$
    \item For every $x\in S$, \[
p_{x,\M|_S} \geq \frac{1}{2^{21}|S|\log (n)}.
\]
\end{enumerate}
\end{definition}

As discussed in \cref{sec:hittingset}, we strengthen the definition by requiring that the second property holds for arbitrary sets $T\subseteq S$, in the sense that including multiple elements \emph{scales} the hitting probability, rather than just holding individually for each element $x\in S$:

\begin{definition}\label{def:globallyOpt}
     Let $\M=(E,\I)$ be a matroid over $n$ elements. We say that a set $S \subseteq E$ is \textbf{globally optimal} if:
     \begin{enumerate}
         \item $q_{S, \M} \geq 1 - 2^{-20}$.
         \item $\forall T \subseteq S$, it is the case that $p_{T, \M|_S} \geq \frac{|T|}{2^{21} |S| \log(n)}$. 
     \end{enumerate}
     Note that here that $2^{20}$ and $2^{21}$ are just sufficiently large constants to ensure the probabilistic arguments go through.
\end{definition}

\begin{remark}\label{rmk:globalisgreedy}
    Observe that globally-optimal sets (\cref{def:globallyOpt}) are \emph{also} greedily-optimal sets as defined in \cref{def:greedilyOpt}, as we can consider the singleton sets where $T = \{x\}$ for elements $x \in S$. Importantly, this means that going forward, all properties that \cite{KPS25} established for greedily-optimal sets also \emph{automatically} hold for globally-optimal sets.
\end{remark}

Ultimately, our goal is to have an algorithm for \emph{decomposing} the matroid $\M$ into a sequence of sets which are all \emph{globally optimal}, while still maintaining guarantees on how the $\alpha$-values of these sets grow:

\begin{lemma}\label{lem:alphaGrowthFirstStatement}
    Let $\mathcal{M}$ be a matroid. There is a decomposition algorithm, making polynomially many queries, such that if the algorithm runs for $k$ rounds, it recovers sets $S_1, \dots S_{k}$, where every $S_i: i \in [k]$ is \emph{globally-optimal} with respect to $\M - S_1 - \dots S_{i-1}$. Further, for any $\ell\in [\log n]$, if we let $T = \{ i \in [k]: |S_i| \in [2^{\ell}, 2^{\ell+1} -1]\}$, $\gamma = |T|$, and let $a_1, \dots a_{\gamma}$ denote the indices in $T$, then with probability $1 - 2^{- \Omega(n)}$ it must be the case that:
    \begin{enumerate}
        \item $\alpha(S_{a_i}) = \Omega(i^2) \text{ for every }i\in[\gamma]$.
        \item $\gamma = O\left(\sqrt{2^{\ell}}\right).$
        \item $k=O(n^{1/3})$.
    \end{enumerate}
Additionally, we have the property that for any $i < j \in [k]$, $\frac{\alpha(S_j)}{|S_j|} = \Omega \left(\frac{\alpha(S_i)}{|S_i|}\right).$
\end{lemma}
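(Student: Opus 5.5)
The plan is to build the decomposition as an iterative peeling procedure, one globally-optimal set per round, and then transfer the $\alpha$-growth analysis of \cite{KPS25} through \cref{rmk:globalisgreedy}.

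\textbf{Step 1: finding a globally-optimal set in one round.} In each round, on the current matroid $\M' = \M - S_1 - \dots - S_{i-1}$, we use \cref{clm:boundedErrorEstimation} to obtain estimates $\widehat{q}_S$ for every $S \subseteq E(\M')$. Note that for $T\subseteq S$, $1-p_{T,\M'|_S} = \Pr_{\pi\text{ on }S}[C_\pi \subseteq S\setminus T] = q_{S\setminus T,\M'}/q_{S,\M'}$, up to the observation that the first circuit of a random permutation of $S$ has the same law as the first circuit of a random permutation of $E$ conditioned on $C_\pi\subseteq S$.

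I would like to justify this identity directly. Running $\pi$ on $E$ and restricting to $S$ gives a uniform order on $S$. However, $C_\pi\subseteq S$ is not literally the event that $S$'s first circuit forms before any other. So I would instead define $q$-based proxies and follow the \cite{KPS25} style: choose $S$ minimizing a potential such as $\log(1/\widehat{q}_S)$ penalized by size, in an offline computation with no queries. Concretely, pick $S$ with $\widehat q_S \ge 1-2^{-20}$ (within tolerance) that is inclusion-minimal with respect to a weighted criterion $\widehat q_{S\setminus T}\le \widehat q_S(1-|T|/(2^{21}|S|\log n))$.

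\textbf{Step 2: existence via a descent argument.} Start from $S=E$, where $q_E=1$. If some $T$ violates property 2, replace $S$ by $S\setminus T$. The circuit mass then drops by a factor of at most $(1-|T|/(2^{21}|S|\log n))$. Multiplying over the whole descent, the total loss is at most $\exp(-\sum |T|/(2^{21}|S|\log n))\ge \exp(-2^{-21}\cdot O(\log n)/\log n)$, since $\sum_t |T_t|/|S_t|\le H_n=O(\log n)$. This keeps $q\ge 1-2^{-20}$, and the constants are chosen exactly so that this works. Hence a globally-optimal set exists and can be found from the estimates, which carry $1/n^2$ error. Handling this error will use the $2\times$ slack in the constants. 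This existence argument is the step I expect to be the main obstacle: it requires making the conditional-probability identity and the error propagation rigorous.

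\textbf{Step 3: $\alpha$-growth properties.} By \cref{rmk:globalisgreedy}, each $S_i$ is greedily-optimal with respect to $\M - S_1-\dots-S_{i-1}$. The KPS analysis of how $\alpha$ evolves uses only properties 1 and 2 of \cref{def:greedilyOpt} together with the fact that later sets are peeled from the residual matroid. From it we get the recurrence of \cref{lem:informal_decomp}, item 2, and the monotonicity $\alpha(S_j)/|S_j|=\Omega(\alpha(S_i)/|S_i|)$.

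Within a size class $[2^\ell,2^{\ell+1})$ the sizes agree up to a factor of 2, so the recurrence gives $\alpha$ increments of order $\Omega(\sqrt{\alpha})$. Consequently $\sqrt{\alpha(S_{a_i})}$ grows by $\Omega(1)$ per step, which yields $\alpha(S_{a_i})=\Omega(i^2)$. Since $\alpha\le |S|\le 2^{\ell+1}$, this gives $\gamma=O(\sqrt{2^\ell})$.

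Finally, for $k$: class $\ell$ contributes at most $\min(O(2^{\ell/2}), n/2^\ell)$ sets, because the peeled sets are disjoint. Summing over $\ell$ gives $O(n^{1/3})$. The failure probability is a union bound over $k\le n$ applications of \cref{clm:boundedErrorEstimation}.
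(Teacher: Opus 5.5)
Your overall plan matches the paper's: greedily strip off sets $T$ whose removal barely lowers the estimated circuit mass, bound the cumulative loss by a harmonic sum, and transfer the KPS $\alpha$-analysis through the remark that globally-optimal sets are greedily-optimal. The genuine gap is the step that turns the stopping condition into property~2 of global optimality. You base it on the identity $1-p_{T,\M'|_S}=q_{S\setminus T}/q_S$, i.e.\ that the first circuit of a uniform order on $S$ has the law of $C_\pi$ conditioned on $C_\pi\subseteq S$. You concede this is ``not literally'' justified, defer it as the main obstacle, and never replace it; Step~2 also uses it to obtain the multiplicative drop. It cannot be made rigorous, because it is false in general. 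Take $S$ to be the direct sum of a parallel pair and $U_{2,3}$, with another parallel pair outside $S$. The pair is the first circuit of $\M|_S$ with probability $3/5$, but conditionally on $C_\pi\subseteq S$ with probability about $0.63$. The reason is that conditioning favours orders of $S$ in which its first circuit closes early.

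What is true, and all that is needed, is a one-sided bound. If $C_\pi\subseteq S$, then $C_\pi$ \emph{is} the first circuit of the restricted order $\pi|_S$: the prefix before the closing element is independent, and the closing element lies in $C_\pi\subseteq S$. Since $\pi|_S$ is uniform on $S$,
\[
q_S-q_{S\setminus T}=\Pr_\pi\left[C_\pi\subseteq S\wedge C_\pi\cap T\neq\emptyset\right]\le p_{T,\M|_S}.
\]
Hence if no $T$ passes the removal test $\widehat q_S-\widehat q_{S\setminus T}\le |T|/(2^{20}|S|\log n)$, then $p_{T,\M|_S}\ge |T|/(2^{20}|S|\log n)-2/n^2\ge |T|/(2^{21}|S|\log n)$ for every $T\subseteq S$. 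The descent needs no statement about $p$ at all: each removal lowers $\widehat q$ by at most what the test allows, and the harmonic sum keeps $q_S\ge 1-2^{-20}$. A multiplicative test also works with this inequality, provided you use constant $2^{20}$ in the test to absorb the factor $\widehat q_S$ and the $2/n^2$ error.

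A second hole is in Step~3. The monotonicity $\alpha(S_j)/|S_j|=\Omega(\alpha(S_i)/|S_i|)$ is not part of the KPS analysis (the paper notes it was never proved there), so it needs its own argument. The paper's argument runs as follows.
\begin{enumerate}
\item Suppose $\alpha(S_j)/|S_j|\le\frac{1}{100}\,\alpha(S_i)/|S_i|$. Take a random prefix of a uniform order on the $i$th residual matroid, which still contains $S_j$, of length such that it contains $\alpha(S_i)/10$ elements of $S_i$ in expectation.
\item Then it contains at least $10\alpha(S_j)$ elements of $S_j$ in expectation. By Markov and Chebyshev, with probability at least $1/2$ it has fewer than $\alpha(S_i)$ elements of $S_i$ and at least $\alpha(S_j)$ elements of $S_j$.
\item Given this, with probability at least $1/4$ the prefix contains a circuit inside $S_j$ but none inside $S_i$. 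So $C_\pi\not\subseteq S_i$ with probability at least $1/8$, contradicting $q_{S_i}\ge 1-2^{-20}$ in that residual matroid.
\end{enumerate}
Finally, a minor point: within a size class the recurrence forces growth of $\alpha(S)/|S|$, not of $\alpha$ itself, since sizes may halve. Your ``$\sqrt{\alpha}$ grows by $\Omega(1)$'' step should therefore be applied to $\alpha(S)\,2^{\ell}/|S|$; the conclusion $\alpha(S_{a_i})=\Omega(i^2)$ is unaffected.
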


Now, before proving this lemma, we require a few building blocks. To start, we show in the rest of this subsection that a globally optimal set can be constructed by a simple algorithm: we start by setting $S$ to be the ground set $E$, and continue to remove sets $T$ from $S$ that do not alter the probability mass of $\hat q_S$ by too much:

\begin{algorithm}[H]
\caption{GloballyOptimalConstructor$(\M=(E,\I))$}\label{alg:globalOptimal}
    Let $S = E$. \\
    Use \cref{clm:boundedErrorEstimation} to calculate $\hat{q}_T$ for every $T \subseteq S$. \\
    \While{$\exists T \subseteq S$ such that $\hat q_S - \hat q_{S\setminus T} \le \frac{|T|}{2^{20}|S|\log (n)}$} {
        $S\gets S\setminus T$
    }
    \Return{$S$}
\end{algorithm}

Immediately, we have the following claim:

\begin{claim}\label{clm:hittingProbability}
    Suppose we invoke \cref{alg:globalOptimal} on a matroid $\M=(E,\I)$, yielding a set $S$. Then, for every set $T \subseteq S$, it must be the case that 
    \[
    p_{T, \M|_S} \geq \frac{|T|}{2^{21} |S| \log(n)}.
    \]
\end{claim}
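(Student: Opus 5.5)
The plan is to relate the hitting probability of $T$ in $\M|_S$ to the drop in circuit mass $q_S - q_{S\setminus T}$, and then use the fact that the while loop terminated. The key identity is the following. Consider a uniformly random permutation $\pi$ of the whole ground set $E$. Let $\pi|_S$ be its restriction to $S$, which is a uniformly random permutation of $S$. Then
\[
q_{S,\M} - q_{S\setminus T,\M} = \Pr_\pi[C_\pi \subseteq S,\ C_\pi \cap T \neq \emptyset],
\]
since the event $\{C_\pi\subseteq S\setminus T\}$ is contained in $\{C_\pi \subseteq S\}$. Whenever $C_\pi \subseteq S$, the first circuit of $\pi$ coincides with the first circuit of $\pi|_S$ in $\M|_S$. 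Indeed, the elements of $S$ appearing before the completion of $C_\pi$ form an independent prefix of $\pi|_S$ together with $C_\pi$, and no earlier circuit can exist inside that $S$-prefix, because it would be a circuit in the $\pi$-prefix as well. Hence
\[
q_S - q_{S\setminus T} \le \Pr_{\pi|_S}[C_{\pi|_S}\cap T\neq\emptyset] = p_{T,\M|_S}.
\]
Here circuits of $\M|_S$ are exactly circuits of $\M$ contained in $S$, and $q$ is always measured in $\M$.

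Next, I use termination. When \cref{alg:globalOptimal} returns $S$, no $T\subseteq S$ satisfies the loop condition. So for every nonempty $T \subseteq S$,
\[
\hat q_S - \hat q_{S\setminus T} > \frac{|T|}{2^{20}|S|\log n}.
\]
By \cref{clm:boundedErrorEstimation}, which holds simultaneously for all subsets with probability $1-2^{-\Omega(n)}$, we get $q_S - q_{S\setminus T} > \frac{|T|}{2^{20}|S|\log n} - \frac{2}{n^2}$.

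It remains to absorb the $2/n^2$ error, which is the only delicate point. Since $|T|\ge 1$ and $|S|\le n$, the main term is at least $\frac{1}{2^{20} n\log n}$, and this exceeds $2\cdot\frac{2}{n^2}$ for $n$ larger than a constant. Therefore
\[
q_S - q_{S\setminus T} \ge \frac{|T|}{2^{21}|S|\log n},
\]
and combining this with the identity above gives the claim. The case $T=\emptyset$ is trivial, and small $n$ can be handled by adjusting constants. Note also that $S$ is nonempty in any meaningful case; if $S$ became empty, the claim is vacuous.

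I expect the main obstacle to be the coupling step, namely verifying carefully that $C_\pi\subseteq S$ forces $C_{\pi|_S}=C_\pi$. The argument needs the fact that the first dependent prefix of $\pi|_S$ cannot occur before the elements of $C_\pi$ have all appeared. That holds because every $S$-prefix of $\pi|_S$ taken before the completion of $C_\pi$ is a subset of an independent prefix of $\pi$.
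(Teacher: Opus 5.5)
Your proposal is correct and follows the paper's proof. Both use termination of the while loop together with the uniform $1/n^2$ estimation error to get $q_S - q_{S\setminus T} \ge \frac{|T|}{2^{21}|S|\log n}$, identify this difference with $\Pr_\pi[C_\pi\subseteq S \wedge C_\pi\cap T\neq\emptyset]$, and bound that probability by $p_{T,\M|_S}$ by restricting $\pi$ to $S$; your justification that $C_{\pi|_S}=C_\pi$ whenever $C_\pi\subseteq S$ is more careful than the paper's one-line version of that step.
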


\begin{proof}
    Whenever a set $S$ is returned, it must be the case that $\forall T \subseteq S$ that 
    \[
    \hat{q}_{S} - \hat{q}_{S \setminus T} \geq \frac{|T|}{2^{20}|S|\log(n)}.
    \]
    In particular, because $|\widehat{q}_S - q_S| \leq \frac{1}{n^2}$ (with high probability), we also know that 
    \[
    q_S - q_{S \setminus T} \geq \frac{|T|}{2^{20}|S|\log(n)} - \frac{2}{n^2} \geq \frac{|T|}{2^{21} |S| \log(n)}.
    \]
    Next, observe that 
    \[
    q_S - q_{S \setminus T} = \Pr_{\pi}[C_{\pi} \subseteq S] - \Pr_{\pi}[C_{\pi} \subseteq S\setminus T].
    \]
    Thus, the only way for a circuit $C_{\pi}$ to contribute to $q_S$ and not $q_{S \setminus T}$ is if $C_{\pi}$ contains at least one element from $T$. Formally,
    \[
     q_S - q_{S \setminus T} = \Pr_{\pi}[C_{\pi} \subseteq S \wedge C_{\pi} \cap T \neq \emptyset].
    \]
    Thus, we obtain that 
    \[
    \frac{|T|}{2^{21} |S| \log(n)} \leq \Pr_{\pi}[C_{\pi} \subseteq S \wedge C_{\pi} \cap T \neq \emptyset],
    \]
    Finally, we can observe that 
    \[
    \Pr_{\pi}[C_{\pi} \subseteq S \wedge C_{\pi} \cap T \neq \emptyset] \leq p_{T, \M|_S}.
    \]
     This is because whenever we sample in accordance to a permutation $\pi$ over $E$ (the ground set of $\M$), and recover a circuit $C_{\pi}$ such that $T\cap C_{\pi}\neq \emptyset$ and $C_{\pi} \subseteq S$, the same permutation, if restricted to $S$ and used to sample elements of $S$, would have still given a circuit such that $T\cap S\neq \emptyset$.
     Together then, this means that 
     \[
     p_{T, \M|_S} \geq \frac{|T|}{2^{21} |S| \log(n)},
     \]
     as we desire.
\end{proof}

Next, we have the following claim:

\begin{claim}\label{clm:circuitMass}
    Suppose we invoke \cref{alg:globalOptimal} on a matroid $\M$, yielding a set $S$. Then,
    \[
    q_{S,\M} \geq 1-2^{-20}.
    \]
\end{claim}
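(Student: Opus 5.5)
We need to show $q_{S,\M}\ge 1-2^{-20}$ for the output $S$. The plan is to track how much circuit mass the algorithm loses over all its deletions, and show the total loss is small. We start from $q_E = 1$ (every permutation of $E$ produces a first circuit contained in $E$, assuming $\M$ is not free; otherwise the statement is vacuous or trivially handled). Each iteration of the while loop replaces the current set $S$ by $S\setminus T$, where $\hat q_S - \hat q_{S\setminus T}\le \frac{|T|}{2^{20}|S|\log n}$. By \cref{clm:boundedErrorEstimation}, the true drop is then at most $\frac{|T|}{2^{20}|S|\log n}+\frac{2}{n^2}$.

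Let the successive sets be $E=S_0\supsetneq S_1\supsetneq\cdots\supsetneq S_m=S$, with $T_t=S_{t-1}\setminus S_t$. Telescoping gives
\[
1-q_S \le \sum_{t=1}^m \left(\frac{|T_t|}{2^{20}|S_{t-1}|\log n}+\frac{2}{n^2}\right).
\]
Since each step removes at least one element, $m\le n$, so the error terms total at most $2/n$.

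For the main terms, each element $x\in T_t$ satisfies $\frac{1}{|S_{t-1}|}\le \frac{1}{r_x}$, where $r_x$ is the size of the current set just before $x$ is removed. Thus the sum is bounded by a harmonic-type sum: removing elements one at a time from a set of size $n$, the $j$-th removed element is charged at most $1/(n-j+1)$. Hence
\[
\sum_t \frac{|T_t|}{|S_{t-1}|}\le \sum_{k=1}^n \frac1k \le \ln n+1.
\]
The main terms therefore total at most $\frac{\ln n+1}{2^{20}\log n}$.

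This last step is the main obstacle: the constants must be controlled so that the total stays below $2^{-20}$. With $\log$ base 2, $\ln n+1\le \log n$ once $n\ge 8$, since $\ln n = 0.69\log n$. However, adding the $2/n$ error would then slightly exceed $2^{-20}$. To absorb it I would use the slack more carefully: $\frac{\ln n+1}{\log n}\le 0.7+\frac{1}{\log n}$, which leaves room of order $0.3\cdot 2^{-20}$. That room exceeds $2/n$ once $n$ is large; for small $n$ I would adjust constants or invoke the sufficiently-large-$n$ convention the paper already uses. The bound holds with probability $1-2^{-\Omega(n)}$, inherited from the estimation claim. Combining the pieces gives $q_S\ge 1-2^{-20}$.
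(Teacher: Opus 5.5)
Your argument is correct and is essentially the paper's. The paper's inductive invariant $\hat q_S \ge 1-2^{-20}+\sum_{i=1}^{|S|}\frac{1}{2^{20} i\log n}$ is exactly your harmonic charging of each removed element against $1/(\text{current size})$, with the same use of $1+\ln n\le \log n$.

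The only difference is error bookkeeping. The paper telescopes the estimates $\hat q$ directly, starting from $\hat q_E=1$, and pays the $1/n^2$ estimation error once at the end. You convert to the true $q$ at every step and accumulate up to $2/n$. That is still absorbed by the roughly $(1-\ln 2)\,2^{-20}$ slack once $n$ is sufficiently large.
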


\begin{proof}
    Consider an iteration of \cref{alg:globalOptimal} starting with a set $S$. We then recover a set $T$ such that 
    \[
    \hat{q}_{S} - \hat{q}_{S \setminus T} \leq \frac{|T|}{2^{20}|S|\log(n)},
    \]
    and set $S \leftarrow S \setminus T$. Inductively, we claim that when $S$ has $\ell$ elements remaining, that 
    \[
    \hat{q}_S \geq 1 - 2^{-20} + \sum_{i = 1}^{\ell} \frac{1}{2^{20} i \log(n)}.
    \]
    As a base case, we can consider when $S = E$, i.e., $|S| = n$. Then, we have that 
    \[
     1 - 2^{-20} + \sum_{i = 1}^{\ell} \frac{1}{2^{20} i \log(n)} \leq 1 - 2^{-20} + \frac{1 + \ln(n)}{2^{20} \log(n)} \leq 1 = \hat{q}_{S}.
    \]
    
    Now, consider an iteration where we have an intermediate set $S'$ of size $\ell'$, and subsequently remove a set $T$, yielding a set $S$ with $\ell$ elements. By induction, we suppose that 
    \[
    \hat{q}_{S'}\geq 1- 2^{-20} + \sum_{i = 1}^{\ell'} \frac{1}{2^{20} i \log(n)}.
    \]
    Now, for the set $T$ of elements that we remove (which is being removed from $S'$), we know that 
    \[
    \hat{q}_{S'} - \hat{q}_{S} \leq \frac{|T|}{2^{20}|S'|\log(n)}.
    \]
    This means that 
    \[
    \hat{q}_S \geq \hat{q}_{S'} - \frac{|T|}{2^{20} |S'| \log(n)} \geq 1 - 2^{-20} + \left ( \sum_{i = 1}^{\ell'} \frac{1}{2^{20} i \log(n)} \right ) - \frac{|T|}{2^{20} |S'| \log(n)}.
    \]
    Importantly, because $S = S' \setminus T$, we know that $|T| = \ell' - \ell$, so we can now re-write 
    \[
    \frac{|T|}{2^{20} |S'| \log(n)} = \sum_{i = \ell + 1}^{\ell'} \frac{1}{2^{20} \ell' \log(n)}.
    \]
    Plugging this in, we see that 
    \begin{align*}
    \hat{q}_S & \geq 1 - 2^{-20} + \left ( \sum_{i = 1}^{\ell} \frac{1}{2^{20} i \log(n)} \right ) + \left ( \sum_{i = \ell+1}^{\ell'}\frac{1}{2^{20} i \log(n)} - \frac{1}{2^{20} \ell' \log(n)} \right )\\
    & \geq 1 - 2^{-20} + \left ( \sum_{i = 1}^{\ell} \frac{1}{2^{20} i \log(n)} \right ) + \left ( \sum_{i = \ell+1}^{\ell'}\frac{1}{2^{20} \ell' \log(n)} - \frac{1}{2^{20} \ell' \log(n)} \right )\\
     & \geq 1-2^{20} + \left ( \sum_{i = 1}^{\ell} \frac{1}{2^{20} i \log(n)} \right ),
    \end{align*}
    thus yielding our inductive claim. 
    Finally, we use \cref{clm:boundedErrorEstimation} to conclude that $q_S \geq 1-2^{-20}$, given that $\hat{q}_S \geq 1-2^{-20} + \frac{1}{2^{20} \log(n)}$ (and assuming $n$ is sufficiently large). This yields the claim. 
\end{proof}

To summarize the above discussion, we now have the following lemma:

\begin{lemma}\label{lem:findGlobalOpt}
    \cref{alg:globalOptimal} returns a set $S$ which is globally-optimal with probability $1 - 2^{-\Omega(n)}$.
\end{lemma}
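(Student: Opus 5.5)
The plan is to combine \cref{clm:hittingProbability} and \cref{clm:circuitMass}. Together they give exactly the two conditions of \cref{def:globallyOpt}. The only extra work is to account for the failure probability and to check that the procedure is well defined.

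First, I would condition on the success event of \cref{clm:boundedErrorEstimation}. On this event, $|\hat q_T - q_T| \le 1/n^2$ holds simultaneously for every $T\subseteq E$, and it occurs with probability $1-2^{-\Omega(n)}$. All estimates are computed once, in line 2 of \cref{alg:globalOptimal}, for every subset of $E$. The while loop only inspects these precomputed values, so no further randomness or rounds enter. A single union-free conditioning on this one event therefore suffices, and both claims were proved deterministically on it.

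Second, I would note that the loop terminates. Each iteration removes a nonempty $T$ (I would restrict the while condition to $T\neq\emptyset$, since $T=\emptyset$ trivially satisfies it), so $|S|$ strictly decreases and at most $n$ iterations occur.

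On the returned $S$, the argument then has two parts.
\begin{itemize}
\item Condition 2 of \cref{def:globallyOpt} follows from \cref{clm:hittingProbability}. The negated loop condition is a strict inequality, which only strengthens the bound used in that claim. Here I would check the slack $\frac{|T|}{2^{20}|S|\log n} - \frac{2}{n^2} \ge \frac{|T|}{2^{21}|S|\log n}$, which holds for $n$ large because $|T|/|S| \ge 1/n$.
\item Condition 1 follows from \cref{clm:circuitMass}.
\end{itemize}

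The one step needing care is the base case in \cref{clm:circuitMass}. It requires $\hat q_E \ge 1 - 2^{-20} + \frac{H_n}{2^{20}\log n}$. Since $q_E = 1$ whenever $E$ is dependent, this is fine up to the $1/n^2$ estimation error once $n$ is large. If $E$ is independent, there is no circuit and the basis is trivial, so that case can be excluded at the outset.

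Finally, the returned $S$ must be nonempty. Condition 1 forces $q_S>0$, hence $S\neq\emptyset$. With these checks, the lemma holds with probability $1-2^{-\Omega(n)}$.
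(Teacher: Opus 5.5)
Your proposal is correct and follows the paper's proof: condition on the single estimation event of \cref{clm:boundedErrorEstimation}, then read off condition 1 of global optimality from \cref{clm:circuitMass} and condition 2 from \cref{clm:hittingProbability}. Your extra checks are sound and fill in points the paper's one-line proof leaves implicit: excluding the vacuous choice $T=\emptyset$ so the loop terminates, noting that the base case needs $E$ dependent so that $q_E=1$, and observing that the invariant rules out returning $S=\emptyset$.
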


\begin{proof}
    \cref{clm:circuitMass} shows that $q_S \geq 1-2^{-20}$, and \cref{clm:hittingProbability} shows that $p_{T, \M|_S} \geq \frac{|T|}{2^{21} |S| \log(n)}$. The probability bound follows from \cref{clm:boundedErrorEstimation}.
\end{proof}

In the following subsection, we show how we can repeatedly invoke the above lemma by doing so-called \emph{iterative peeling}. In each round of peeling, we recover a new globally optimal set, thus creating the sequence of globally optimal sets guaranteed in \cref{lem:alphaGrowthFirstStatement}.

\subsection{Iterative Peeling}
\label{subsec:iterative_peeling}

As in \cite{KPS25}, our next step is to \emph{repeat} this procedure of peeling off sets. We present this algorithm below:

\begin{algorithm}[H]
\caption{RepeatedGlobalPeeling$(\M=(E,\I))$}\label{alg:iterativePeel}
    $\M\gets \text{RemoveSmallCircuits}(\M)$ \tcp{Algorithm 7 in \cite{KPS25}}
    $n \gets |E|$ \\
    $k \gets 0$ \\
    \While{$\M\neq \emptyset$}{
    $k\gets k +1$\\
    $S_k \leftarrow \mathrm{GloballyOptimalConstructor}(\M)$. \\
    $\M \leftarrow \M \setminus S_{k}$. \\
    \If{$\alpha(S_k)/|S_k|\geq 1/\log (n)$ or $|S_k|>n/2$} {
        \Return{$S_1,\dots S_{k-1}$}
    }
    }
    \Return{$S_1, \dots S_{k}$}
\end{algorithm}

\begin{remark}
At the beginning of \cref{alg:iterativePeel}, we invoke the RemoveSmallCircuits procedure (Algorithm 7 in \cite{KPS25}) to eliminate circuits of size $\leq 50$. This is for a minor technical reason in the probabilistic argument, the starting circuit size to be a sufficiently large constant. We omit the details here as they are identical to \cite{KPS25}.
\end{remark}

Notationally, we also let $\M_j = \M - S_1 - \dots - S_{j-1}$. While our globally-optimal sets are different than the notion of greedily-optimal sets used in \cite{KPS25}, our decomposition still enjoys all of the same favorable properties. Importantly, we have the following lemma which governs the growth of the $\alpha$-value within the sets that we peel off:

\begin{theorem} [Theorem 4.6 in \cite{KPS25}] \label{lem:alphaGrowth}
    Let $\mathcal{\M}$ be a matroid, and let $S_1, \dots S_k$ be a sequence of greedily-optimal sets that are peeled off (analogously to \cref{alg:iterativePeel}). Now, let $\ell\in [\log n]$ be an integer, let $T = \{ i \in [k]: |S_i| \in [2^{\ell}, 2^{\ell+1} -1]\}$, let $\gamma = |T|$, and let $a_1, \dots a_{\gamma}$ denote the indices in $T$. Then, with probability $1 - 2^{- \Omega(n)}$ it must be the case that 
\[
    \alpha(S_{a_i}) = \Omega(i^2) \text{ for every }i\in[\gamma], \quad \gamma = O\left(\sqrt{2^{\ell}}\right), \quad k=O(n^{1/3}).
\]
Additionally, for any $i < j \in [k]$,
\[
\frac{\alpha(S_j)}{|S_j|} = \Omega \left(\frac{\alpha(S_i)}{|S_i|}\right).
\]
\end{theorem}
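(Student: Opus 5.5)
The statement is Theorem 4.6 of \cite{KPS25}, applied to greedily-optimal sets. By \cref{rmk:globalisgreedy}, every globally-optimal set peeled off by \cref{alg:iterativePeel} is greedily-optimal. So the plan is to re-derive the KPS argument using only the two defining properties in \cref{def:greedilyOpt}: high circuit mass $q_{S}\ge 1-2^{-20}$, and per-element marginals $p_{x,\M|_S}=\widetilde\Omega(1/|S|)$. This is the argument we would then invoke for \cref{lem:alphaGrowthFirstStatement}.

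\textbf{Step 1: the monotonicity statement.} We first prove the final claim, $\alpha(S_j)/|S_j| = \Omega(\alpha(S_i)/|S_i|)$ for $i<j$. Because $q_{S_i,\M_i}\ge 1-2^{-20}$, a random prefix of $\M_i$ almost always closes its first circuit inside $S_i$. Intuitively, sampling from all of $\M_i$ at density $\rho$ behaves like sampling from $S_i$ at density $\rho$, so $\alpha(S_i)/|S_i|$ roughly equals the critical density of $\M_i$.

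Now $S_j$ lies inside $\M_j\subseteq \M_i$. Suppose $S_j$ had critical density much smaller than that of $S_i$. Sampling $\M_i$ at that smaller density would then create circuits inside $S_j$, which is disjoint from $S_i$, with constant probability. That contradicts $q_{S_i}\ge 1-2^{-20}$. To make this rigorous, we would use a coupling between a random permutation of $\M_i$ and independent Bernoulli-$\rho$ samples, together with concentration of the prefix length.

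\textbf{Step 2: sets in one size bucket.} Fix a bucket of sizes $[2^\ell,2^{\ell+1})$, and compare consecutive bucket members $S_{a_i}$ and $S_{a_{i+1}}$. The monotonicity from Step 1 only gives $\alpha$ nondecreasing up to a constant factor. We need the stronger additive increment $\Omega(\sqrt{\alpha})$ from \cref{lem:informal_decomp}, Item~2.

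For this, we use the marginal property. Each element of $S_{a_i}$ appears in the first circuit with probability $\widetilde\Omega(1/|S_{a_i}|)$. Consider a random sample of $\M_{a_i}$ at the density that is critical for $S_{a_{i+1}}$. The number of elements drawn from $S_{a_{i+1}}$ fluctuates by about $\sqrt{\alpha}$. If $\alpha(S_{a_{i+1}})$ were within $o(\sqrt{\alpha(S_{a_i})})$ of the scaled value $\alpha(S_{a_i})|S_{a_{i+1}}|/|S_{a_i}|$, these fluctuations would make $S_{a_{i+1}}$ close its circuit first with constant probability. That again contradicts $q_{S_{a_i}}\ge 1-2^{-20}$.

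This yields the recurrence $\alpha_{i+1}\ge \alpha_i + c\sqrt{\alpha_i}$, up to the bucket's constant size ratio. The ratio is absorbed because $|S|$ varies by at most a factor of $2$ within a bucket, and Step 1 controls the normalization. Solving the recurrence gives $\alpha(S_{a_i})=\Omega(i^2)$. The removal of circuits of size at most $50$ supplies the constant base case.

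\textbf{Step 3: counting.} Every set kept by \cref{alg:iterativePeel} satisfies $\alpha(S)\le |S|/\log n < 2^{\ell+1}$. Combined with $\alpha(S_{a_i})=\Omega(i^2)$, this gives $\gamma=O(\sqrt{2^\ell})$.

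For the bound on $k$, we sum over buckets. Since $\sum_i |S_i|\le n$, a bucket with $\gamma_\ell$ members satisfies $\gamma_\ell 2^\ell\le n$. It also satisfies $\gamma_\ell=O(2^{\ell/2})$. Hence
\[
\gamma_\ell=O\bigl(\min(2^{\ell/2},\,n/2^\ell)\bigr)=O(n^{1/3}),
\]
and the bucket contributions form a geometric series on each side of $2^\ell=n^{2/3}$. Therefore $k=O(n^{1/3})$.

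\textbf{Main obstacle.} The hard step is the additive increment in Step 2. It needs an anti-concentration argument: with constant probability the $S_{a_{i+1}}$-portion of a random prefix exceeds its mean by $\Omega(\sqrt{\alpha})$. It also needs a way to transfer the dependence event from $\M|_{S_{a_{i+1}}}$ back into $\M_{a_i}$ using the per-element marginal bound. Here we would follow \cite{KPS25} closely, since only greedily-optimal properties are used.
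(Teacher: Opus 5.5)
Your Step 1 is essentially the argument the paper gives for the monotonicity claim. That is the only part of this statement the paper proves itself (in the appendix); the first three claims are cited from Theorem 4.16 of \cite{KPS25}. The paper takes a fixed prefix $U$ of length $\frac{\alpha(S_i)}{10|S_i|}n$ of a random permutation of the ground set of $\M_i$, and assumes $\frac{\alpha(S_j)}{|S_j|}\le\frac{1}{100}\cdot\frac{\alpha(S_i)}{|S_i|}$. It then uses Markov to get $|U\cap S_i|<\alpha(S_i)$ and Chebyshev to get $|U\cap S_j|\ge\alpha(S_j)$. Conditioned on these counts, the $S_i$-part is independent and the $S_j$-part dependent with probability at least $1/4$. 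So the first circuit falls outside $S_i$ with probability at least $1/8$, contradicting $q_{S_i}\ge 1-2^{-20}$. No Bernoulli coupling or prefix-length concentration is needed. The point your sketch leaves implicit, that the $S_i$-portion must still be independent at the chosen density, is exactly what the Markov step supplies.

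Your Steps 2--3 go beyond what the paper writes and are sound in outline, but two things need correcting.

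\textbf{The marginal bound is not used.} The additive increment comes only from $q_{S_{a_i}}\ge 1-2^{-20}$ plus anti-concentration of the counts at scale $\sqrt{\alpha}$. Use a fixed small constant $c$ (not $o(\sqrt{\alpha})$), so that the bad event has probability above $2^{-20}$. There is also nothing to ``transfer'': a subset of $S_{a_{i+1}}$ is dependent in $\M|_{S_{a_{i+1}}}$ iff it is dependent in $\M_{a_i}$.

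\textbf{The recurrence cannot be run on raw $\alpha$.} Within a bucket, $|S|$ can drop by almost a factor $2$ between consecutive members. Your increment only gives $\alpha(S_{a_{i+1}})\ge \alpha(S_{a_i})\frac{|S_{a_{i+1}}|}{|S_{a_i}|}+c\sqrt{\alpha(S_{a_i})}$. So $\alpha_{i+1}\ge\alpha_i+c\sqrt{\alpha_i}$ is false as stated. A factor-$\tfrac12$ loss per step cannot be ``absorbed'' over $\gamma$ iterations, and Step 1's constant-factor monotonicity does not help either.

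Instead, put $x_i=2^{\ell+1}\alpha(S_{a_i})/|S_{a_i}|$. Divide the increment by $|S_{a_{i+1}}|<2^{\ell+1}$ and use $\alpha(S_{a_i})\ge x_i/2$. This gives $x_{i+1}\ge x_i+c\sqrt{x_i/2}$, with $x_1>\alpha(S_{a_1})\ge 51$ as the base case. Hence $x_i=\Omega(i^2)$ and $\alpha(S_{a_i})\ge x_i/2=\Omega(i^2)$. With that fix, your Step 3 counting is correct as written.
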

Note that the proof of the final part of the claim above was actually \emph{not} presented in \cite{KPS25}. For this reason, we provide a complete proof of the above in \cref{sec:lem-proof}.

Finally, with this machinery in place, we can now prove \cref{lem:alphaGrowthFirstStatement}:

\begin{proof}
    First, we must show that each set $S_i$ that is recovered is globally-optimal with respect to $\M - S_1 - \dots S_{i-1}$. This follows by invoking \cref{lem:findGlobalOpt}, as the algorithm \cref{alg:iterativePeel} calls GloballyOptimalConstructor on $\M - S_1 - \dots S_{i-1}$ to construct $S_i$. 

    The remaining properties hold by first using \cref{rmk:globalisgreedy} to observe that our \emph{globally} optimal sets are also \emph{greedily} optimal sets, and then invoking \cref{lem:alphaGrowth} on this sequence of greedily optimal sets that is removed.
\end{proof}

\subsection{Finding Independent Sets via Globally-Optimal Sets}

With our notion of globally-optimal sets established, and an efficient algorithm for peeling them off, we also briefly mention here that the $\alpha$ value of each $S_j$ has a strong relationship with the $\alpha$ value of the parent matroid $\M_j$. Specifically:

\begin{claim}\label{clm:independentRecovery}
    Let $\M=(E,\I)$ be a matroid on $n$ elements, and let $S\subseteq E$ be a globally-optimal set in $M$. Then, for $\ell = \frac{\alpha(S)}{10|S|}n$, we have
\[
\Pr_\pi[\mathrm{Ind}(\{\pi(1),\dots,\pi(\ell)\})=1] \geq \frac{1}{4}.
\]
\end{claim}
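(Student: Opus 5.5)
The plan is to bound the probability that the length-$\ell$ prefix of a uniformly random permutation $\pi$ of $E$ is \emph{dependent}, and to show this is at most $3/4$. The key observation is that if the prefix $P_\ell=\{\pi(1),\dots,\pi(\ell)\}$ is dependent, then the first circuit $C_\pi$ is contained in $P_\ell$. So either $C_\pi \not\subseteq S$, or $C_\pi \subseteq P_\ell \cap S$, in which case $P_\ell\cap S$ is dependent. A union bound gives
\[
\Pr_\pi[P_\ell \notin \I] \;\le\; \Pr_\pi[C_\pi\not\subseteq S] + \Pr_\pi[P_\ell\cap S \notin \I] \;\le\; 2^{-20} + \Pr_\pi[P_\ell\cap S\notin\I],
\]
where the second inequality uses property 1 of \cref{def:globallyOpt}, namely $q_{S,\M}\ge 1-2^{-20}$. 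Note that the subset-hitting property is not needed here; only the circuit-mass condition is used.

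It remains to control $\Pr[P_\ell\cap S\notin \I]$. Let $m=|P_\ell\cap S|$. Then $m$ is hypergeometric with mean $\ell|S|/n=\alpha(S)/10$. Moreover, conditioned on $m$, the set $P_\ell\cap S$ is a uniformly random $m$-subset of $S$. This lets us split into two cases according to $m$.

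\textbf{Case $m\ge \alpha(S)$.} By Markov's inequality, this happens with probability at most $\frac{\alpha(S)/10}{\alpha(S)}=1/10$.

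\textbf{Case $m\le \alpha(S)-1$.} We use monotonicity: the probability $g(k)=\Pr_{T\sim\binom{S}{k}}[\mathrm{Ind}(T)=1]$ is nonincreasing in $k$. This holds because a uniformly random $k$-subset contains a uniformly random $(k-1)$-subset (drop a random element), and subsets of independent sets are independent. By the minimality in the definition of $\alpha(S)$, we have $g(\alpha(S)-1)>1/2$. Hence $g(m)>1/2$ for every $m\le\alpha(S)-1$, so the conditional dependence probability is below $1/2$.

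Combining the two cases,
\[
\Pr[P_\ell\notin\I]\le 2^{-20}+\tfrac1{10}+\tfrac12<\tfrac34 ,
\]
which gives the claim.

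The only step needing care is the union bound in the first paragraph. We must check that the event "$C_\pi\subseteq S$ and $P_\ell$ dependent" really forces $P_\ell\cap S$ to be dependent. It does: $P_\ell$ dependent means the first dependency occurs by position $\ell$, so $C_\pi\subseteq P_\ell$, and combined with $C_\pi\subseteq S$ this gives $C_\pi\subseteq P_\ell\cap S$. Integrality issues with $\ell$, for instance rounding down, only decrease $\mathbb{E}[m]$ and so do not hurt.
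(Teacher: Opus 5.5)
Your proof is correct, but it takes a different route from the paper. The paper proves this claim in one line: by \cref{rmk:globalisgreedy} a globally-optimal set is also greedily-optimal, and it then cites Claim 5.1 of \cite{KPS25} without reproducing any argument.

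You instead give a self-contained proof with three ingredients:
\begin{enumerate}
  \item the union bound $\Pr[P_\ell\notin\I]\le \Pr[C_\pi\not\subseteq S]+\Pr[P_\ell\cap S\notin\I]$;
  \item Markov's inequality on the hypergeometric count $|P_\ell\cap S|$, whose mean is $\alpha(S)/10$;
  \item the fact that, conditioned on this count, $P_\ell\cap S$ is a uniform subset of $S$.
\end{enumerate}
These are the same ingredients the paper itself uses in \cref{sec:lem-proof}: the same choice $\ell=\frac{\alpha(S)}{10|S|}n$, the same Markov step, and the same appeal to the circuit-mass lower bound. So your argument is most likely a reconstruction of the cited one rather than a new idea.

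What your version gains is transparency. It shows explicitly that only the circuit-mass condition $q_{S,\M}\ge 1-2^{-20}$ is used. Neither the subset-hitting property nor the per-element marginal bound of greedily-optimal sets plays any role, so the claim holds for any $S$ with large circuit mass. The paper's citation is shorter but hides this, and it relies on the reader trusting the transfer from greedily-optimal to globally-optimal sets.

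Two small remarks. First, you do not need monotonicity of your function $g$: minimality in the definition of $\alpha(S)$ already gives $g(k)>1/2$ for every $k<\alpha(S)$. Your monotonicity argument is nonetheless correct. Second, $q_{S,\M}>0$ forces $S$ to be dependent, so $\alpha(S)\le|S|$ is well defined and $\ell\le n/10$. If $E$ happens to be independent, the claim is trivial.
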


\begin{proof}
    This follows by noting that a globally-optimal set is \emph{also} a greedily-optimal set as per \cref{rmk:globalisgreedy}. Then, we can simply invoke Claim 5.1 of \cite{KPS25}.
\end{proof}

Importantly, this gives the following lemma:

\begin{lemma}\label{lem:independentProg}
    There is a $1$-round, polynomial-query algorithm which, given a matroid $\M$ on $n$ elements and globally-optimal set $S$, recovers $\Omega\left(\frac{\alpha(S)}{|S|}n\right)$ independent elements with probability $1 - 2^{-n}$.
\end{lemma}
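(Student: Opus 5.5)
The plan is to derive \cref{lem:independentProg} directly from \cref{clm:independentRecovery} by amplification over many independent random prefixes, all queried in a single round.

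Set $\ell = \frac{\alpha(S)}{10|S|}n$, rounded down; if this is $0$ the statement is trivial (output any single non-loop element, or the empty set, since $\Omega(\cdot)$ then allows a constant). In the single round we sample $N = 10n$ independent uniformly random permutations $\pi_1,\dots,\pi_N$ of $E$. For each one we query $\mathrm{Ind}(\{\pi_t(1),\dots,\pi_t(\ell)\})$. All $N$ queries depend only on internal randomness, so they form one round and use polynomially many queries. We output any queried prefix that the oracle reports independent. This set has $\ell = \Omega\!\left(\frac{\alpha(S)}{|S|}n\right)$ elements and is independent in $\M$, as required.

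For the failure probability, \cref{clm:independentRecovery} (which applies because $S$ is globally optimal in $\M$, and via \cref{rmk:globalisgreedy} reduces to Claim 5.1 of \cite{KPS25}) gives that each prefix is independent with probability at least $1/4$. By independence of the $\pi_t$, all $N$ prefixes are dependent with probability at most $(3/4)^{N}$. With $N = 10n$ this is $(3/4)^{10n} \le 2^{-n}$, since $10\log_2(4/3) > 1$.

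There is essentially no obstacle once \cref{clm:independentRecovery} is available. The only points to check are the rounding of $\ell$ to an integer, which only changes the constant, and that the bound in the claim is stated for the parent matroid $\M$ in which $S$ is globally optimal. The latter matters because, in the application, $\M$ will be $\M_j$ and the recovered set is independent in $\M_j$, exactly as in \cref{item:informal_decomp_3} of \cref{lem:informal_decomp}.
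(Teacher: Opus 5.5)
Your proposal is correct and matches the paper's proof: in one round you sample many random permutations, invoke \cref{clm:independentRecovery} so that each length-$\ell$ prefix is independent with probability at least $1/4$, and amplify, with your choice $N=10n$ and the bound $(3/4)^{10n}\le 2^{-n}$ making the paper's ``exponentially high probability'' explicit. One small point that neither you nor the paper spells out is that the algorithm does not know $\alpha(S)$ exactly, and hence does not know $\ell$; this is harmless, since you can query every prefix of each sampled permutation (still one round and polynomially many queries) and return the longest independent one.
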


\begin{proof}
    Simply sample polynomially many random permutations in accordance with \cref{clm:independentRecovery}. With exponentially high probability, one of these random permutations will find $\ell = \frac{\alpha(S)}{10|S|}n$ independent elements.
\end{proof}

Now, in the next section, we show how to use our decomposition to obtain improved algorithms for recovering \emph{redundant} elements as well.

\section{Better Progress with Circuit Size and Marginal Probability Trade-Offs}\label{sec:circuitProbs}

In the previous section, we established an algorithm for recovering globally-optimal sets, and showed some basic properties about how the $\alpha$-value of globally-optimal sets lends itself towards recovering large independent sets in the parent matroid. In this section, we instead focus on our ability to \emph{recover redundant elements} within each globally-optimal set we recover in \cref{alg:iterativePeel}. Ultimately, we show the following:

\begin{theorem}\label{thm:balancedProgress}
    Let $S$ be a globally optimal set recovered from the matroid $\M$. Then there is a single round procedure which deletes $\widetilde{\Omega}\left (\min\left(|S|, \frac{|S|^{3/2}}{\alpha(S)}\right)\right)$ redundant elements with probability $1 - 2^{- \Omega(n)}$.
\end{theorem}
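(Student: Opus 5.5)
The plan is to follow the core/non-core dichotomy sketched in \cref{sec:introcore}. Write $s=|S|$, $\alpha=\alpha(S)$, and let all marginals be $p_i=p_{i,\M|_S}$. We may assume $\alpha\le s/\log^2 s$; otherwise $s^{3/2}/\alpha=\widetilde O(\sqrt{s})$, and the target $\widetilde\Omega(\min(s,s^{3/2}/\alpha))$ already follows from the one-round KPS bound $\widetilde\Omega(\min(s,s^2/\alpha^2))$ of \cref{lem:informal_progress} (valid by \cref{rmk:globalisgreedy}), or trivially when it is $O(1)$. Define $\mathrm{CORE}=\{i\in S: p_i>\alpha^2/s^2\}$.

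The first ingredient is the KPS one-round deletion primitive in its weighted form, which deletes $\widetilde\Omega\bigl(\sum_{i\in S}\min\{1,p_i s^2/\alpha^2\}\bigr)$ redundant elements (\eqref{eq:progressKPS25deletion}). I would restate it as a lemma and reprove it in the appendix from the greedily-optimal machinery. The idea is to sample a random prefix of length about $\alpha$. Each element $i$ lands in a detectable circuit (one that closes at $i$'s position with a short witness inside the prefix) with probability roughly $p_i\cdot(s/\alpha)^2$, truncated at $1$. Repeating this over disjoint random blocks of $S$ certifies these deletions simultaneously.

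From this primitive, if $|\mathrm{CORE}|\ge s/2$ we delete $\widetilde\Omega(s)$ elements. Otherwise, set $N=S\setminus\mathrm{CORE}$, with $|N|\ge s/2$, and $\ell=\sum_{i\in N}p_i=\E_\pi[|C_\pi\cap N|]$. The primitive restricted to $N$ then gives $\widetilde\Omega(\ell s^2/\alpha^2)$ deletions. Separately, I would run the short-circuit procedure in the same round. Sample $\mathrm{poly}(n)$ permutations of $S$ and record the first circuit $C_\pi$ of each, found via prefix queries. Call $x\in N$ \emph{witnessed} if some recorded $C_\pi\ni x$ has $|C_\pi\cap N|\le L:=4\ell\log n$.

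The key step is to show that the unwitnessed set $U\subseteq N$ has size $o(s)$. By the subset-hitting property, $p_{U}\ge |U|/(2^{21}s\log n)$. Suppose $|U|\ge s/4$. Then $p_U=\widetilde\Omega(1)$. With polynomially many samples, any circuit of probability mass at least $1/\mathrm{poly}$ is observed, so almost every time $C_\pi$ hits $U$ it has $|C_\pi\cap N|>L$. Hence $\E|C_\pi\cap N|\ge p_U L/2>\ell$ for a suitable polylog choice of $L$, which is a contradiction. To make this rigorous, I would handle rare circuits by a union bound over the events ``$C_\pi$ is unobserved'': unobserved circuits have total mass $o(1/n)$ with high probability, because each distinct circuit of mass above $n^{-3}$ is seen with high probability and the low-mass ones are handled through a count over sampled outcomes. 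I expect this to be the main obstacle, since there may be exponentially many circuits. The fix is to argue directly about the set $U$ defined from the samples. The intended inequality is $\Pr[C_\pi\cap U\ne\emptyset,\ |C_\pi\cap N|\le L]\le 1/n^2$, and it holds by a Chernoff bound for the single fixed event over $U$. Since $U$ is data-dependent, I would instead split the samples into two halves: define $U$ from the first half, then verify $p_U$ against the second half.

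Given at least $s/4$ witnessed $x\in N$, each with a circuit $C_x$ satisfying $|C_x\cap N|\le L$, I would delete greedily. Keep all of $\mathrm{CORE}$ as permanent witnesses. Process the witnessed elements in order: delete $x$ unless it is already committed, and commit $C_x\cap N\setminus\{x\}$. Each deleted $x$ stays spanned by kept elements, since committed elements are never deleted. This yields $\Omega(s/L)=\widetilde\Omega(s/\ell)$ deletions, which is a post-processing step with no extra queries. Running both procedures in one round and taking the better gives $\widetilde\Omega(\max(\ell s^2/\alpha^2,\,s/\ell))\ge\widetilde\Omega(s^{3/2}/\alpha)$ by the AM--GM balance at $\ell=\alpha/\sqrt{s}$. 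Combined with the truncation at $s$, this gives the theorem, with failure probability $2^{-\Omega(n)}$ from the sampling bounds.
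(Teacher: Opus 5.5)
Your main line is the paper's. The weighted KPS primitive (\cref{clm:basicRecoverRedundant}) handles a large core and yields $\widetilde\Omega(\ell s^2/\alpha^2)$ from the non-core mass. Subset hitting forces most non-core elements into circuits with $\widetilde O(\ell)$ non-core elements. The greedy commit/delete step gives $\widetilde\Omega(s/\ell)$, and you balance at $\ell=\alpha/\sqrt{s}$. However, two steps fail as written.

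\textbf{The opening reduction is false.} For $\alpha>s/\log^2 s$ the KPS bound gives only $\min(s,s^2/\alpha^2)\le\log^4 s$ deletions. The target $s^{3/2}/\alpha$ is at least $\sqrt{s}$ (since $\alpha(S)\le|S|$), so it does not follow, and it is never $O(1)$. Drop this case split. The weighted primitive, proved via \cref{alg:recover-reduant-elements}, only needs $\alpha(S)\log n/|S|$ below a small constant. Up to constants, that is the regime of every set the decomposition peels ($\alpha(S)/|S|<1/\log n$), and the paper's proof relies on the same implicit assumption. Your main argument then covers what is needed, including the window $s/\log^2 s<\alpha\lesssim s/\log n$ that your reduction mishandles.

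\textbf{The key sampling step is not made rigorous by the mechanisms you propose}, although the inequality you aim for is correct and easy. The assertion that unobserved circuits carry mass $o(1/n)$ is false in general: in a uniform matroid, $C_\pi$ is uniform over exponentially many sets, so almost all of the mass is unobserved. Splitting the samples into halves only lets you estimate $\Pr_\pi[C_\pi\cap U\neq\emptyset,\ |C_\pi\cap N|\le L]$ for a $U$ fixed by the first half; it does not show that this probability is small.

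The fix is a union bound over elements rather than circuits. For each fixed $x\in N$, let $G_x=\{\pi: x\in C_\pi,\ |C_\pi\cap N|\le L\}$. If $\Pr_\pi[G_x]\ge n^{-3}$, then $n^{10}$ samples witness $x$ except with probability $e^{-n^{7}}$. So with high probability every unwitnessed $x$ has $\Pr_\pi[G_x]<n^{-3}$, and hence $\Pr_\pi[C_\pi\cap U\neq\emptyset,\ |C_\pi\cap N|\le L]\le\sum_{x\in U}\Pr_\pi[G_x]\le n^{-2}$ for whatever $U$ results. Equivalently, your fixed-$U$ Chernoff bound fails with probability $e^{-\Omega(n^{8})}$, which survives a union bound over all $2^n$ candidate sets $U$. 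This is exactly what the paper encodes through the statistic $w_{i,S,T}$ and the Markov/pigeonhole argument of \cref{clm:expectedQIntersect}.

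Finally, with $p_U\ge 1/(2^{23}\log n)$, the choice $L=4\ell\log n$ is too small for $p_U L/2>\ell$. Take $L=2^{25}\ell\log n$, so that $p_U L/2\ge 2\ell$; this only changes constants. With these repairs, the argument goes through as in \cref{lem:circuitRedundantTradeoff}.
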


Before proving this theorem, we first observe an implicit strengthening of \cite{KPS25}, Lemma 6.7 which we will use to efficiently recover redundant elements:

\begin{claim}[\cite{KPS25}]\label{clm:basicRecoverRedundant}
Let $\M=(E,\I)$ be a matroid and $S\subseteq E$, such that for every $x \in S$, $p_{x, \M|_S} \geq 1 / n^2$. Then, there is a one round algorithm making polynomially many queries which recovers 
\[
\widetilde{\Omega} \left ( \sum_{x \in S} \min \left (1, p_{x, \M|_S} \cdot \frac{|S|^2}{\alpha(S)^2} \right ) \right )
\]
redundant elements in $S$ with probability $1 - 2^{- \Omega(n)}$.
\end{claim}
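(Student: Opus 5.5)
We need to prove Claim \ref{clm:basicRecoverRedundant}, an implicit strengthening of KPS Lemma 6.7. The plan is to reproduce the one-round KPS deletion procedure and charge its success per element rather than in aggregate.

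\textbf{Step 1: the one-round procedure.} Set $k = \Theta(|S|/\alpha(S))$ up to polylogarithmic factors. For many independent trials:
\begin{itemize}
\item sample a uniformly random subset $R\subseteq S$ of density roughly $\alpha(S)/|S|$, so that $|R|$ is about $\alpha(S)$ and $R$ contains a first circuit with constant probability;
\item query all prefixes of a random ordering of $R$ to locate the first circuit $C$ and its closing element $x$.
\end{itemize}
Then $x\in\mathrm{span}(R\setminus\{x\})$. We would certify $x$ as redundant relative to a fixed \emph{kept} set $K$, namely a small random set of witness elements, with queries $\mathrm{Ind}(K)$ and $\mathrm{Ind}(K+x)$. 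All of these queries are nonadaptive, so they fit in one round.

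The KPS idea is to let $K$ be the union of many random sparse prefixes (of size about $\alpha(S)/k$ each). An element $x$ is then certified whenever some random prefix contained in $K$ closes a circuit with $x$. Redundant elements are those $x\notin K$ with $x\in\mathrm{span}(K)$; deleting all of them together is safe because $K$ is retained.

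\textbf{Step 2: the per-element bound.} The key quantity is the probability that a fixed $x$ is certified. Consider a random permutation restricted to a random subset of density $1/k$ around $x$. Because the law of the first circuit is exchangeable, the event ``$x$ lies in the first circuit of this subsample'' should have probability about $p_{x,\M|_S}\cdot k^2$. One factor of $k$ comes from the subsampling rescaling, which makes circuits form at $\alpha/k$ scale. The other factor of $k$ comes from the number of independent subsamples covered by a single $K$.

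With $k^2 = |S|^2/\alpha(S)^2$, each $x$ is certified with probability $\widetilde\Omega(\min(1, p_{x}|S|^2/\alpha(S)^2))$. Summing over $x$ by linearity of expectation gives the expected count. Concentration then follows by repeating the whole experiment polynomially many times and taking the best outcome, which gives failure probability $2^{-\Omega(n)}$. The hypothesis $p_x\ge 1/n^2$ guarantees that every term is at least inverse-polynomial, so polynomially many samples suffice to realize the expectation up to constant factors.

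\textbf{Main obstacle.} The hardest step is showing that the KPS aggregate bound
\[
\widetilde\Omega\bigl(\min(|S|, |S|^2/\alpha(S)^2)\bigr),
\]
which was proved under the uniform lower bound $p_x=\widetilde\Omega(1/|S|)$, actually comes from a per-element estimate that is linear in $p_x$. I would first re-read the proof of KPS Lemma 6.7 and check that it bounds, for each $x$, the probability that $x$ closes a circuit inside the sampled structure by $\Omega(p_x k^2)$ (truncated at a constant), using only that $p_x$ is not tiny. If it does, linearity of expectation finishes the proof. If it does not, I would redo that step using the coupling between the first circuit of a random permutation of $S$ and the first circuit of a $1/k$-subsample, in which $x$'s membership probability scales by $k$ times the rate at which subsamples are tried.
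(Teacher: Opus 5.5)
\textbf{There is a genuine gap.} Your overall architecture matches the paper's. You keep $K=\bigcup_i A_i$, a union of random prefixes of random orderings of $S$. You delete every $x\notin K$ for which some prefix $P$ of some $A_i$ has $P$ independent and $P+x$ dependent; all such $x$ lie in $\mathrm{span}(K)$. But the per-element bound, which is the entire content of the claim, is only asserted (``should have probability about $p_x k^2$'') and then explicitly deferred in your last paragraph. The heuristic you give for it does not hold up.

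Taken literally, your prefixes are at the wrong scale. With $k=|S|/\alpha(S)$, prefixes of size $\alpha(S)/k=\alpha(S)^2/|S|$ are far below the size at which first circuits form. Take the uniform matroid of rank $r\le |S|/4$ on $S$. Every circuit has size $r+1=\alpha(S)$ and $p_{x,\M|_S}=\alpha(S)/|S|$, so the claim promises $\widetilde{\Omega}(|S|)$ deletions. Yet a prefix of size $\alpha(S)^2/|S|<r$ together with any $x$ is independent, so nothing is ever certified. Nor does subsampling make circuits form at a smaller scale: a random subset still first becomes dependent at size about $\alpha(S)$. Exchangeability alone does not produce the extra factor of $|S|/\alpha(S)$. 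Also, the hypothesis $p_{x,\M|_S}\ge 1/n^2$ is not there to make sampling feasible; see below.

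\textbf{The missing idea is monotonicity in position.} Let $p_{x,r}$ be the probability that $x$ lies in the first circuit when $x$ sits at position $r$ of a uniformly random ordering of $S$. Write $\sigma$ for the ordering of $S\setminus\{x\}$. Let $D$ be the first index with $\sigma(1),\dots,\sigma(D)$ dependent, and $D'\le D$ the first index $j$ with $\{\sigma(1),\dots,\sigma(j)\}\cup\{x\}$ dependent. Then $x$ is in the first circuit iff $D'<D$ and $r\le D$, so $p_{x,r}$ is nonincreasing in $r$.

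Two further facts hold:
\begin{itemize}
\item $p_{x,\M|_S}=\frac{1}{|S|}\sum_r p_{x,r}$;
\item $p_{x,r}=O(n^{-10})$ for $r>t=\Theta(\alpha(S)\log n)$, because a random $t$-subset is dependent with probability $1-n^{-10}$ (Claim 4.7 of \cite{KPS25}).
\end{itemize}
Together with monotonicity these give $p_{x,1}\ge\frac{|S|}{t}\bigl(p_{x,\M|_S}-O(n^{-10})\bigr)\ge\frac{|S|}{2t}p_{x,\M|_S}$. This is exactly where $p_{x,\M|_S}\ge 1/n^2$ is used: to absorb the truncation error.

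Testing an outside $x$ against all prefixes of a random $t$-prefix $A_i$ simulates placing $x$ in front of $A_i$. It succeeds exactly when $D'<D$ and $D'\le t$, i.e., with probability at least $p_{x,1}-O(n^{-10})\ge p_{x,1}/2$. Take $\ell=|S|/(4t)$ independent prefixes. This keeps $x\notin K$ with probability at least $1/2$, and gives $\Pr[x\notin K\text{ and }x\text{ certified}]\ge\frac12\bigl(1-(1-p_{x,1}/2)^{\ell}\bigr)=\Omega(\min(1,\ell p_{x,1}))=\widetilde{\Omega}\bigl(\min(1,p_{x,\M|_S}|S|^2/\alpha(S)^2)\bigr)$.

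So the two factors of $|S|/t$ are the front-placement gain and the number of prefixes. Linearity of expectation and repetition then finish the proof.
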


Note that this statement is actually slightly stronger than the statement that appears in \cite{KPS25}, Lemma 6.7, and only appears implicitly in their proof. For this reason, we include a formal re-derivation in \cref{sec:appendixS2alpha2}.

\subsection{Performing the Core Decomposition}

Next, as mentioned in \cref{sec:introcore}, we require a notion which captures the variation in marginal element probabilities in a matroid. To do so, we introduce the notion of core elements in the matroid:

\begin{definition}
    Let $\M=(E,\I)$ be a matroid and $S\subseteq E$. We let $\mathrm{CORE}_S = \{x \in S: p_{x, \M|_S} \geq \frac{\alpha(S)^2}{|S|^2} \}.$
\end{definition}

Any core element has a large marginal probability, and so \cref{clm:basicRecoverRedundant} guarantees that we can find many redundant elements:

\begin{claim}\label{clm:coreRecovery}
   Let $\M=(E,\I)$ be a matroid and $S\subseteq E$ a globally-optimal set. Then, there is a one round sub-routine making polynomially many queries which recovers 
    \[
    \widetilde{\Omega}(|\mathrm{CORE}_S|)
    \]
    redundant elements with probability $1 - 2^{-\Omega(n)}$.
\end{claim}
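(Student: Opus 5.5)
The plan is to apply \cref{clm:basicRecoverRedundant} directly to the set $S$, and then restrict attention to the core elements when lower bounding the resulting sum.

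\textbf{Checking the hypothesis.} Since $S$ is globally optimal, \cref{rmk:globalisgreedy} shows it is also greedily optimal. Hence for every $x \in S$,
\[
p_{x,\M|_S} \ge \frac{1}{2^{21}|S|\log(n)} \ge \frac{1}{n^2}
\]
for $n$ sufficiently large. (We may take $|S|\ge 1$; this bound also follows from the subset-hitting property with $T=\{x\}$.) So \cref{clm:basicRecoverRedundant} applies and gives a one-round, polynomial-query algorithm. With probability $1-2^{-\Omega(n)}$, it recovers
\[
\widetilde{\Omega}\Bigl(\sum_{x\in S}\min\Bigl(1,\,p_{x,\M|_S}\cdot \tfrac{|S|^2}{\alpha(S)^2}\Bigr)\Bigr)
\]
redundant elements of $S$.

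\textbf{Bounding the sum.} Every term in the sum is nonnegative, so the sum is at least its restriction to $x\in\mathrm{CORE}_S$. For such $x$, the definition of the core gives $p_{x,\M|_S}\ge \alpha(S)^2/|S|^2$. Thus $p_{x,\M|_S}\cdot |S|^2/\alpha(S)^2 \ge 1$, and each core term of the min equals $1$. The restricted sum is therefore exactly $|\mathrm{CORE}_S|$, which gives the claimed $\widetilde{\Omega}(|\mathrm{CORE}_S|)$ bound with the same success probability.

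\textbf{Possible obstacles.} The argument has essentially none. The first subtlety is confirming the $1/n^2$ marginal lower bound needed by \cref{clm:basicRecoverRedundant}, which the greedy-optimality guarantee handles. The second is that the elements found are redundant in $\M|_S$, hence in the current matroid, since deleting them preserves the rank of $S$ and therefore of the whole ground set. The degenerate case $\alpha(S)=0$ cannot occur, because RemoveSmallCircuits ensures there are no loops. So the definition of the core and the ratio $|S|^2/\alpha(S)^2$ are well defined.
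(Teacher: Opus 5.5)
Your proposal is correct and matches the paper's proof: you apply \cref{clm:basicRecoverRedundant} to $S$, whose marginal-probability hypothesis holds by global optimality, and observe that every core term of the minimum equals $1$. Your use of $\ge$ when discarding the non-core terms is slightly more careful than the paper's written equality; as a minor aside, $\alpha(S)\ge 1$ holds trivially because the empty set is always independent, so you do not need RemoveSmallCircuits to rule out $\alpha(S)=0$.
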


\begin{proof}
    Observe that for every element $x \in \mathrm{CORE}_S$, we have 
    \[
    p_{x, \M|_S} \geq \frac{\alpha(S)^2}{|S|^2}.
    \]
    By \cref{clm:basicRecoverRedundant} (and using the fact that $S$ is globally-optimal, thereby satisfying the marginal probability requirement) we know that we can recover 
    \[
    \widetilde{\Omega} \left ( \sum_{i \in S} \min \left (1, p_{i, \M|_S} \cdot \frac{|S|^2}{\alpha(S)^2} \right ) \right ) = \widetilde{\Omega} \left ( \sum_{i \in \mathrm{CORE}_S} \min \left (1, p_{i, \M|_S} \cdot \frac{|S|^2}{\alpha(S)^2} \right ) \right )
    \]
    \[
    = \widetilde{\Omega} \left ( \sum_{i \in \mathrm{CORE}_S} 1 \right ) = \widetilde{\Omega} \left ( |\mathrm{CORE}_S| \right )
    \]
    redundant elements in a single round using only polynomially many queries, thus yielding our claim. 
\end{proof}

In particular, in the remainder of this section, we will assume that $|\mathrm{CORE}_S| \leq |S| / 2$, as otherwise we are guaranteed to be able to find $\widetilde{\Omega} \left ( |\mathrm{CORE}_S| \right )$ redundant elements from $S$ in a single round. Importantly, this means that there are at least $|S| / 2$ \emph{non-core} elements. We denote this set of elements by $S \setminus \mathrm{CORE}_S$.

Immediately, we have the following claim:

\begin{claim}\label{clm:nonCoreMassRecovery}
    Let $\M=(E,\I)$ be a matroid and $S\subseteq E$. Then, there is a one round sub-routine making polynomially many queries which recovers 
    \[
    \widetilde{\Omega} \left ( \sum_{i \in S \setminus \mathrm{CORE}_S} p_{i, \M|_S} \cdot \frac{|S|^2}{\alpha(S)^2}\right )
    \]
    redundant elements with probability $1 - 2^{- \Omega(n)}$.
\end{claim}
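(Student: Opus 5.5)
The plan is to derive this directly from \cref{clm:basicRecoverRedundant}, in the same way \cref{clm:coreRecovery} was obtained. There is one subtlety. \cref{clm:basicRecoverRedundant} requires $p_{x,\M|_S} \geq 1/n^2$ for every $x \in S$. In our setting $S$ is globally optimal, hence greedily optimal by \cref{rmk:globalisgreedy}, so every element satisfies $p_{x,\M|_S} \geq \frac{1}{2^{21}|S|\log n} \geq 1/n^2$. (If one insists on arbitrary $S$, we would instead restrict attention to elements with $p_x \geq 1/n^2$. Those below threshold contribute at most $|S|\cdot \frac{1}{n^2}\cdot\frac{|S|^2}{\alpha(S)^2} \leq n/n^2 \cdot n^2 \cdot$ small, which is negligible, so the statement is trivially true or vacuous.) In the paper's context the claim is applied to globally-optimal $S$, so I will use that hypothesis.

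Invoking \cref{clm:basicRecoverRedundant} gives, in one round with polynomially many queries and probability $1-2^{-\Omega(n)}$,
\[
\widetilde{\Omega}\left(\sum_{x\in S}\min\left(1, p_{x,\M|_S}\cdot\frac{|S|^2}{\alpha(S)^2}\right)\right)
\]
redundant elements. Every term is nonnegative, so we may drop the core terms and keep only $x \in S\setminus \mathrm{CORE}_S$. For such $x$, the definition of $\mathrm{CORE}_S$ gives $p_{x,\M|_S} < \alpha(S)^2/|S|^2$, so $p_{x,\M|_S}\cdot |S|^2/\alpha(S)^2 < 1$. The minimum therefore equals the second argument, and the sum becomes exactly $\sum_{i\in S\setminus\mathrm{CORE}_S} p_{i,\M|_S}\cdot\frac{|S|^2}{\alpha(S)^2}$, which is the claimed bound.

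The only real obstacle is checking the hypothesis of \cref{clm:basicRecoverRedundant}. The rest is the monotonicity of dropping nonnegative terms and unwinding the definition of non-core elements. If the marginal hypothesis were unavailable, I would first try to redo the argument of \cref{clm:basicRecoverRedundant} (deferred to \cref{sec:appendixS2alpha2}) restricted to elements above the $1/n^2$ threshold and absorb the rest into the $\widetilde{\Omega}$.
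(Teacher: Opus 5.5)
Your proposal is correct and is essentially the paper's proof: invoke \cref{clm:basicRecoverRedundant}, drop the nonnegative core terms, and note that for non-core elements $p_{x,\M|_S}\cdot|S|^2/\alpha(S)^2<1$, so the minimum equals its second argument. Stating explicitly that the marginal hypothesis holds because $S$ is globally optimal is a good addition, since the paper uses this only implicitly via ``as before''; your parenthetical about arbitrary $S$ is not sound as written, because the discarded mass is only bounded by $|S|^3/(n^2\alpha(S)^2)$, which need not be negligible when $|S|=\Theta(n)$, but your main argument does not depend on it.
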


\begin{proof}
    As before, we use \cref{clm:basicRecoverRedundant}, with the only modification being that now for $i \in S \setminus \mathrm{CORE}_S$, 
    \[
    \min \left (1, p_{i, \M|_S} \cdot \frac{|S|^2}{\alpha(S)^2} \right) = p_{i, \M|_S} \cdot \frac{|S|^2}{\alpha(S)^2}.
    \]
    Thus, we can recover 
    \[
     \widetilde{\Omega} \left ( \sum_{i \in S} \min \left (1, p_{i, \M|_S} \cdot \frac{|S|^2}{\alpha(S)^2} \right ) \right ) = \widetilde{\Omega} \left ( \sum_{i \in S \setminus \mathrm{CORE}_S}  p_{i, \M|_S} \cdot \frac{|S|^2}{\alpha(S)^2}  \right )
    \]
    redundant elements in $S$, as we desire. 
\end{proof}

As mentioned in \cref{sec:introcore}, ultimately we will create a \emph{trade-off} between the probability mass contained in non-core elements, and the circuit sizes that these non-core elements participate in. To show why such a dichotomy is useful, we have the following generic statement which showcases our ability to make progress by recovering short circuits:

\begin{claim}\label{clm:deleteElementsCircuit}
    Let $W \subseteq S \setminus \mathrm{CORE}_S$ denote a set of elements such that for every $x \in W$, we have recovered a circuit $\mathrm{Circ}_x$ for which $|\mathrm{Circ}_x \cap (S \setminus \mathrm{CORE}_S)| \leq \ell$. Then, we can find a set of $\Omega\left ( \frac{|W|}{\ell}\right )$ redundant elements to delete. 
\end{claim}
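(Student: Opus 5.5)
The plan is a purely combinatorial greedy argument that needs no further oracle queries. It uses the recovered circuits $\mathrm{Circ}_x$ as witnesses of redundancy. The only subtlety is that these circuits may contain other elements we want to delete, as well as core elements. Core elements cause no trouble because we will never delete them. Deleted elements appearing in one another's circuits are handled by an ordering argument.

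\textbf{Construction.} Maintain a candidate pool $P$, initialised to $W$, and an ordered list $D$ of deleted elements, initially empty. While $P \neq \emptyset$:
\begin{enumerate}
\item pick any $x \in P$ and append it to $D$;
\item remove from $P$ the element $x$ together with every element of $\mathrm{Circ}_x \cap (S \setminus \mathrm{CORE}_S)$.
\end{enumerate}
Since $|\mathrm{Circ}_x \cap (S\setminus \mathrm{CORE}_S)| \le \ell$ and $x$ itself lies in this set, each iteration removes at most $\ell$ elements from $P$. Hence the procedure runs for at least $|W|/\ell$ iterations, so $|D| \ge |W|/\ell = \Omega(|W|/\ell)$.

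\textbf{Redundancy.} Write $D = (x_1,\dots,x_m)$ in deletion order. We must show $D \subseteq \mathrm{span}(E\setminus D)$, where $E$ is the current ground set. This gives $\mathrm{rank}(E\setminus D) = \mathrm{rank}(E)$, so all of $D$ can be deleted simultaneously.

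First, for each $j$, the set $\mathrm{Circ}_{x_j}\setminus\{x_j\}$ contains no later element $x_k$ with $k>j$. This holds because:
\begin{itemize}
\item every non-core element of $\mathrm{Circ}_{x_j}$ was removed from $P$ at step $j$, so it can never be deleted afterwards;
\item core elements are never in $W \supseteq D$.
\end{itemize}
Therefore $\mathrm{Circ}_{x_j}\setminus\{x_j\} \subseteq (E\setminus D) \cup \{x_1,\dots,x_{j-1}\}$.

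Now argue by forward induction on $j$ that $x_j \in \mathrm{span}(E\setminus D)$. Assume $x_1,\dots,x_{j-1} \in \mathrm{span}(E\setminus D)$. Then $\mathrm{Circ}_{x_j}\setminus\{x_j\} \subseteq \mathrm{span}(E\setminus D)$. Since $x_j \in \mathrm{span}(\mathrm{Circ}_{x_j}\setminus\{x_j\})$ and span is idempotent, we get $x_j \in \mathrm{span}(E\setminus D)$. The base case $j=1$ is the same argument with no earlier deletions.

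\textbf{Main obstacle.} The only delicate point is this ordering issue: a deleted element may sit inside the witness circuit of another deleted element. A naive ``keep all witnesses'' argument ignores this. The one-sided guarantee (later deletions never appear in earlier circuits) is what makes the forward induction work, and it comes for free from removing each circuit's non-core part from the pool.
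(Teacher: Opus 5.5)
Your proof is correct and is essentially the paper's argument. The paper also greedily picks an unprocessed element of $W$, marks it for deletion, and moves the remaining $W$-elements of its witness circuit into a keep-set $K$, keeping $S\setminus W$ throughout; it then shows by induction that every deleted element lies in $\mathrm{span}(K\cup(S\setminus W))$, with the same count of at most $\ell$ removals per step. Your ordering formulation (a later deletion never lies in an earlier witness circuit) is a slightly cleaner packaging of that same induction.
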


\begin{proof}
    Observe that for every element $x$, $\mathrm{Circ}_x - \{x\}$ witnesses $x$ being a redundant element. To identify our set of elements to delete, we first commit to keeping \emph{all} of the elements in $\bar{W} = S \setminus W$. Thus, the only deletions will be coming from the set $W$. So, the only dependencies we must be concerned with respecting are the dependencies of $\mathrm{Circ}_x$ with $W$ (i.e., the elements in $\mathrm{Circ}_x \cap \bar{W}$). So, we construct a bipartite graph where $L = R = W$. For every left-hand vertex $x$, the corresponding right neighborhood is exactly $\mathrm{Circ}_x \cap \bar{W} - \{x\}$. In this sense, the neighborhood of $x$ captures a sufficient set of elements to witness the redundancy of element $x$.

   Now, we can see that $|\Gamma(x)| = |\mathrm{Circ}_x \cap \bar{W} - \{x \}| \leq |\mathrm{Circ}_x \cap (S - \mathrm{CORE}_S)| \leq \ell$. Thus, we consider a simple peeling procedure for finding a maximal set of dependent elements: we start by setting the set $D = \emptyset$ and $K = \emptyset$ ($D$ will be the elements we delete, namely the redundant elements, and $K$ is the elements we keep). 
   
   In the first round, we start with the first vertex remaining in $L$: denote this vertex by $v$. We then delete $v$ from $L$ and $R$, and add $v$ to $D$. We then delete all vertices in $\Gamma(v)$ from $L$ and $R$, but add $\Gamma(v)$ to $K$. Importantly then, we have the property that $v \in \mathrm{Span}(K \cup \bar{W})$, we $v$ formed a circuit with $\Gamma(v) \cup W$, and is thus in its span. 

   Now, we continue this procedure. In each round, we remove an arbitrary vertex $v$ from $L$ and add it to $D$, and then remove all vertices in $\Gamma(v)$ and add them to $K$. Inductively, we claim that $\forall v \in D$, $v \in \mathrm{Span}(K \cup \bar{W})$. Let us suppose this is true for the first $i$ rounds of the above procedure, and let us analyze the $i+1$st round. We let $D_i$ denote the elements deleted in the first $i$ rounds. By induction, we can see that $D_i \in \mathrm{Span}(K \cup \bar{W})$. Thus,
   \[
   \mathrm{Span}(K \cup \bar{W}) = \mathrm{Span}(K \cup \bar{W} \cup D_i).
   \]
   Now, let $v$ denote the vertex we remove in the $i+1$st round, and let $K$ denote the resulting set of elements that we keep (i.e., adding $\Gamma(v)$ to $K_i$). We claim that $\forall y \in \mathrm{Circ}_v - \{v\}$, $y \in \mathrm{Span}(K \cup \bar{W})$, and thus $v \in \mathrm{Span}(K \cup \bar{W})$. To see this, consider any  element in $\mathrm{Circ}_v - \{v\}$
   Now, let $v$ denote the vertex we remove in the $i+1$st round, and let $K$ denote the resulting set of elements that we keep (i.e., adding $\Gamma(v)$ to $K_i$). We claim that $\forall y \in \mathrm{Circ}_v - \{v\}$, $y \in \mathrm{Span}(K \cup \bar{W})$, and thus $v \in \mathrm{Span}(K \cup \bar{W})$. To see this, consider any  element $y$ in $\mathrm{Circ}_v - \{v\}$. There are several cases:
   \begin{enumerate}
       \item If $y \in \Gamma(v)$, then $y \in K$, and so trivially $y \in \mathrm{Span}(K \cup \bar{W})$.
       \item If $y \in \bar{W}$, then again trivially $y \in \mathrm{Span}(K \cup \bar{W})$.
       \item If $y \notin \Gamma(v)$ and $y \in W$, then $y$ must have been deleted in an earlier round. Thus $y$ was either already added to $K$, in which case $y \in \mathrm{Span}(K \cup \bar{W})$, or $y \in D_i$, in which case $y \in \mathrm{Span}(K \cup \bar{W})$ by our inductive hypothesis. 
   \end{enumerate}
   Thus, $\forall y \in \mathrm{Circ}_v - \{v\}$, $y \in \mathrm{Span}(K \cup \bar{W})$, and so $v \in \mathrm{Span}(K \cup \bar{W})$ as well. 

   Thus, every element in the set $D$ is redundant, and it remains only to bound the size of the set $D$. For this, observe that every time we delete $\Gamma(v)$, we delete at most $\ell$ elements from $L$. Thus, we can repeat this procedure for at least $|W| / \ell$ iterations, adding one element to $D$ in each round. Thus, $|D| = \Omega \left ( \frac{|W|}{\ell}\right )$, as we desire.
\end{proof}

\subsection{Finding Short Circuits for Non-Core Elements}

The final claim in the previous subsection shows that \emph{if we can find} short circuits, then there is a simple algorithm for recovering many redundant elements. In this section, we show how we can indeed find them.

As a first step towards finding these short circuits, below we show that we can efficiently estimate some statistics about these circuits by performing random sampling:

\begin{claim}\label{clm:estimateCircuitStats}
    Let $\M=(E,\I)$ be a matroid and $S\subseteq E$ such that for every $i \in S$, $p_{i, \M|_S} = \widetilde{\Omega}(1 / |S|)$. Then, there is a one round algorithm which for every $i \in S, T \subseteq S$ estimates
    \[
    w_{i, S, T} = \E_{\pi^{(1)}, \dots \pi^{(n^{10})}: i \in C_{\pi^{(1)}}, \dots i \in C_{\pi^{(n^{10})}}}[\min_{j \in [n^{10}]}|C_{\pi^{(j)}} \cap (S\setminus T)|]
    \]
    to additive error $1$, with probability $1 - 2^{- \Omega(n)}$. Additionally, this algorithm finds a circuit $\mathrm{Circ}_{i, T}$ which satisfies 
    \[
    |\mathrm{Circ}_{i, T} \cap (S\setminus T)| \leq w_{i, S, T} + 1.
    \]
\end{claim}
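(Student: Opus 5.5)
The plan is a direct sampling argument in one round, followed by a concentration bound. Fix a large polynomial $N$, say $N = n^{30}$, and in a single round sample $N$ independent uniformly random permutations $\sigma^{(1)},\dots,\sigma^{(N)}$ of $S$. For each one, query every prefix of $\sigma^{(t)}$ in $\M|_S$. Each query set is a subset of $S$, so this is polynomially many independence queries, all issued non-adaptively. From the answers we locate the first dependent prefix. To identify the unique circuit $C_{\sigma^{(t)}}$ inside it, we also query, in the same round, each prefix with one element removed: $y$ lies in the circuit exactly when removing $y$ from the first dependent prefix makes it independent. These are still polynomially many queries fixed in advance, since for every prefix length and every removed element we can query all such sets. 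Thus after one round we know $C_{\sigma^{(t)}}$ for every $t$.

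For a fixed $i$, the samples with $i\in C_{\sigma^{(t)}}$ are i.i.d.\ draws from the conditional distribution of $C_\pi$ given $i\in C_\pi$. Since $p_{i,\M|_S}=\widetilde\Omega(1/|S|)\ge 1/n^2$, a Chernoff bound shows that with probability $1-2^{-\Omega(n)}$ we obtain at least $M := n^{10}\cdot n^{5}$ such samples. Group the first $M$ of them into $n^{5}$ disjoint blocks of $n^{10}$ samples each. For every $T\subseteq S$, the quantity $Z_b(T)=\min_{j\in \text{block } b}|C\cap(S\setminus T)|$ is then an exact i.i.d.\ sample of the random variable whose expectation is $w_{i,S,T}$, and it takes values in $[0,|S|]\subseteq[0,n]$. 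Our estimator is $\hat w_{i,S,T}=\frac{1}{n^5}\sum_b Z_b(T)$, computed offline with no further queries.

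Hoeffding's inequality gives $\Pr[|\hat w - w|>1/2]\le 2\exp(-2n^5\cdot\frac14/n^2)=2^{-\Omega(n^3)}$. A union bound over all $n$ choices of $i$ and all $2^n$ choices of $T$ still leaves failure probability $2^{-\Omega(n)}$. This union over exponentially many $T$ is the only delicate point, and the reason for choosing $n^5$ blocks. For the circuit, let $\mathrm{Circ}_{i,T}$ be the sampled circuit containing $i$ that minimizes $|C\cap(S\setminus T)|$ over all $M$ samples. Its value is at most $\min_b Z_b(T)\le \hat w_{i,S,T}\le w_{i,S,T}+1/2\le w_{i,S,T}+1$, because a minimum is at most an average. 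This is a genuine circuit of $\M|_S$ containing $i$, so it is also a circuit of $\M$.

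The main thing to watch is the interpretation of the expectation: the $n^{10}$ permutations in the definition are conditioned on $i\in C_\pi$, and our blocks realize exactly that conditional distribution through rejection sampling. This is why we need the lower bound on $p_{i,\M|_S}$, which keeps the required number of raw samples polynomial.
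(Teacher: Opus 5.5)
Your proposal is correct and follows the same route as the paper. You rejection-sample permutations of $S$ whose first circuit contains $i$, which needs only polynomially many because $p_{i,\M|_S}\ge 1/n^2$. You then average the block minima, apply a Chernoff/Hoeffding bound, and take the best sampled circuit, whose value is at most the empirical average.

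You also make explicit two points the paper leaves implicit. First, you show how to read off $C_\sigma$ in the same round using the prefix queries with one element deleted. Second, you carry out the union bound over all $2^{|S|}$ choices of $T$, which your $n^5$ blocks (failure probability $2^{-\Omega(n^3)}$ per pair $(i,T)$) comfortably absorb.
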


\begin{proof}
    The algorithm simply samples many random groups of $n^{10}$ permutations $\pi$ and heuristically estimates $w_{i, S, T}$, for every $i, T$. For each $i$, a random permutation $\pi$ satisfies $i \in C_{\pi}$ with probability $p_{i,\M|_S} = \widetilde{\Omega}(1 / |S|)$, and thus sampling $\mathrm{poly}(|S|) \leq \mathrm{poly}(n)$ many permutations $\pi$ yields $\mathrm{poly}(n)$ uniformly random permutations in the support of each expectation. Now, since 
    \[
    w_{i, S, T} = \E_{\pi^{(1)}, \dots \pi^{(n^{10})}: i \in C_{\pi^{(1)}}, \dots i \in C_{\pi^{(n^{10})}}}[\min_{j \in [n^{10}]}|C_{\pi^{(j)}} \cap (S\setminus T)|] \in [0,n],
    \]
    a simple Chernoff bound yields that we can empirically estimate $w_{i, S, T}$ to additive error $1$ with probability $1 - 2^{-n}$ by sampling $\leq n^{10}$ random such groups of permutations.

    To see the constructive aspect of the result, observe that because our empirical estimates $\hat{w}_{i, S, T} \leq w_{i, S, T} + 1$, there must be one explicit circuit we recover which satisfies $|\mathrm{Circ}_{i, T} \cap (S\setminus T)| \leq w_{i, S, T} + 1$.
\end{proof}

Finally, we formally prove the tradeoff between the probability mass of the non-core elements and the circuit sizes we recover: 

\begin{lemma}\label{lem:circuitRedundantTradeoff}
    Let $\M=(E,\I)$ be a matroid and $S\subseteq E$ be globally optimal. Let $\ell = \sum_{i \in S - \mathrm{CORE}_S} p_{i, \M|_S}$. Then, there is a one round sub-routine using polynomially many queries which, with probability $1 - 2^{-\Omega(n)}$, recovers:
    \begin{enumerate}
        \item $\widetilde{\Omega}\left (\min\left(|S|, \frac{\ell \cdot |S|^2}{\alpha(S)^2}\right) \right )$ redundant elements. 
        \item $\widetilde{\Omega} \left ( \frac{|S|}{\ell} \right )$ redundant elements.
    \end{enumerate}
\end{lemma}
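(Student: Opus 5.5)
The plan is to run, in one round, the sampling of \cref{clm:estimateCircuitStats} and the query set of \cref{clm:basicRecoverRedundant} in parallel, and then prove each item separately. Both subroutines use polynomially many queries and succeed with probability $1-2^{-\Omega(n)}$, so a union bound covers them together.

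\textbf{Item 1.} This should follow almost directly from \cref{clm:nonCoreMassRecovery}. That claim recovers $\widetilde{\Omega}\bigl(\sum_{i\in S\setminus \mathrm{CORE}_S} p_{i,\M|_S}\, |S|^2/\alpha(S)^2\bigr)=\widetilde{\Omega}(\ell|S|^2/\alpha(S)^2)$ redundant elements. The precondition $p_{x,\M|_S}\ge 1/n^2$ of \cref{clm:basicRecoverRedundant} holds because $S$ is globally optimal: apply \cref{rmk:globalisgreedy}, or take $T=\{x\}$ in \cref{def:globallyOpt}. The number of recovered redundant elements can never exceed $|S|$, which gives the $\min$ with $|S|$.

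\textbf{Item 2.} Here I would use short-circuit witnesses. I assume $|\mathrm{CORE}_S|\le |S|/2$; otherwise \cref{clm:coreRecovery} already yields $\widetilde{\Omega}(|S|)\ge\widetilde{\Omega}(|S|/\ell)$ redundant elements, using $\ell\ge \widetilde{\Omega}(1)$ from the hitting bound below. Set $N=S\setminus\mathrm{CORE}_S$, so that $\ell=\E_\pi[|C_\pi\cap N|]$, where $\pi$ is a random permutation of $S$ in $\M|_S$. Define the set of bad non-core elements
\[
B=\{x\in N : \Pr_\pi[\,|C_\pi\cap N|\le 4\ell \log n \mid x\in C_\pi\,] < 1/2\}.
\]
The main step is to show $|B|\le |N|/2$ using the subset-hitting property. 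Take $T=B$ in \cref{def:globallyOpt}. This gives $\Pr_\pi[C_\pi\cap B\neq\emptyset]\ge |B|/(2^{21}|S|\log n)$.

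The obstacle is that ``$x$ is bad'' is a conditional statement about each individual $x$, whereas hitting $B$ is an event about the set. I would handle this by summing over elements rather than sets. Write $L$ for the event that $|C_\pi\cap N|>4\ell\log n$. By Markov's inequality, $\Pr[L]\le 1/(4\log n)$. For each $x\in B$, badness gives $\Pr[x\in C_\pi, L]\ge p_x/2$. Summing over $x\in B$,
\[
\E\bigl[|C_\pi\cap B|\,\mathbf 1_L\bigr]\ge \tfrac12\sum_{x\in B}p_x \ge \tfrac12\Pr[C_\pi\cap B\ne\emptyset]\ge \frac{|B|}{2^{22}|S|\log n}.
\]
The left side is at most $\E[|C_\pi\cap N|\mathbf 1_L]\le \ell$, which only gives $|B|\lesssim \ell|S|\log n$ and is too weak. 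So I will instead pick the threshold $t$ adaptively. I would partition $B$ dyadically by the typical size $|C_\pi\cap N|\in[2^k,2^{k+1})$ of the circuits through each $x$, and take $B_k$ to be the largest class, losing a $\log n$ factor. Applying the hitting bound to $B_k$, each circuit hitting $B_k$ has $\ge 2^k$ non-core elements with constant conditional probability, so $\ell\gtrsim 2^k |B_k|/(|S|\log^2 n)$. Hence if most of $N$ needed circuits of size $\gg \ell\,\mathrm{polylog}(n)$, we would have $|B_k|=\Omega(|S|/\log n)$, a contradiction. Making this conditional-to-set argument rigorous, with the right polylog threshold, is the step I expect to be hardest.

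With $|B|\le |N|/2$ established, every $x\in W:=N\setminus B$ has a constant fraction of its first circuits with $|C_\pi\cap N|\le \widetilde O(\ell)$. The minimum over $n^{10}$ such circuits is then $\widetilde O(\ell)$ with overwhelming probability, so $w_{x,S,\mathrm{CORE}_S}\le \widetilde O(\ell)$. \cref{clm:estimateCircuitStats}, invoked with $T=\mathrm{CORE}_S$, then returns an explicit circuit $\mathrm{Circ}_x$ with $|\mathrm{Circ}_x\cap N|\le\widetilde O(\ell)$. Since the algorithm does not know $p_x$ exactly, it takes as $W$ all non-core $x$ whose found circuit meets this bound, estimating $\mathrm{CORE}_S$ from sampled marginals; a $\mathrm{polylog}$ slack in the core threshold is harmless. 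Since $|W|\ge |S|/4$, \cref{clm:deleteElementsCircuit} yields $\Omega(|W|/\widetilde O(\ell))=\widetilde\Omega(|S|/\ell)$ redundant elements.
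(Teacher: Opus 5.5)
\textbf{Gap in Item 2: the per-element-to-set transfer.} Item 1 is fine, and so is your overall plan for Item 2: get explicit short circuits from \cref{clm:estimateCircuitStats} and feed them into \cref{clm:deleteElementsCircuit}. The central step of Item 2, however, has a genuine gap, exactly where you suspected. You call $x\in N$ bad when $\Pr_\pi[|C_\pi\cap N|\le t\mid x\in C_\pi]<1/2$. Your dyadic fix then asserts that, because each $x\in B_k$ lies in a long circuit with constant conditional probability, a circuit conditioned on hitting $B_k$ is also long with constant probability. That transfer is false. Suppose that with probability $p$ the first circuit is a single long circuit containing all of $B_k$, and with probability $q=p|B_k|$ it is a short circuit containing one uniformly random element of $B_k$. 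Then each $x\in B_k$ is in a long circuit with conditional probability exactly $1/2$, yet $\Pr[\text{long}\mid C_\pi\cap B_k\neq\emptyset]=1/(1+|B_k|)$. The hitting bound then only gives $\ell\gtrsim 2^k/(|S|\log n)$, which is the same factor-$|B_k|$ loss you already found.

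Worse, the intermediate claim $|B|\le|N|/2$ is not implied by subset hitting plus $\E[|C_\pi\cap N|]=\ell$, which is all your argument uses. Take a first circuit that is a uniformly random $s$-subset with probability $1-1/K$, and a uniformly random subset of size slightly above $sK$ with probability $1/K$, where $K$ is a large polylog. This distribution satisfies $p_T\ge|T|/(2|S|)$ for all $T$ and has $\ell\approx 2s$. Yet every element is bad for the threshold $4\ell\log n$, or for any fixed polylog multiple of $\ell$.

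\textbf{The missing idea: an inverse-polynomial badness threshold.} The minimum over $n^{10}$ samples in \cref{clm:estimateCircuitStats} already detects short circuits that occur with merely inverse-polynomial conditional probability, so badness should be defined at that level.
\begin{itemize}
\item Take $t=2^{25}\ell\log n$, and let $Q$ be the set of $x\in N$ with $\Pr_\pi[|C_\pi\cap N|\le t\mid x\in C_\pi]<1/(2n)$. Every $x\in N\setminus Q$ then receives, with high probability, an explicit circuit with at most $t+2$ non-core elements.
\item For $Q$, a union bound over elements costs only a factor $n$, because $\sum_{x\in Q}p_{x,\M|_S}=\E[|C_\pi\cap Q|]\le n\,p_{Q,\M|_S}$.
\item Hence $\Pr[C_\pi\cap Q\neq\emptyset,\ |C_\pi\cap N|\le t]<p_{Q,\M|_S}/2$, and so $\ell> t\,p_{Q,\M|_S}/2$.
\item If $|Q|>|S|/4$, subset hitting gives $p_{Q,\M|_S}\ge 1/(2^{23}\log n)$, and therefore $\ell>2\ell$. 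This is a contradiction since $\ell>0$.
\end{itemize}
Up to presentation, this is the paper's argument; the paper phrases it via Markov plus pigeonhole on the set of elements whose recovered circuit is long. It removes the need for any dyadic partition.

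\textbf{Minor point.} In the core-heavy case $N$ can be tiny, so $\ell\ge\widetilde\Omega(1)$ does not follow there. The second bound should be read as capped at $|S|$, as the paper implicitly does.
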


\begin{proof}
    To see the first item, we can simply plug in $\ell = \sum_{i \in S - \mathrm{CORE}_S} p_{i,\M|_S}$ to \cref{clm:nonCoreMassRecovery}. 

    To see the second item, recall that if $|\mathrm{CORE}_S| \geq |S| / 2$, then by \cref{clm:coreRecovery}, we can recover $\widetilde{\Omega}(|S|)$ redundant elements. So, we assume that $|S - \mathrm{CORE}_S| \geq |S| / 2$.

    Next, we observe that
    \[
    \ell = \sum_{i \in S - \mathrm{CORE}_S} p_{i, \M|_S} = \sum_{i \in S - \mathrm{CORE}_S} \Pr_{\pi}[i \in C_{\pi}] = \E_{\pi}[|(S - \mathrm{CORE}_S)| \cap C_{\pi}].
    \]
    In words, $\ell$ is exactly the number of elements we expect to see in each circuit that come from the non-core elements. Now, let us use the one round sub-routine from \cref{clm:estimateCircuitStats}, and let $Q$ denote the set of elements $i$ for which $|\mathrm{Circ}_{i, \mathrm{CORE}_S} \cap (S - \mathrm{CORE}_S)| \geq 2^{26} \ell \log(n)$. This implies that for all $i \in Q$, $w_{i, S, \mathrm{CORE}_S} \geq 2^{26} \ell \log(n) - 1 \geq 2^{25} \ell \log(n)$. We claim that $|Q| \leq |S| / 4$. Indeed, suppose for the sake of contradiction that $|Q| > |S|/4$. Then, because $S$ is globally optimal, it is the case that $p_{Q, \M|_S} \geq \frac{|Q|}{2^{21}|S|\log(n)} \geq \frac{1}{2^{23}\log(n)}$. Now, we claim that 
    \begin{claim}\label{clm:expectedQIntersect}
    \[
    \E_{\pi: C_{\pi} \cap Q \neq \emptyset}[|C_{\pi} \cap (S - \mathrm{CORE}_S)|] > 2^{24} \ell \log(n).
    \]
    \end{claim}
    \begin{proof}[Proof of \cref{clm:expectedQIntersect}]
    Indeed, if we suppose for the sake of contradiction that 
        \[
    \E_{\pi: C_{\pi} \cap Q \neq \emptyset}[|C_{\pi} \cap (S - \mathrm{CORE}_S)|] \leq 2^{24} \ell \log(n),
    \]
    then by Markov's inequality, with probability $\geq \frac{1}{2}$, when sampling a permutation $\pi$ such that $C_{\pi} \cap Q \neq \emptyset$, we have that $|C_{\pi} \cap (S - \mathrm{CORE}_S)| \leq 2^{25} \ell \log(n)$. Because $Q \subseteq S \subseteq E$, we know that $|Q| \leq n$. By a pigeonhole argument, we know that there exists an element $i \in Q$ then such that 
    \[
    \Pr_{\pi: C_{\pi} \cap Q \neq \emptyset}[i \in C_{\pi} \wedge |C_{\pi} \cap (S - \mathrm{CORE}_S)| \leq 2^{25} \ell \log(n)] \geq \frac{1}{2n}.
    \]
    Thus, we get that 
    \[
    \Pr_{\pi: C_{\pi} \cap Q \neq \emptyset}[i \in C_{\pi}] \cdot  \Pr_{\pi: C_{\pi} \cap Q \neq \emptyset} [|C_{\pi} \cap (S - \mathrm{CORE}_S)| \leq 2^{25} \ell \log(n) \mid i \in C_\pi] \geq \frac{1}{2n},
    \]
    which means that 
    \[
    \Pr_{\pi: C_{\pi} \cap Q \neq \emptyset} [|C_{\pi} \cap (S - \mathrm{CORE}_S)| \leq 2^{25} \ell \log(n) \mid i \in C_\pi] \]
    \[
    = \Pr_{\pi: i \in C_{\pi} } [|C_{\pi} \cap (S - \mathrm{CORE}_S)| \leq 2^{25} \ell \log(n)]\geq \frac{1}{2n}.
    \]
    However, the above implies that
    \[
    w_{i, S, \mathrm{CORE}_S} = \E_{\pi^{(1)}, \dots \pi^{(n^{10})}: i \in C_{\pi^{(1)}}, \dots i \in C_{\pi^{(n^{10})}}}[\min_{j \in [n^{10}]}|C_{\pi^{(j)}} \cap (S - \mathrm{CORE}_S)|] \leq  2^{25} \ell \log(n) + 1,
    \]
    as there is a $(1 - \frac{1}{2n})^{n^{10}} \leq 2^{-n}$ probability of not seeing a circuit of length $\leq 2^{25} \ell \log(n)$ when sampling $n^{10}$ circuits that contain element $i$. Thus, $w_{i, S, \mathrm{CORE}_S} \leq 2^{25} \ell \log(n) (1 - 2^{-n}) + n \cdot 2^{-n} \leq 2^{25} \ell \log(n) + 1$.
    However, this contradicts our choice of $Q$, as we defined it such that for all $i \in Q$, $w_{i, S, \mathrm{CORE}_S} \geq 2^{26} \ell \log(n)$.
    \end{proof}

    So, \cref{clm:expectedQIntersect} means
    \[
    \E_{\pi: C_{\pi} \cap Q \neq \emptyset}[|C_{\pi} \cap (S - \mathrm{CORE}_S)|] > 2^{24} \ell \log(n).
    \]
    We then get the simple bound that 
    \[
    \E_{\pi}[|(S - \mathrm{CORE}_S) \cap C_{\pi}|] \geq \Pr_{\pi}[Q \cap C_{\pi} \neq \emptyset] \cdot \E_{\pi}[\left |(S - \mathrm{CORE}_S) \cap C_{\pi} \right |\mid Q \cap C_{\pi} \neq \emptyset] 
    \]
    \[
    = p_{Q, \M|_S} \cdot  \E_{\pi:Q \cap C_{\pi} \neq \emptyset}[|(S - \mathrm{CORE}_S) \cap C_{\pi}|] > p_{Q, \M|_S} \cdot (2^{24} \ell \log(n)).
    \]
    Recall that if $|Q| > |S| / 4$, this means that $p_{Q, \M|_S} \geq \frac{1}{2^{23} \log(n)}$, so
    \[
    \E_{\pi}[|(S - \mathrm{CORE}_S)| \cap C_{\pi}] \geq \frac{1}{2^{23} \log(n)} \cdot (2^{24} \ell \log(n)) > \ell
    \]
    which yields a contradiction with the fact that 
    \[
    \ell = \sum_{i \in S - \mathrm{CORE}_S} p_{i, \M|_S} = \sum_{i \in S - \mathrm{CORE}_S} \Pr_{\pi}[i \in C_{\pi}] = \E_{\pi}[|(S - \mathrm{CORE}_S)\cap C_{\pi}| ].
    \]
    Hence $|Q| \leq |S|/4$.

    Finally, this means that for $Z = S - \mathrm{CORE}_S - Q$, $|Z| = \Omega( |S|)$, and for every element $i \in Z$, we can recover via \cref{clm:estimateCircuitStats} a circuit $\mathrm{Circ}_{i, \mathrm{CORE}_S}$ such that $|\mathrm{Circ}_{i, \mathrm{CORE}_S} \cap (S - Z)| \leq |\mathrm{Circ}_{i, \mathrm{CORE}_S} \cap (S - \mathrm{CORE}_S)| < 2^{26} \ell \log(n)$. By \cref{clm:deleteElementsCircuit}, this means that we can recover $\widetilde{\Omega}(|S| / \ell)$ redundant elements in a single round. The probability bound follows because each individual claim holds with probability $1 - 2^{- \Omega(n)}$.
\end{proof}

\subsection{Proof of \cref{thm:balancedProgress}}

Finally, we can present the proof of our main theorem in this section, by simply conditioning on the value of $\ell$ in \cref{lem:circuitRedundantTradeoff}:

\begin{proof}[Proof of \cref{thm:balancedProgress}]
     \cref{lem:circuitRedundantTradeoff} guarantees that we can always delete 
     \[
     \widetilde{\Omega} \left (\max \left ( \min\left(|S|, \frac{\ell \cdot |S|^2}{\alpha(S)^2}\right), \frac{|S|}{\ell}\right ) \right )
     \]
     redundant elements, where $\ell$ is the probability mass in the non-core elements. In particular, the worst case behavior is when 
     \[
    \frac{\ell \cdot |S|^2}{\alpha(S)^2} = \frac{|S|}{\ell},
     \]
     which occurs when $\ell = \frac{\alpha(S)}{\sqrt{|S|}}$. For this value, we are only guaranteed to delete 
     \[
     \widetilde{\Omega}\left (\min\left(|S|, \frac{|S|^{3/2}}{\alpha(S)}\right)\right)
     \]
     redundant elements, as claimed. The probability bound follows from the probability bound of \cref{lem:circuitRedundantTradeoff}.
\end{proof}

\section{An Algorithm with $\widetilde{O}(n^{3/7})$ Round Complexity}\label{sec:analysisProg}

In the previous sections, we developed a new matroid decomposition that iteratively peels off globally optimal sets, and established conditions under which for any such set $S$, we can make progress by either identifying a large independent subset to contract or a large number of redundant elements to delete. In this section, we synthesize these components into a single algorithm that carefully utilizes these two forms of progress, and finds a basis in $\widetilde{O}(n^{3/7})$ rounds.
We will prove here our main theorem:

\begin{theorem}\label{thm:3-7rounds}
    For an arbitrary matroid $\M$ on $n$ elements, there is an $\widetilde{O}(n^{3/7})$ round algorithm, making polynomially many independence queries, which recovers a basis of $\M$ with high probability. 
\end{theorem}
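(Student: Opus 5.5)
The plan is an amortized potential argument with target per-round progress $f = n^{4/7}/\mathrm{polylog}(n)$. The algorithm proceeds in \emph{phases}. In each phase we run \cref{alg:iterativePeel} on the current matroid, but after peeling each $S_j$ we test an early-stopping rule. Concretely, after $S_j$ is peeled we estimate $\alpha(S_{j'})$ for all $j' \le j$ using one extra round of random prefix queries. We then stop as soon as one of the following holds:
\begin{itemize}
\item[(i)] $\sum_{j'\le j} |S_{j'}|^{3/2}/\alpha(S_{j'}) \ge j f$;
\item[(ii)] $\alpha(S_j) n/|S_j| \ge j f$.
\end{itemize}
In case (i) we run the one-round deletion of \cref{thm:balancedProgress} in parallel on every $S_{j'}$. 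These deletions are simultaneously valid because each $S_{j'}$ is disjoint and its witnesses lie inside $\M|_{S_{j'}}$. In case (ii) we contract the independent set of \cref{lem:independentProg} obtained in $\M_j$. Either way, the phase spends $O(j)$ rounds and removes $\widetilde\Omega(jf)$ elements. The same holds when the min with $|S|$ binds in \cref{thm:balancedProgress}, since then we delete all of $S_{j'}$ up to polylogs. The phase also ends if \cref{alg:iterativePeel} terminates on its own, which happens when $\alpha(S_k)/|S_k|\ge 1/\log n$ (giving a contraction of $\widetilde\Omega(n)$) or when $|S_k|>n/2$. Since each phase removes on average $\widetilde\Omega(f)$ elements per round, summing over phases gives $\widetilde O(n/f)=\widetilde O(n^{3/7})$ rounds. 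The base case is handled by KUW once $n$ is small.

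The heart of the proof is showing that no phase can run long without triggering a stop. Suppose the peeling reaches $S_1,\dots,S_\gamma$ with (i) and (ii) failing at every prefix. Then for every $j$ we have $\alpha(S_j) < jf|S_j|/n$ and $\sum_{j'\le j}|S_{j'}|^{3/2}/\alpha(S_{j'}) < jf$. Combining these gives $\sum_{j'\le j} n|S_{j'}|^{1/2}/(j' f) < jf$. Together with the growth constraints of \cref{lem:alphaGrowthFirstStatement}, namely $\alpha(S_{a_i})=\Omega(i^2)$ within a size class, at most $O(\sqrt{2^\ell})$ sets per class, and $\alpha/|S|$ monotone up to constants, this severely restricts the possible sequences.

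Following the overview, I would group the sets into $\log n$ dyadic size classes. I would then track the \emph{dominant class}, meaning the class holding the most sets so far. Let $a_t$ be the number of rounds between the $(t-1)$st and $t$th change of the dominant class, and let $b_t$ be the dominant size during that interval. Within an interval with size $\approx b$, the $i$th set in the class has $\alpha=\Omega(i^2)$. Failure of (ii) then forces $i^2 \lesssim jfb/n$. Failure of (i) bounds $\sum b^{3/2}/\alpha$ by about $jf$. These two facts yield $a_t=\widetilde O(fb_t/n)$ and a recursion expressing $b_{t+1}$ in terms of $b_t$ and $f$, with closed form $b_t = f^{7\cdot 2^t-6}n^{-(4\cdot 2^t-4)}\log^{O(2^t)} n$.

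With $f$ slightly below $n^{4/7}$, the ratio $f^7/n^4$ is less than $1$ by a polylog factor, so $b_t$ and hence $a_t$ decay doubly exponentially. This makes the total number of rounds $\sum_t a_t$ in a non-stopping run small. It follows that the run must end by peeling $\Omega(n)$ elements, or by the $|S_k|>n/2$ exit, while having spent only few rounds. In that endgame I would show that one round can still recover $\widetilde\Omega(n^{3/4})$ elements. To do this, pick the size class containing $\widetilde\Omega(n)$ total mass. Balance \cref{thm:balancedProgress} summed over that class against the contraction from \cref{lem:independentProg} applied to its last set, using $\alpha\ge i^2$ and $\gamma\le\sqrt{2^\ell}$. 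Dividing this gain by the few rounds already spent gives average progress at least $f$.

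I expect the main obstacle to be deriving the recursion for $(a_t,b_t)$ rigorously. When the dominant class switches, the earlier sets from the old class still contribute to the prefix sums in (i). The $\alpha$-monotonicity across classes must also be used to transfer the lower bound on $\alpha/|S|$ from the old dominant class to the new one. I would first try a potential argument that charges each new class against the accumulated $\alpha/|S|$ ratio of the previous class.
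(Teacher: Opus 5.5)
Your overall route is the paper's: the early-stopping peeling of \cref{alg:new-decomp-2}, dominant-size-class chains with $a_i=\widetilde O(fb_i/n)$ and $b_i=\widetilde O(f^4a_{i-1}^2/n^2)$, and the $\widetilde\Omega(n/b^{1/4})$ endgame. However, your stopping rule (i) is unsound as stated. You credit every peeled set with $|S|^{3/2}/\alpha(S)$. Your remark about the binding $\min$ in \cref{thm:balancedProgress} does not repair this: it gives $\widetilde\Omega(|S|)$ deletions, not the $|S|^{3/2}/\alpha(S)\gg|S|$ you credited when $\alpha(S)\ll\sqrt{|S|}$. For instance, a first peeled set with $\alpha(S)=O(1)$ and $|S|=\Theta(\sqrt n)$ triggers (i) at $j=1$, and (ii) does not fire since $\alpha n/|S|=O(\sqrt n)<f$. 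Yet at most $\sqrt n\ll f$ elements can be deleted, so ``each phase removes $\widetilde\Omega(jf)$'' fails.

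The paper avoids this with a good/bad split. A set with $|\mathrm{CORE}_S|>|S|/2$ is credited $|S|$ (\cref{clm:coreRecovery}). Any other set has a non-core $x$, and $\alpha(S)^2/|S|^2>p_{x,\M|_S}\ge 1/(2^{21}|S|\log n)$ forces $\alpha(S)\gtrsim\sqrt{|S|/\log n}$, so its credit $|S|^{3/2}/\alpha(S)$ is $\widetilde O(|S|)$ and genuinely achievable. Once credits are capped, though, failure of (i) no longer gives $\alpha\gtrsim b^{3/2}/f$ for good sets, so your recursion does not cover them. The paper therefore forms dominant classes among bad sets only. It starts the chains after the last index $a_0$ at which good sets are a majority, and proves a separate base case $a_0=\widetilde O(f^2/n)$. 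Non-stopping forces $\widetilde\Omega(a_0)$ good sets of size at most $4f$ in one size class, the last of which has $\alpha/|S|=\widetilde\Omega(a_0^2/f)$. Cross-class monotonicity of $\alpha/|S|$ then compares this with the bound $\widetilde O(a_0f/n)$ that (ii) gives at the first bad chain. This base case is where $b_1=\widetilde O(f^8/n^4)$ in your closed form comes from. The endgame likewise needs that good sets contain at most $n/4$ elements.

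Second, the recursion you flag as the main obstacle needs no cross-class potential, and the bound on $b_i$ cannot come from the previous class anyway. The earlier sets' contributions to the prefix sum in (i) are nonnegative and can simply be dropped, so non-stopping bounds the sum over the new dominant class alone. What you are missing is the index comparison supplied by dominance. At every index of a chain the dominant class contains at least a $1/\log n$ fraction of the bad sets, which are at least half of all sets. Hence its $j$th member sits at index $r_{\ell,j}=\widetilde O(j)$ (\cref{clm:number-of-rounds}), and the tie-breaking rule places the first post-switch member at index $a_{i-1}+1$. Non-stopping of (ii) at that index gives $\alpha\lesssim a_{i-1}fb_i/n$. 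Non-stopping of (i), restricted to the class and combined with within-class monotonicity of $\alpha$, gives $\alpha\gtrsim b_i^{3/2}/f$. Together these yield $b_i\lesssim f^4a_{i-1}^2/n^2$. At the end of the chain, $\alpha=\Omega(q_i^2)$ set against (ii) gives $a_i=\widetilde O(q_i)=\widetilde O(fb_i/n)$.
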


\subsection{A Guaranteed Progress Algorithm}

Our strategy does not guarantee substantial progress in every single round. Instead, we adopt an amortized analysis framework and show that while some rounds may be spent waiting for right structural conditions to develop, the algorithm must eventually terminate by achieving a target {\em average progress} per round.
Specifically, we define an {\em average progress} target
of $f = n^{4/7} / \log^c(n)$ for a sufficiently large constant $c$. Progress is measured as the number of elements we can either contract (as an independent set) or delete (as redundant elements). Our main algorithm, presented below, runs the peeling process and tracks cumulative progress. It terminates as soon as the average progress per round meets or exceeds $f$.
Clearly, this rate of progress is sufficient to find a basis in $\widetilde{O}(n^{3/7})$ rounds.

\begin{algorithm}[H]
\caption{GuaranteedProgressDecomposition$(\M=(E,\I))$}\label{alg:new-decomp-2}
    $\M\gets \text{RemoveSmallCircuits}(\M)$.\\
    $T\gets 0$.\\
    $i \gets 0$.\\
    \While{$|E|\geq n/2$} {
        $i\gets i+1$.\\
        $S_i\gets \text{GloballyOptimalConstructor}(\M)$. \\
        Let $\widehat \alpha(S_i)$ be the estimation of $\alpha(S_i)$, which satisfies $(\alpha(S_i)-1)/2\leq \widehat \alpha(S_i)\leq 2\alpha(S_i)$ with high probability. \tcp{see Definition 4.6 and Claim 4.8 of \cite{KPS25}}
        \If{$\frac{\widehat \alpha(S_i)}{|S_i|} n \geq i\cdot f$} {
            \Return \label{line:return-contract}
        }
        \If{$S_i$ is good} {
            $T\gets T+|S_i|$.
        }
        \Else{$T\gets T+\frac{|S_i|^{3/2}}{\widehat\alpha(S_i)}$.}
        \If{$T \geq i\cdot f$} {
            \Return \label{line:return-delete}
        }
        $\M\gets \M\setminus S_i$.
    }
\end{algorithm}

Our algorithm relies on categorizing each peeled set as either {\em good} or {\em bad}, as defined below. 

\begin{definition}
    Let the sets returned by \cref{alg:new-decomp-2} be denoted by $S_1, \dots S_{m}$, we say that a set $S_{i}$ is \textbf{good} if $p_{x,\M|_{S_{i}}} \geq \alpha(S_{i})^2 / |S_{i}|$ for more than half of the elements in $x\in S_{i}$, as we can then recover $\widetilde{\Omega}(|S_{i}|)$ redundant elements in a single additional round.

    We say that $S_i$ is \textbf{bad} if the above does not hold, as we can then only recover $\widetilde{\Omega}\left ( \frac{|S_{i}|^{3/2}}{\alpha(S_{i})} \right )$ redundant elements, as per \cref{thm:balancedProgress}.
\end{definition}

Clearly, if the algorithm returns on \cref{line:return-contract} or \cref{line:return-delete}, we make $\widetilde \Omega(f)$ progress per round on average. So, in the following, we prove that the algorithm will indeed always return on \cref{line:return-contract} or \cref{line:return-delete}.

\subsection{Analysis of Guaranteed Progress Decomposition Algorithm}

We will analyze \Cref{alg:new-decomp-2} by carefully tracking the sets peeled off during the decomposition and categorizing them both by size and progress type, namely, good or bad. For bad sets, those where the immediate progress is suboptimal, we group them by size and focus on the most populous (dominant) category. By establishing lower bounds on the $\alpha$-values of these bad sets and bounding how often the dominant category can change, we show that sustained periods of suboptimal progress necessarily lead to a rapid increase in the $\alpha$-parameter. This, in turn, forces a transition to rounds with much higher progress, preventing the algorithm from stagnating.

The rest of the section formalizes this plan, introducing the necessary definitions, invariants, and potential-function style arguments to make the amortized analysis rigorous.

For the sake of contradiction, suppose that the algorithm does not terminate on \cref{line:return-contract} or \cref{line:return-delete}, and in the process peels off sets $S_1, \dots S_m$. We first observe that for every $i\in [m]$, it must be the case that $\alpha(S_i)/|S_i|<1/\log (n)$. Therefore, \cref{lem:alphaGrowth} applies for $S_1,\dots S_m$.

\begin{claim}
For any $i\in [m]$, $\alpha(S_i)/|S_i|<1/\log (n)$.
\end{claim}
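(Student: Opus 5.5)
The claim should follow directly from the contraction stopping rule on \cref{line:return-contract} of \cref{alg:new-decomp-2}, together with the accuracy guarantee on $\widehat\alpha$. I will argue by contradiction. Suppose some $i\in[m]$ has $\alpha(S_i)/|S_i|\ge 1/\log(n)$. We are assuming the algorithm did not return on \cref{line:return-contract}. Hence, at iteration $i$,
\[
\frac{\widehat\alpha(S_i)}{|S_i|}\, n < i\cdot f .
\]

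\textbf{Step 1: lower-bound the left side.} On the high-probability event that $\widehat\alpha(S_i)\ge (\alpha(S_i)-1)/2$, we get
\[
\frac{\widehat\alpha(S_i)}{|S_i|} \ge \frac{\alpha(S_i)}{2|S_i|} - \frac{1}{2|S_i|} \ge \frac{1}{2\log n} - \frac{1}{2|S_i|}.
\]
To absorb the $-1/(2|S_i|)$ term, I will use that RemoveSmallCircuits has removed all circuits of size at most $50$, so $\alpha(S_i)\ge 2$ (indeed larger than a constant). Then $\alpha(S_i)-1\ge \alpha(S_i)/2$, which gives
\[
\frac{\widehat\alpha(S_i)}{|S_i|}\, n \ge \frac{n}{4\log n}.
\]

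\textbf{Step 2: upper-bound the right side.} I need an a priori bound on $i$. The natural tool is \cref{lem:alphaGrowth}: it gives $k=O(n^{1/3})$, and we may apply it to the prefix $S_1,\dots,S_{i-1}$, all of which satisfy the hypothesis by minimality of $i$. Alternatively, apply the partition-by-size argument of \cref{lem:alphaGrowthFirstStatement} to those prefixes, adding one for $S_i$ itself. Either way, $i = O(n^{1/3})$, so
\[
i\cdot f = O\!\left(n^{1/3}\cdot n^{4/7}/\log^c n\right) = O\!\left(n^{19/21}/\log^{c} n\right),
\]
which is $o(n/\log n)$. This contradicts Step 1, so no such $i$ exists.

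\textbf{Main obstacle.} The delicate point is the circularity in Step 2. The bound on the number of peeled sets in \cref{lem:alphaGrowth} is proved for decompositions of the \cref{alg:iterativePeel} type, which stop when $\alpha/|S|\ge 1/\log n$. I handle this by taking $i$ to be the first violating index, so the lemma only needs to be invoked on prefixes where the condition holds.

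If that invocation is problematic, a cruder bound suffices. Each iteration removes at least one element, and the loop runs only while $|E|\ge n/2$. This gives $i\le n$, which is too weak. So I would instead use the size-category bound $\gamma=O(\sqrt{2^\ell})$ summed over categories, which gives $O(\sqrt n\log n)$. Then
\[
i\cdot f \le \sqrt n\, n^{4/7}\log n/\log^c n = n^{15/14}\log^{1-c} n,
\]
which unfortunately exceeds $n$.

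So the $O(n^{1/3})$ bound via \cref{lem:alphaGrowth} on the prefix is genuinely needed, and I expect that justification to be the crux.
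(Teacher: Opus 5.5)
Your proof is correct and follows the paper's argument. You take the first index $i$ with $\alpha(S_i)/|S_i|\ge 1/\log n$, apply \cref{lem:alphaGrowth} to the prefix $S_1,\dots,S_{i-1}$ to get $i=O(n^{1/3})$, and note that $\widehat\alpha(S_i)n/|S_i|=\Omega(n/\log n)$ exceeds $i\cdot f=O(n^{19/21})$, contradicting that the check on \cref{line:return-contract} failed. Your use of RemoveSmallCircuits to get $\alpha(S_i)\ge 2$, so that $(\alpha(S_i)-1)/2\ge \alpha(S_i)/4$, just makes explicit the constant-factor step the paper absorbs into its $\Omega(\cdot)$.
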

\begin{proof}
For contradiction, let $S_i$ be the first set that $\alpha(S_i)/|S_i|\geq 1/\log (n)$. Then we can invoke \cref{lem:alphaGrowth} for $S_1,\dots S_{i-1}$ and obtain that $i=O(n^{1/3})$.

If $\alpha(S_i)/|S_i|\geq 1/\log (n)$, then we have
\[
\frac{\widehat \alpha(S_i)}{|S_i|} n = \Omega\left(\frac{\alpha(S_i)}{|S_i|}n\right) = \Omega\left(\frac{n}{\log n}\right) \geq i\cdot f
\]
as $i\cdot f = O(n^{1/3}\cdot n^{4/7})=O(n^{19/21})$. Therefore, the algorithm should have returned on \cref{line:return-contract}, which is a contradiction.
\end{proof}

Next, we introduce some additional terminology.

\begin{definition}
    For any $\ell \in [\log(n)]$ and any $i \in [m]$, we let
    \[
    T_{\ell, i} = \{j \leq i: |S_j| \in [2^{\ell-1},2^{\ell}], S_j \text{ is bad} \}.
    \]
    That is, $T_{\ell, i}$ is the prefix of bad sets before $i$ whose size is in the range $[2^{\ell-1},2^{\ell}]$.
\end{definition}

In our analysis, we will frequently index into these bad sets of a specific size.
We note that grouping by size is crucial because our progress lower bounds and the evolution of $\alpha$-values are tightly linked to $|S_j|$.

\begin{definition}
    For $\ell \in [\log(n)]$ and $j \in [|T_{\ell,m}|]$, we let $r_{\ell, j}$ be the index of the $j$th bad set of size $[2^{\ell-1},2^{\ell}]$. We let $R_{\ell, j} = S_{r_{\ell, j}}$ denote the actual bad set itself. 
\end{definition}

\paragraph{Dominant Categories and Consistent Chains.}
At each index $i \in [m]$, we will be primarily interested in the \emph{size} of bad sets that is most common in the prefix of sets before $S_i$. 

\begin{definition}\label{def:di}
    For any $i \in [m]$, we let 
    \[
    d_i = \mathrm{argmax}_{\ell \in [\log(n)]} |T_{\ell, i}|
    \]
    be the \textbf{dominant category}. Additionally, we add the restriction that that $d_{i +1} \neq d_i$ \emph{only} if the $i+1$st set is of size $\in [2^{d_{i+1} -1} + 1, 2^{d_{i+1}}]$.\footnote{That is to say, when there are ties between different possible choices of the dominant category, $d_{i+1}$ either stays the same, or assumes the category of the most recently added element. }
\end{definition}

This definition ensures that among all bad sets before $S_i$, at least a $(1 / \log(n))$-fraction of them belong to the dominant category $d_i$. 

We are interested in iterations $i$ when the dominant category changes, that is, when $d_i \neq d_{i+1}$. Before diving into this, we need to introduce notion of good and bad indices.

\begin{definition}
    For an index $i \in [m]$, we say that $i$ is a \textbf{good index} if at least half of the sets $S_1, \dots S_m$ are good. Otherwise, we say that $i$ is a \textbf{bad index}. We let $m' \in [m]$ denote the final good index. 
\end{definition}

\begin{definition}
    Let $K = \{i: d_{i} \neq d_{i+1}: m' \leq i \leq m \}$ be the \textbf{set of indices where the dominant category changes}. We let $k = |K|$, and let $a_i$ denote the $i$th element in $K$, with $a_0 = m'$.
\end{definition}

Ultimately, we will be concerned with the sequences of sets in between consecutive dominant category changes:

\begin{definition}
    Between consecutive dominant category changes $a_{i-1}$ and $a_{i}$, we refer to $S_{a_{i-1}+1}, \dots S_{a_{i}}$ as a \textbf{consistent chain with dominant category} $\ell_i = d_{a_{i-1}+1}=\dots=d_{a_i}$. We let $b_i = 2^{\ell_i}$ denote the size of the dominant category in this consistent chain. Finally, we let $p_i$ be the smallest index such that $r_{\ell_i, p_i} > a_{i-1}$, and $q_i$ the largest index such that $r_{\ell_i, q_i} \leq a_{i}$.
\end{definition}

A consistent chain is thus a maximal subsequence of bad iterations where the dominant category does not change. Our analysis crucially relies on analyzing progress over consistent chains.

We will rely on the following simple claim which relates the number of bad sets of the dominant category to the actual indices among the bad sets.

\begin{claim}\label{clm:number-of-rounds}
    For any $i \in [k]$, we have:
    \begin{enumerate}
        \item For any $j \in [p_i, q_i]$, $r_{\ell_i, j} = \widetilde{O}(j)$.
        \item $a_i = \widetilde{O}(q_i)$.
        \item $r_{\ell_i, p_i} = a_{i-1}+1$.
    \end{enumerate}
\end{claim}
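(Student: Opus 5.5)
The three items all follow from the bookkeeping in Definition~\ref{def:di} (the argmax definition of the dominant category together with its tie-breaking rule), combined with the fact that every index after $m'$ is a bad index. I would prove them in the order (3), (1), (2).

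\textbf{Item (3).} By definition $a_{i-1}\in K$, so $d_{a_{i-1}}\neq d_{a_{i-1}+1}=\ell_i$. The tie-breaking rule in Definition~\ref{def:di} says the dominant category can switch to $\ell_i$ at step $a_{i-1}+1$ only if the newly added set $S_{a_{i-1}+1}$ lies in category $\ell_i$. That set must also be bad: if it were good or of another category, the counts $|T_{\ell,\cdot}|$ would be unchanged, or would grow only for a category other than $\ell_i$, and the argmax could not move to $\ell_i$. Hence $S_{a_{i-1}+1}$ is a bad set of category $\ell_i$. It is the first such set after $a_{i-1}$, so $r_{\ell_i,p_i}=a_{i-1}+1$. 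For $i=1$ the same reasoning applies with $a_0=m'$, using $d_{m'}\ne d_{m'+1}$ or the convention for $a_0$.

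\textbf{Item (1).} Fix $j\in[p_i,q_i]$ and set $t=r_{\ell_i,j}$. Since $a_{i-1}<t\le a_i$, we have $d_t=\ell_i$, so $\ell_i$ is an argmax of $|T_{\ell,t}|$. The $\log n$ categories together contain all bad sets up to $t$, which gives
\[
|T_{\ell_i,t}|\ge \frac{\#\{\text{bad sets among } S_1,\dots,S_t\}}{\log n}.
\]
Because $t> m'$, the index $t$ is bad, so at least half of $S_1,\dots,S_t$ are bad. This is the one place I would be careful: the definition of good index as literally written says ``at least half of the sets $S_1,\dots,S_m$ are good,'' and I would read it as referring to the prefix $S_1,\dots,S_t$. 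Combining the two facts gives $|T_{\ell_i,t}|\ge t/(2\log n)$. On the other hand $|T_{\ell_i,t}|=j$, since $t$ is exactly the $j$th bad set of category $\ell_i$. Therefore $t\le 2j\log n=\widetilde O(j)$.

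\textbf{Item (2).} I would repeat the argument of item (1) at the index $t=a_i$ itself. Again $d_{a_i}=\ell_i$ and $a_i$ is a bad index, so $|T_{\ell_i,a_i}|\ge a_i/(2\log n)$. By the maximality of $q_i$, $|T_{\ell_i,a_i}|=q_i$. This yields $a_i\le 2q_i\log n=\widetilde O(q_i)$.

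The main obstacle is only definitional: fixing the intended reading of ``bad index'' as a prefix condition, and, if the tie-breaking rule forces it, handling the boundary case $i=1$ with $a_0=m'$. Once those are settled, the three bounds are each a one-line pigeonhole argument.
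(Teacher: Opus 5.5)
Your proposal is correct and is essentially the paper's proof: the same pigeonhole over the $\log(n)$ size categories, using $d_t=\ell_i$ and the fact that $t>m'$ is a bad index to get $t\le 2\log(n)\,|T_{\ell_i,t}|$, evaluated at $t=r_{\ell_i,j}$ and at $t=a_i$, with item (3) read off from the tie-breaking rule in \cref{def:di}. Your extra care only makes explicit what the paper leaves implicit: reading ``good index'' as a prefix condition, and noting that $S_{a_{i-1}+1}$ must itself be bad under the natural reading that $d$ switches only when forced. The $i=1$ boundary case, where $a_0=m'$ need not be a change point, is glossed over in the paper exactly as in your sketch.
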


\begin{proof}
    By definition, for every $t \in [a_{i-1} + 1, a_i]$, $d_t  = \ell_i$ is the dominant category. So, for every such $t$, we know $|T_{\ell_i, t}| \geq |T_{\ell', t}|$, for every $\ell' \neq \ell_i \in [\log(n)]$. In particular, this means that the total number of bad sets before index $t$ is at most 
    \[
    \sum_{\ell \in [\log(n)]} |T_{\ell, t}| \leq |T_{\ell_i, t}| \cdot \log(n).
    \]
    Because at least half of the sets before $t$ are bad, we know that 
    \[
    \frac{t}{2} \leq \sum_{\ell \in [\log(n)]} |T_{\ell, t}| \leq |T_{\ell_i, t}| \cdot \log(n),
    \]
    which means that $t = \widetilde{O}(|T_{\ell_i, t}|)$. 
    \begin{enumerate}
        \item If we set $t = r_{\ell_i, j}$ for $j \in [p_i, q_i]$, then $|T_{\ell_i, t}| = j$, so we obtain $r_{\ell_i, j} = \widetilde{O}(j)$.
        \item If we set $t = a_i$, then $|T_{\ell_i, t}| = q_i$, so we get $a_i = \widetilde{O}(q_i)$.
    \end{enumerate}

    To see the third item of the claim, observe that by \cref{def:di}, the smallest index $p_i$ such that $r_{\ell_i, p_i} > a_{i-1}$ is always $a_{i-1} + 1$, as whenever the dominant category changes, we always insist that the next set's category \emph{is} the dominant category. 
    
    This yields the claim. 
\end{proof}

\paragraph{Bounding $\alpha$-values in Consistent Chains.}
Now, using the above claim, we can derive a strong bound on the growth of the $\alpha$-value in the sets of dominant category that we peel off, using our bound on the average progress per round.

\begin{claim}\label{clm:alpha-lower}
    For any $i \in [k]$, and any $j \in [p_i, q_i]$, 
    \[
    \alpha(R_{\ell_i, j}) = \widetilde{\Omega}\left ( \frac{b_i^{3/2}}{f} \right ).
    \]
\end{claim}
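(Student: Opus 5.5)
We argue from the non-termination of \cref{alg:new-decomp-2} on \cref{line:return-delete}. Fix $i\in[k]$ and $j\in[p_i,q_i]$, and write $t=r_{\ell_i,j}$ for the iteration at which the bad set $R_{\ell_i,j}=S_t$ is peeled. Since the algorithm did not return at iteration $t$, the accumulated counter satisfies $T<t\cdot f$ after the update at iteration $t$. Every update adds a nonnegative amount. In particular, the iteration at which $R_{\ell_i,j}$ itself was processed, being bad, added $|S_t|^{3/2}/\widehat\alpha(S_t)$. Hence
\[
\frac{|R_{\ell_i,j}|^{3/2}}{\widehat\alpha(R_{\ell_i,j})} \le T < r_{\ell_i,j}\cdot f .
\]

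Next we convert this single-set inequality into the claimed bound. We use three facts. First, $|R_{\ell_i,j}|\ge 2^{\ell_i-1}=b_i/2$, so $|R_{\ell_i,j}|^{3/2}=\Omega(b_i^{3/2})$. Second, $\widehat\alpha(S_t)\le 2\alpha(S_t)$ with high probability. Third, the crude bound from the display above is not enough by itself, because $r_{\ell_i,j}$ could be as large as $\widetilde O(j)$ by \cref{clm:number-of-rounds}. The inequality above gives $\alpha(R_{\ell_i,j})=\Omega\bigl(b_i^{3/2}/(r_{\ell_i,j} f)\bigr)$, which loses a factor of $j$. So the plan is to use the whole sum rather than only one term.

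Consider all bad sets in the dominant category up to index $t$, namely $R_{\ell_i,1},\dots,R_{\ell_i,j}$. Each of them has size at least $b_i/2$. By the $\alpha$-monotonicity in \cref{lem:alphaGrowth} (the last part: $\alpha(S_{j'})/|S_{j'}|=\Omega(\alpha(S_{i'})/|S_{i'}|)$ for $i'<j'$), together with the fact that all these sets have comparable sizes within a factor of $2$, every earlier $R_{\ell_i,j'}$ with $j'\le j$ satisfies $\alpha(R_{\ell_i,j'})=O(\alpha(R_{\ell_i,j}))$. Each of these $j$ sets therefore contributed at least $\Omega\bigl(b_i^{3/2}/\alpha(R_{\ell_i,j})\bigr)$ to $T$, giving
\[
j\cdot \Omega\!\left(\frac{b_i^{3/2}}{\alpha(R_{\ell_i,j})}\right)\le T< r_{\ell_i,j} f=\widetilde O(j)\cdot f,
\]
where the last step is item 1 of \cref{clm:number-of-rounds}, using that $j\in[p_i,q_i]$. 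Cancelling $j$ yields $\alpha(R_{\ell_i,j})=\widetilde\Omega(b_i^{3/2}/f)$.

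The main care point is the monotonicity step. We must check that the ``$\Omega$'' in \cref{lem:alphaGrowth} is uniform, with one absolute constant rather than one that compounds over chains; it is stated pairwise, so a single application suffices. We must also check that the estimation error of $\widehat\alpha$ is harmless: the lower bound $(\alpha-1)/2\le\widehat\alpha$ is not needed, only $\widehat\alpha\le 2\alpha$, and this holds simultaneously for all iterations by a union bound over the polynomially many high-probability events. All polylogarithmic losses (the $\log n$ from the dominant-category fraction, the constants $2^{3/2}$, and the factor $2$) are absorbed into $\widetilde\Omega$.
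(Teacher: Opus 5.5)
Your proof is correct and follows essentially the same route as the paper. You lower-bound the accumulated counter by the contributions of the $j$ bad sets in the dominant category, using $|R_{\ell_i,j'}|=\Theta(b_i)$ and the ratio monotonicity of \cref{lem:alphaGrowth}; you upper-bound it by $r_{\ell_i,j} f=\widetilde O(j f)$ via \cref{clm:number-of-rounds}; and you cancel $j$. Your explicit use of $\widehat\alpha\le 2\alpha$, to pass from the algorithm's counter to the true $\alpha$-values, is a small step the paper's proof leaves implicit.
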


\begin{proof}
    For every $j \in [p_i, q_i]$, we have that 
\[
\sum_{t=1}^j \frac{|R_{\ell_i,t}|^{3/2}}{\alpha(R_{\ell_i,t})} = \sum_{t=1}^j \frac{\Theta(b_i^{3/2})}{\alpha(R_{\ell_i,t})} = \Omega\left( j \cdot \frac{b_i^{3/2}}{\alpha(R_{\ell_i,j})}\right),
\]
where we have used that $|R_{\ell_i, t}| = \Theta(b_i)$ by definition, and that by \cref{lem:alphaGrowth}, $\alpha(R_{\ell_i, t}) = O( \alpha(R_{\ell_i,j}))$.

At the same time, because we assume the algorithm does not terminate early, it must be the case that 
\[
\sum_{t=1}^j \frac{|R_{\ell_i,t}|^{3/2}}{\alpha(R_{\ell_i,t})} \leq r_{\ell_i,j}\cdot f = \tilde O(j\cdot f),
\]
where the last equality follows from \cref{clm:number-of-rounds}. Putting these together, we then obtain that 
\[
\Omega\left( j \cdot \frac{b_i^{3/2}}{\alpha(R_{\ell_i,j})}\right) = \sum_{t=1}^j \frac{|R_{\ell_i,t}|^{3/2}}{\alpha(R_{\ell_i,t})} = \widetilde{O}(j \cdot f),
\]
which implies that 
\[
j \cdot f = \widetilde{\Omega}\left( j \cdot \frac{b_i^{3/2}}{\alpha(R_{\ell_i,j})}\right),
\]
and so we can conclude that 
\[
\alpha(R_{\ell_i, j}) = \widetilde{\Omega}\left ( \frac{b_i^{3/2}}{f} \right ).
\]
\end{proof}

We can likewise derive an \emph{upper-bound} on the growth of $\alpha$ by again using our bound on the average progress, but this time in conjunction with our ability to recover an independent set of size $\widetilde{\Omega}\left(\frac{\alpha(S)n}{|S|}\right)$. We provide this derivation below:

\begin{claim}\label{clm:alpha-upper}
    For any $i \in [k]$, and any $j \in [p_i, q_i]$, 
    \[
    \alpha(R_{\ell_i, j}) = \widetilde{O}\left(\frac{j \cdot f \cdot b_i}{n} \right ).
    \]
\end{claim}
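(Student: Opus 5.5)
Since the algorithm never returned on \cref{line:return-contract}, the contraction test failed at every index, and that is the only fact needed. At index $t = r_{\ell_i, j}$, where $S_t = R_{\ell_i,j}$, the failed test reads
\[
\frac{\widehat\alpha(R_{\ell_i,j})}{|R_{\ell_i,j}|}\, n < r_{\ell_i,j}\cdot f .
\]
From this I will derive the upper bound on $\alpha$ in three short steps.

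First, I replace the estimate by the true value. The guarantee $\widehat\alpha(S_i) \ge (\alpha(S_i)-1)/2$ holds with high probability, so $\alpha(R_{\ell_i,j}) \le 2\widehat\alpha(R_{\ell_i,j}) + 1$. Because circuits of size $\le 50$ were removed at the start, $\alpha \ge 2$ (in fact a large constant). Hence $\widehat\alpha = \Omega(\alpha)$, and the failed test gives
\[
\alpha(R_{\ell_i,j}) = O\!\left(\frac{r_{\ell_i,j}\cdot f\cdot |R_{\ell_i,j}|}{n}\right).
\]

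Second, I plug in the chain-specific bounds. By definition of the category, $|R_{\ell_i,j}| \le 2^{\ell_i} = b_i$. By \cref{clm:number-of-rounds}, item 1, $r_{\ell_i,j} = \widetilde{O}(j)$ for $j\in[p_i,q_i]$. Combining these yields $\alpha(R_{\ell_i,j}) = \widetilde{O}(j f b_i/n)$, which is the claimed bound.

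The only point needing care is the application of \cref{clm:number-of-rounds}. That claim depends on at least half of the sets before index $t$ being bad, which holds because $t > a_{i-1} \ge m'$ lies past the final good index. It also depends on $\ell_i$ being dominant at $t$, which holds because $t\in[a_{i-1}+1,a_i]$. The failure probability of the $\widehat\alpha$ estimate is absorbed by a union bound over the polynomially many indices.
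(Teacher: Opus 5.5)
Your proof is correct and follows the paper's route: use the failed contraction test at index $r_{\ell_i,j}$, bound $|R_{\ell_i,j}|\le b_i$, and invoke item~1 of \cref{clm:number-of-rounds} to get $r_{\ell_i,j}=\widetilde{O}(j)$. Two details you make explicit are left implicit in the paper's proof: the conversion from $\widehat\alpha$ to $\alpha$ (using $\alpha\ge 2$ once small circuits are removed), and the check that $r_{\ell_i,j}\in[a_{i-1}+1,a_i]$ lies past the final good index.
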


\begin{proof}
    As before, because we know that algorithm did not terminate in the $r_{\ell_i, j}$th iteration, we have 
    \[
    \frac{\alpha(R_{\ell_i, j}) \cdot n}{|R_{\ell_i, j}|}\leq r_{\ell_i, j} \cdot f = \widetilde{O}(j \cdot f). 
    \]
    Because 
    \[
     \frac{\alpha(R_{\ell_i, j}) \cdot n}{b_i} \leq\frac{\alpha(R_{\ell_i, j}) \cdot n}{|R_{\ell_i, j}|},
    \]
    this means that 
    \[
    \frac{\alpha(R_{\ell_i, j}) \cdot n}{b_i} =\widetilde{O}(j \cdot f),
    \]
    and so we obtain that 
    \[
    \alpha(R_{\ell_i, j}) = \widetilde{O} \left ( \frac{j \cdot f \cdot b_i}{n} \right ).
    \]
\end{proof}

\paragraph{A Recursive Characterization of $a_i$ and $b_i$ Values.}
We next utilize preceding claims to get the following recursive characterizations of $a_i$ and $b_i$ values.

\begin{lemma} \label{lem:a-b-bound}
Let $i \in [k]$. Then,
\[
a_i = \tilde O\left( \frac{f\cdot b_i}{n}\right), \quad b_i = \tilde O\left( \frac{f^4\cdot a_{i-1}^2}{n^2}\right).
\]
\end{lemma}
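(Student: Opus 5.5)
The plan is to combine three earlier bounds on the $\alpha$-values of the dominant-category bad sets $R_{\ell_i,j}$: the quadratic growth $\alpha = \Omega(j^2)$ from \cref{lem:alphaGrowth}, the lower bound of \cref{clm:alpha-lower}, and the upper bound of \cref{clm:alpha-upper}. Each of the two inequalities comes from evaluating these bounds at one endpoint of the consistent chain: $j=q_i$ for $a_i$ and $j=p_i$ for $b_i$.

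\emph{Bound on $a_i$.} By \cref{clm:number-of-rounds}, $a_i = \widetilde{O}(q_i)$, so it suffices to show $q_i = \widetilde{O}(f b_i/n)$.
\begin{enumerate}
\item Consider $R_{\ell_i,q_i}$, the $q_i$-th bad set whose size lies in $[2^{\ell_i-1},2^{\ell_i}]$.
\item The size buckets of \cref{lem:alphaGrowth} are $[2^{\ell},2^{\ell+1}-1]$, which do not align with ours. Our range is covered by at most two of those buckets, so at least $q_i/2$ of the sets $R_{\ell_i,1},\dots,R_{\ell_i,q_i}$ fall into a single bucket of the lemma.
\item Within that bucket, $R_{\ell_i,q_i}$ (or the last of these sets) has index at least $q_i/2$. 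Passing to bad sets is a subsequence, which only increases positions relative to the list of bad sets. Using the near-monotonicity $\alpha(S_j)/|S_j| = \Omega(\alpha(S_{j'})/|S_{j'}|)$ for $j'<j$, and the fact that sizes agree up to a factor $4$, we get $\alpha(R_{\ell_i,q_i}) = \Omega(q_i^2)$.
\item \cref{clm:alpha-upper} with $j=q_i$ gives $\alpha(R_{\ell_i,q_i}) = \widetilde{O}(q_i f b_i/n)$.
\item Comparing the two bounds gives $q_i^2 = \widetilde{O}(q_i f b_i/n)$, hence $q_i = \widetilde{O}(f b_i/n)$ and therefore $a_i = \widetilde{O}(f b_i/n)$.
\end{enumerate}

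\emph{Bound on $b_i$.} Take $j=p_i$ and compare \cref{clm:alpha-lower} with \cref{clm:alpha-upper}:
\[
\frac{b_i^{3/2}}{f} = \widetilde{O}\!\left(\frac{p_i\, f\, b_i}{n}\right)
\quad\Longrightarrow\quad
b_i^{1/2} = \widetilde{O}\!\left(\frac{p_i f^2}{n}\right)
\quad\Longrightarrow\quad
b_i = \widetilde{O}\!\left(\frac{p_i^2 f^4}{n^2}\right).
\]
Next, $p_i \le r_{\ell_i,p_i}$, since the $p_i$-th bad set of a category has overall index at least $p_i$. By item 3 of \cref{clm:number-of-rounds}, $r_{\ell_i,p_i} = a_{i-1}+1 = O(a_{i-1})$. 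This needs $a_{i-1}\ge 1$: for $i\ge 2$ it is automatic, and for $i=1$ we use $a_0=m'\ge 1$, or else absorb the additive constant. Substituting gives $b_i = \widetilde{O}(f^4 a_{i-1}^2/n^2)$.

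\emph{Main obstacle.} I expect the hardest step to be the transfer of $\alpha(R_{\ell_i,j}) = \Omega(j^2)$ from \cref{lem:alphaGrowth} to the indexing used here. The lemma indexes all peeled sets in a bucket $[2^\ell,2^{\ell+1}-1]$, whereas we index only bad sets in $[2^{\ell-1},2^\ell]$. The argument needs the two-bucket pigeonhole and the near-monotonicity of $\alpha/|S|$ to lose only constant factors. The remaining steps are direct substitutions of the earlier claims, with polylogarithmic factors absorbed into $\widetilde{O}$.
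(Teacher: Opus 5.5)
Your proposal is correct and follows the paper's proof essentially step for step. At $j=q_i$ you combine the $\Omega(q_i^2)$ bound of \cref{lem:alphaGrowth} with \cref{clm:alpha-upper} and $a_i=\widetilde{O}(q_i)$, and at $j=p_i$ you combine \cref{clm:alpha-lower}, \cref{clm:alpha-upper} and $p_i\le r_{\ell_i,p_i}=a_{i-1}+1$; your two-bucket pigeonhole with monotonicity of $\alpha/|S|$ just makes rigorous the indexing step the paper asserts directly, namely that $R_{\ell_i,q_i}$ counts as the $q_i$-th set in a bucket of \cref{lem:alphaGrowth}.
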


\begin{proof}
    By \cref{clm:alpha-upper}, we know that 
    \[
    \alpha(R_{\ell_i, q_i}) = \widetilde{O}\left(\frac{q_i \cdot f \cdot b_i}{n} \right ).
    \]
    Simultaneously, from \cref{lem:alphaGrowth}, we know that 
    \[
    \alpha(R_{\ell_i, q_i}) = \Omega(q_i^2),
    \]
    as $R_{\ell_i, q_i}$ is the $q_i$th set peeled off of size $[2^{\ell_i - 1}, 2^{\ell_i}]$. Together, this implies that 
    \[
    q_i^2 = \widetilde{O}\left(\frac{q_i \cdot f \cdot b_i}{n} \right ),
    \]
    and so it must be the case that 
    \[
    q_i = \widetilde{O}\left(\frac{f \cdot b_i}{n} \right ).
    \]
    Using our relationship between $a_i$ and $q_i$ from \cref{clm:number-of-rounds}, we get that 
    \[
    a_i = \widetilde{O}(q_i) = \widetilde{O}\left(\frac{f \cdot b_i}{n} \right ).
    \]

    To derive the bound on $b_i$, we start by applying the bound of \cref{clm:alpha-upper} to $R_{\ell_i, p_i}$, getting that
    \[
    \alpha(R_{\ell_i, p_i}) = \widetilde{O}\left(\frac{p_i \cdot f \cdot b_i}{n} \right ) = \widetilde{O}\left(\frac{a_{i-1} \cdot f \cdot b_i}{n} \right ),
    \]
    where the final equality uses \cref{clm:number-of-rounds}.

    At the same time, \cref{clm:alpha-lower} ensures that 
    \[
    \alpha(R_{\ell_i, p_i}) = \widetilde{\Omega}\left ( \frac{b_i^{3/2}}{f} \right ).
    \]
    Together then, these bounds imply that 
    \[
    \frac{b_i^{3/2}}{f}  = \widetilde{O} \left ( \frac{a_{i-1} \cdot f \cdot b_i}{n}\right ),
    \]
    and so 
    \[
    b_i^{1/2} = \widetilde{O} \left ( \frac{a_{i-1} \cdot f^2 }{n}\right ),
    \]
    and thus $b_i = \widetilde{O} \left ( \frac{a_{i-1}^2 \cdot f^4 }{n^2}\right )$, as we desire.
\end{proof}

Now that we have established this recursive relationship for the $a_i$'s, we establish the base case value of $a_0$.

\begin{claim}\label{clm:a0-bound}
    $a_0 = \widetilde{O}(f^2 / n)$.
\end{claim}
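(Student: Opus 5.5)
The plan is to bound $a_0 = m'$ by combining the two non-termination conditions of \cref{alg:new-decomp-2} with the $\alpha$-growth guarantee of \cref{lem:alphaGrowth}. The key step is to apply them to the good sets that share one size category. Since $m'$ is a good index, at least $m'/2$ of the sets $S_1,\dots,S_{m'}$ are good. Group these good sets by size into the $\log n$ dyadic categories $[2^{\ell-1},2^\ell]$ and pick the most populous one. This gives a category $\ell^\ast$ with size scale $b := 2^{\ell^\ast}$ containing $j := \widetilde{\Omega}(m')$ good sets among the first $m'$ sets. (If $m'$ is at most a polylogarithmic quantity the claim is trivial, so assume otherwise.)

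First, deletion non-termination gives an upper bound on $b$. Every good set $S_t$ with $t \le m'$ adds $|S_t| \ge b/2$ to the counter $T$. The algorithm did not return on \cref{line:return-delete} at iteration $m'$, so $T < m' f$, and therefore
\[
j\cdot \frac{b}{2} \;\le\; T \;<\; m' f .
\]
Since $j = \widetilde{\Omega}(m')$, this yields $b = \widetilde{O}(f)$.

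Second, the growth of $\alpha$ gives an upper bound on $j$. Let $S_t$ be the last of the $j$ good sets in category $\ell^\ast$, so $t \le m'$. By \cref{lem:alphaGrowth}, applied to all peeled sets in this size category (good or bad), the $i$-th set in a category has $\alpha = \Omega(i^2)$. Our set is at least the $j$-th one in its category, so $\alpha(S_t) = \Omega(j^2)$. The algorithm did not return on \cref{line:return-contract} at iteration $t$. Using $\widehat\alpha(S_t) \ge (\alpha(S_t)-1)/2$, this means
\[
\frac{\alpha(S_t)\, n}{|S_t|} \;=\; O(t f) \;=\; O(m' f), \qquad\text{hence}\qquad j^2 \;=\; O\!\left(\frac{m' f b}{n}\right).
\]
The $-1$ is harmless because $\alpha \ge 2$ after RemoveSmallCircuits.

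Combining the two bounds: substituting $b = \widetilde{O}(f)$ and $j = \widetilde{\Omega}(m')$ gives $m'^2 = \widetilde{O}(m' f^2/n)$, that is, $a_0 = m' = \widetilde{O}(f^2/n)$.

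The step needing the most care is the legitimacy of invoking \cref{lem:alphaGrowth} on the good sets of one category. The lemma indexes all sets of a size range, not only good ones, and the ranges it uses are $[2^\ell, 2^{\ell+1}-1]$ rather than $[2^{\ell-1},2^\ell]$. Both issues only help: a good set that is the $j$-th good set in its category is at least the $j$-th set overall in that category, so its $\alpha$ bound is at least as strong. The boundary mismatch between the two dyadic conventions costs at most a constant factor, for example by splitting each category into two of the lemma's ranges and keeping the more populous one. The hypothesis $\alpha(S_i)/|S_i| < 1/\log n$, needed to apply \cref{lem:alphaGrowth}, was already verified in the preceding claim.
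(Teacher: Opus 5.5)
Your proof is correct. It shares the paper's skeleton: pigeonhole the good sets among $S_1,\dots,S_{m'}$ into a dyadic size class, use the quadratic growth $\alpha=\Omega(j^2)$ from \cref{lem:alphaGrowth} for the last of them, bound the class size by $\widetilde O(f)$ via the deletion counter, and cap $\alpha/|S|$ via contraction non-termination.

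The difference is in the last step. The paper does not use the contraction test at $S$'s own iteration. Instead it transfers $\alpha(S)/|S|$ forward to $R_{\ell_1,p_1}$, the set at index $a_0+1$, using the ratio-monotonicity clause of \cref{lem:alphaGrowth}. It then bounds $\alpha(R_{\ell_1,p_1})$ with \cref{clm:alpha-upper} and \cref{clm:number-of-rounds}. You apply the test $\widehat\alpha(S_t)\,n/|S_t|<tf$ directly at $t\le m'$. This gives the same $O(m'f/n)$ cap with fewer ingredients. You need neither the ratio-monotonicity statement, which the paper proves separately in \cref{sec:lem-proof}, nor the existence of a first consistent chain ($k\ge 1$).

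The paper also orders the size step differently. It first uses a Markov-type count to show at least $a_0/4$ good sets have size at most $4f$, and then pigeonholes. You pigeonhole first and bound the dominant class's size afterwards; the two are equivalent. Your handling of the $\widehat\alpha$ versus $\alpha$ conversion and of the dyadic-boundary mismatch is more careful than the paper's.

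One small wording fix: $|S_t|\ge b/2$ holds only for the $j$ good sets in class $\ell^\ast$, not for every good set. The inequality $jb/2\le T$ that you actually use is right.
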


\begin{proof}
    Recall that we define $a_0 = m'$, where $m'$ is the final good index. So, necessarily among the sets $S_1, \dots S_{a_0}$, at least half of them are good. 

    Because the algorithm does not terminate early, it must be the case that
    \[
    \sum_{j \in [a_0]: S_j \text{ is good}}|S_j| \leq a_0 \cdot f. 
    \]
    In particular, because $a_0$ is a good index, this sum must contain at least $a_0 / 2$ distinct sets. Among these sets, there must be at least $a_0 / 4$ whose size is bounded by $4f$, as otherwise we contradict the stated upper bound on their cumulative size. 

    Now, we let $G = \{S_j: j \in [a_0], |S_j| \leq 4f, S_j \text{ is good} \}$, denote exactly this set of good sets of bounded size. The preceding paragraphs shows that $|G| \geq a_0 / 4$. Next, for $\ell \in [\log(n)]$, we also define $G_{\ell} = \{S \in G: |S| \in [2^{\ell - 1}+1, 2^{\ell}] \}$. By a simple pigeonhole argument, we can easily derive that there exists $\ell^* \in [\log(n)]$ for which 
    \[
    |G_{\ell^*}| \geq \frac{|G|}{\log (n)} \geq \frac{a_0}{4 \log(n)}.
    \]

    Now, let $S$ be the final set in $G_{\ell^*}$. By \cref{lem:alphaGrowth}, we can see that $\alpha(S) = \Omega(|G_{\ell^*}|^2) = \widetilde{\Omega}(a_0^2)$. As $S\in G$, we have $|S|\leq 4f$, and thus
    \[
    \frac{\alpha(S)}{|S|} = \widetilde{\Omega}\left ( \frac{a_0^2}{f}\right).
    \]

    On the other hand, because $S$ appears before $R_{\ell_1, p_1}$, we can again use the monotonicity of the $\frac{\alpha(S_i)}{|S_i|}$ ratios (as per \cref{lem:alphaGrowth}) to see that 
    \[
     \frac{\alpha(S)}{|S|} = O \left ( \frac{\alpha(R_{\ell_1, p_1})}{|R_{\ell_1, p_1}|}\right ) = O \left ( \frac{\alpha(R_{\ell_1, p_1})}{b_1}\right ).
    \]

    Finally, we invoke \cref{clm:number-of-rounds} and \cref{clm:alpha-upper} to see that 
    \[
    \alpha(R_{\ell_1, p_1}) = \widetilde{O} \left ( \frac{p_1  \cdot f \cdot b_1}{n} \right ) = \widetilde{O} \left ( \frac{a_0  \cdot f \cdot b_1}{n} \right ).
    \]
    All together, this means that 
    \[
    \frac{a_0^2}{f} = \widetilde{O} \left ( \frac{a_0  \cdot f \cdot b_1}{n \cdot b_1} \right ),
    \]
    which means that 
    \[
    a_0 = \widetilde{O} \left ( \frac{f^2}{n}\right ),
    \]
    as we desire.
\end{proof}

Now, using this base case established above, along with our recurrence conditions from \cref{lem:a-b-bound}, we can establish the following general bound for the $b_i$'s:

\begin{lemma}\label{lem:b-bound}
    We have 
    \[
    b_1 = \widetilde{O} \left ( \frac{f^8}{n^4}\right ),
    \]
    and more generally, for any $i \geq 1$, we have 
    \[
    b_i \le \frac{f^{7 \cdot 2^i - 6}}{n^{4 \cdot 2^i - 4}} \cdot \log^{O(2^i)}(n).
    \]
\end{lemma}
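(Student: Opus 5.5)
This follows by unrolling the recurrence of \cref{lem:a-b-bound}, seeded with \cref{clm:a0-bound}, and tracking the polylogarithmic losses explicitly. I first compose the two bounds of \cref{lem:a-b-bound} into a recurrence on $b$ alone. For $i \ge 2$, substituting $a_{i-1} = \widetilde{O}(f b_{i-1}/n)$ into $b_i = \widetilde{O}(f^4 a_{i-1}^2/n^2)$ gives
\[
b_i \le \frac{f^6}{n^4}\, b_{i-1}^2 \cdot \log^{c_0}(n)
\]
for some absolute constant $c_0$. Here I make the hidden polylog factors of \cref{lem:a-b-bound} explicit as $\log^{c_0}(n)$, with $c_0$ independent of $i$, since each application of \cref{clm:number-of-rounds}, \cref{clm:alpha-lower}, and \cref{clm:alpha-upper} loses only a fixed power of $\log n$. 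For the base case, plug $a_0 = \widetilde{O}(f^2/n)$ from \cref{clm:a0-bound} into the $b$-bound: $b_1 = \widetilde{O}(f^4 \cdot f^4/n^2 \cdot 1/n^2) = \widetilde{O}(f^8/n^4)$. This is the first claim, and it matches the general formula at $i=1$, where $7\cdot 2 - 6 = 8$ and $4\cdot 2 - 4 = 4$.

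For the general bound I induct on $i$ with the hypothesis $b_{i-1} \le f^{7\cdot 2^{i-1}-6} n^{-(4\cdot 2^{i-1}-4)} \log^{C(2^{i-1}-1)+c_1}(n)$, for suitable constants $C, c_1$. Squaring and multiplying by $f^6 n^{-4}\log^{c_0}(n)$ gives exponent $2(7\cdot 2^{i-1}-6)+6 = 7\cdot 2^i - 6$ on $f$. On $n$ it gives $2(4\cdot 2^{i-1}-4)+4 = 4\cdot 2^i - 4$. The exponents thus close exactly, which is a quick consistency check that the stated formula is the fixed-point form of the recurrence $x_i = 2x_{i-1}+6$, $y_i = 2y_{i-1}+4$. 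The log exponent becomes $2(C(2^{i-1}-1)+c_1)+c_0 = C(2^i-2)+2c_1+c_0$. Choosing $C \ge c_0 + c_1$ keeps this at most $C(2^i-1)+c_1$, which is $O(2^i)$.

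The only delicate point is this polylog bookkeeping. A naive $\widetilde{O}$ composition would hide factors that grow with $i$. So I will state \cref{lem:a-b-bound} with an explicit uniform exponent $c_0$ and choose the inductive log exponent to absorb the doubling, which gives the $\log^{O(2^i)}(n)$ factor rather than anything worse. No further matroid structure is needed; everything reduces to the recurrence and the base case.
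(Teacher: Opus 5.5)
Your proposal is correct and matches the paper's proof. Both seed the recurrence with \cref{clm:a0-bound} to get $b_1=\widetilde O(f^8/n^4)$, then compose the two bounds of \cref{lem:a-b-bound} into $b_i \le \frac{f^6 b_{i-1}^2}{n^4}\log^{c}(n)$ with an $i$-independent constant $c$, and solve it. Your inductive bookkeeping of the log exponent is equivalent to the paper's closed form $(\log n)^{2^{i-1}(c_1+c_2)-c_2}$.
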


\begin{proof}
    First, via \cref{lem:a-b-bound} and \cref{clm:a0-bound}, we know that 
    \[
    b_1 = \widetilde{O} \left ( \frac{f^4 \cdot a_0^2}{n^2}\right ) = \widetilde{O} \left ( \frac{f^8}{n^4}\right ).
    \]
    Now, for the general case, by unraveling the $\widetilde{O}$, we know that there exists a  constants $c_1$ such that for large enough $n$, 
    \[
    b_1 \leq \frac{f^8}{n^4} \cdot \log^{c_1}(n).
    \]
    Likewise, for $i \geq 2$, by \cref{lem:a-b-bound}, we know that 
    \[
    b_i = \widetilde{O} \left ( \frac{f^4 \cdot a_{i-1}^2}{n^2}\right )  = \widetilde{O} \left ( \frac{f^6 \cdot b_{i-1}^2}{n^4}\right ),
    \]
    and so there exists a constant $c_2$ such that (for large enough $n$) 
    \[
    b_i \leq \frac{f^6 \cdot b_{i-1}^2}{n^4} \cdot \log^{c_2}(n).
    \]
    Solving the recurrence gives
\[
b_i \leq \frac{f^{7\cdot 2^i-6}}{n^{4\cdot 2^i-4}} \cdot  (\log n)^{2^{i-1}(c_1+c_2)-c_2} = \frac{f^{7\cdot 2^i -6}}{n^{4\cdot 2^i-4}} \log^{O(2^i)} n .
\]
\end{proof}

Next, we make the following simple observation about the number of elements that are contained in bad sets of a given size, across all $m$ sets that are peeled off. 

\begin{claim}
    There exists $\ell^* \in [\log(n)]$ such that 
    \[
    |T_{\ell^*, m}|  = \widetilde{\Omega} \left ( \frac{n}{2^{\ell^*}}\right ).
    \]
\end{claim}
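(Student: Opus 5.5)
We will use a pigeonhole argument over the $\log(n)$ size categories, after showing that a constant fraction of the peeled elements lie in bad sets. Since we are arguing by contradiction that the algorithm never returns, the while loop runs until $|E| < n/2$. Hence the peeled sets $S_1,\dots,S_m$ (together with the elements removed by RemoveSmallCircuits, which we treat as in \cite{KPS25} and assume to be a lower-order amount) cover at least $n/2$ elements:
\[
\sum_{j\in[m]}|S_j| = \Omega(n).
\]

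Next we show that the good sets contribute only a small fraction of this mass. Each good set adds $|S_j|$ to the counter $T$, and the algorithm never returns on \cref{line:return-delete}. Therefore
\[
\sum_{j \text{ good}} |S_j| \le T < m\cdot f.
\]
By \cref{lem:alphaGrowth}, $m = O(n^{1/3})$, so $m\cdot f = O(n^{1/3}\cdot n^{4/7}) = o(n)$. Consequently the bad sets together contain
\[
\sum_{j\text{ bad}}|S_j| \ge \Omega(n) - o(n) = \Omega(n)
\]
elements.

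We now partition the bad sets by size into the $\log(n)$ categories $[2^{\ell-1},2^{\ell}]$. By pigeonhole, some category $\ell^*$ has total mass at least $\Omega(n/\log n)$. Every set in that category has size at most $2^{\ell^*}$, so the number of such sets satisfies
\[
|T_{\ell^*,m}| \ge \frac{\Omega(n/\log n)}{2^{\ell^*}} = \widetilde{\Omega}\left(\frac{n}{2^{\ell^*}}\right),
\]
which is the claim.

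The main point requiring care is the first step: arguing that the peeled sets really account for $\Omega(n)$ elements. This depends on the loop condition $|E|\ge n/2$ together with the non-termination assumption. One subtlety is whether the last iteration, which could end with a very large set (e.g. $|S_i| > n/2$), is counted. Such a set only increases the mass bound, so it does not hurt the argument. The other subtlety is ensuring that RemoveSmallCircuits deletes only a negligible or separately accounted-for number of elements. If it can delete many elements, that deletion itself counts as progress made in $O(1)$ rounds, and we would handle it by redefining $n$ as the size of the ground set after that preprocessing step.
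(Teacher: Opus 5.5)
Your proposal is correct and follows essentially the same route as the paper. You show that good sets hold only a small share of the at least $n/2$ removed elements, so the bad sets carry $\Omega(n)$ elements, and then you pigeonhole over the $\log n$ size classes. Your bound $\sum_{j\text{ good}}|S_j| \le T < m\cdot f = O(n^{19/21})$ is a more explicit form of the paper's remark that $n/4$ elements in good sets would already give average progress $\widetilde{\Omega}(n^{2/3})$, and hence a return on the deletion line.
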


\begin{proof}
    First, we can observe that the total number of elements contained in good sets at the termination of the algorithm is at most $n/4$. Otherwise, if 
    \[
    \sum_{i: S_i \text{ is good}} |S_i| \geq n/4,
    \]
    then this means that we can make average progress per round of $\widetilde{\Omega}(n^{2/3})$, as  we will have deleted $\widetilde{\Omega}(n)$ elements (by definition of them being good), in only $\widetilde{O}(n^{1/3})$ rounds, as per \cref{lem:alphaGrowth}.

    Because we are assuming that we do not hit a break point in \cref{line:return-contract} or \cref{line:return-delete}, we know that the break point instead comes when $|E| < n/2$. Thus, at least $n/2$ elements are removed, and at most $n/4$ of them are contained in good sets. Hence, there are at least $n/4$ elements in bad sets, i.e.,
    \[
    \left | \bigcup_{\ell \in [\log(n)]} \bigcup_{i \in T_{\ell, m}}S_i \right | = \Omega(n).
    \]
     By a pigeonhole argument, we then know that there exists some $\ell^* \in [\log(n)]$ for which 
     \[
     \left | \bigcup_{i \in T_{\ell^*, m}}S_i \right | = \Omega ( n / \log(n)).
     \]
     Because each set $S$ in $T_{\ell^*, m}$ is of size $\leq 2^{\ell^*}$, this immediately means that $T_{\ell^*, m}$ must have $\widetilde{\Omega} \left ( \frac{n}{2^{\ell^*}}\right )$ distinct sets, as we desire.
\end{proof}

\begin{lemma}\label{lem:avgProgNoEarlyStop}
    Let $\ell^* \in [\log(n)]$ be such that $|T_{\ell^*, m}|  = \widetilde{\Omega} \left ( \frac{n}{2^{\ell^*}}\right )$. Then, assuming \cref{alg:new-decomp-2} has not already stopped, then by $r_{\ell^*,|T_{\ell^*,m}|}$-th iteration, the algorithm will have made at least $f = n^{4/7}/\log^c(n)$ average progress per round for sufficiently large constant $c$.
\end{lemma}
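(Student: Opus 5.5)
The plan is to compare the total progress available at iteration $r^* := r_{\ell^*,|T_{\ell^*,m}|}$ against an upper bound on $r^*$ coming from the chain recurrence of \cref{lem:a-b-bound}. Write $N = |T_{\ell^*,m}|$, $b = 2^{\ell^*}$, and $R_j = R_{\ell^*,j}$ for $j\in[N]$.

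\textbf{Step 1 (lower bound on $b$).} By \cref{lem:alphaGrowth}, $N = O(\sqrt{b})$. Combined with $N = \widetilde{\Omega}(n/b)$, this gives $b = \widetilde{\Omega}(n^{2/3})$ and $N\sqrt b = \widetilde{\Omega}(n)$.

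\textbf{Step 2 (progress available at $r^*$).} There are two kinds of progress, and we need only the larger one.
\begin{enumerate}
\item Deletion. The deletion counter $T$ has accumulated at least $\sum_{j\le N} |R_j|^{3/2}/\widehat\alpha(R_j)$. Since all $|R_j| = \Theta(b)$, the monotonicity of $\alpha/|S|$ in \cref{lem:alphaGrowth} gives $\alpha(R_j) = O(\alpha(R_N))$. Hence this deletion progress is $\Omega(N b^{3/2}/\alpha(R_N))$.
\item Contraction. \cref{lem:independentProg} gives contraction progress $\Omega(\alpha(R_N) n/b)$.
\end{enumerate}
Taking the maximum of the two and balancing in $\alpha(R_N)$ yields
\[
\max\Bigl(\tfrac{N b^{3/2}}{\alpha(R_N)}, \tfrac{\alpha(R_N)\, n}{b}\Bigr) \ge \sqrt{N b^{1/2} n} = \widetilde{\Omega}\bigl(n/b^{1/4}\bigr) = \widetilde{\Omega}(n^{3/4}),
\]
using $N\sqrt b=\widetilde\Omega(n)$ and $b\le n$. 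So at iteration $r^*$ the algorithm has certified $\widetilde{\Omega}(n^{3/4})$ total progress. It therefore suffices to show $r^* = O(n^{3/4}/(f\,\mathrm{polylog}))$, i.e.\ $r^* \le n^{5/28}$ up to the chosen log factors.

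\textbf{Step 3 (upper bound on $r^*$ via the chains).} This is the main obstacle. Since $r^*\le m$, I would bound $m$.
\begin{enumerate}
\item Substituting $f = n^{4/7}/\log^c n$ into \cref{lem:b-bound}, the exponent of $n$ is $\tfrac47(7\cdot 2^i-6)-(4\cdot 2^i-4) = \tfrac47$ for every $i$. The log factor is $\log^{O(2^i)}(n)/\log^{c(7\cdot2^i-6)}(n)$. Thus for $c$ large, $b_i \le n^{4/7}/\log^{\Omega(c 2^i)} n$, which decays doubly exponentially in $i$.
\item Because $b_i\ge 1$, this forces $k = O(\log\log n)$.
\item Then $a_i = \widetilde{O}(f b_i/n) = \widetilde{O}(n^{1/7})$ for each $i$, and the base case $a_0 = \widetilde{O}(f^2/n) = \widetilde{O}(n^{1/7})$ by \cref{clm:a0-bound}.
\end{enumerate}

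The tail after the last category change $a_k$ needs care, since the final index $m$ sits inside an unfinished chain. I would rerun the argument of \cref{lem:a-b-bound} on this last chain: combine \cref{clm:alpha-upper} with $\alpha = \Omega(q^2)$ and then \cref{clm:number-of-rounds}. This gives $m = \widetilde{O}(f b_{k+1}/n)$, where $b_{k+1}$ obeys the same recurrence. Hence $m = \widetilde{O}(n^{1/7})$.

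A second subtlety is the situation where $m$ is itself a good index, or the chains are otherwise degenerate. In that case $m\le a_0$ and the same bound applies.

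\textbf{Step 4 (conclusion).} Dividing the progress of Step 2 by the rounds of Step 3, the average progress at iteration $r^*$ is $\widetilde{\Omega}(n^{3/4-1/7}) = \widetilde{\Omega}(n^{17/28})$. This is at least $n^{4/7}/\log^c n = f$ once the polylogarithmic losses are absorbed by choosing $c$ large. The only delicate point in this absorption is that the $\log^{O(2^i)}$ factors must remain controlled, which holds because $k = O(\log\log n)$ and the decay makes the factors favorable.
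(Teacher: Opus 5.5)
Your proposal follows the paper's proof: you balance contraction progress $\alpha(R_N)n/b$ against deletion progress $Nb^{3/2}/\alpha(R_N)$ to get $\widetilde{\Omega}(n/b^{1/4})=\widetilde{\Omega}(n^{3/4})$, and you use \cref{lem:a-b-bound}, \cref{clm:a0-bound} and \cref{lem:b-bound} to bound the number of rounds by $n^{1/7}$, as the paper does; your explicit treatment of the final unfinished chain is a welcome tightening of the paper's ``$r^*\le a_i$ for some $i$''. One slip: the fact Step~2 needs is $Nb=\widetilde{\Omega}(n)$, which is just the hypothesis, not $N\sqrt{b}=\widetilde{\Omega}(n)$, which does not follow and would force $b=\widetilde{\Omega}(n)$ since $N=O(\sqrt{b})$; your displayed bound $\sqrt{Nb^{1/2}n}=\widetilde{\Omega}(n/b^{1/4})$ is correct once you use $Nb=\widetilde{\Omega}(n)$.
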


\begin{proof}
Let $b = 2^{\ell^*}$. Recall that by the $r_{\ell^*, |T_{\ell^*, m}|}$th iteration, there are two different ways of counting our progress. Either we can contract an independent set, yielding 
\begin{align}
    \Omega \left ( \frac{\alpha(R_{\ell^*, |T_{\ell^*, m}|}) \cdot n}{|R_{\ell^*, |T_{\ell^*, m}|}|}  \right )\label{eq:indsetprogfinal}
\end{align}
total progress (as per \cref{lem:independentProg}),
or we can delete redundant elements, yielding 
\begin{align}
    \widetilde{\Omega} \left ( \sum_{j = 1}^{|T_{\ell^*, m}|} \frac{|R_{\ell^*,j}|^{3/2}}{\alpha(R_{\ell^*,j})}  \right ) \label{eq:deleteredundantfinal}
\end{align}
total progress (as per \cref{thm:balancedProgress}). 

Now, using the fact that the sets $R_{\ell^*,j}$ are all within a constant factor in size, and that the $\alpha$-values are monotonic in such a sequence (\cref{lem:alphaGrowth}), we obtain that 
\[
\sum_{j = 1}^{|T_{\ell^*, m}|} \frac{|R_{\ell^*,j}|^{3/2}}{\alpha(R_{\ell^*,j})}  = \widetilde{\Omega} \left ( |T_{\ell^*, m}| \cdot \frac{b^{3/2}}{\alpha(R_{\ell^*, |T_{\ell^*, m}|})} \right ) = \widetilde{\Omega} \left ( \frac{n}{b} \cdot \frac{b^{3/2}}{\alpha(R_{\ell^*, |T_{\ell^*, m}|})} \right ) = \widetilde{\Omega} \left ( \frac{n \cdot b^{1/2}}{\alpha(R_{\ell^*, |T_{\ell^*, m}|})} \right ),
\]
and we see that 
\[
\frac{\alpha(R_{\ell^*, |T_{\ell^*, m}|}) \cdot n}{|R_{\ell^*, |T_{\ell^*, m}|}|}  = \Theta \left ( \frac{\alpha(R_{\ell^*, |T_{\ell^*, m}|}) \cdot n}{b} \right ).
\]

Next, we see that 
\[
\max \left ( \frac{\alpha(R_{\ell^*, |T_{\ell^*, m}|}) \cdot n}{b} , \frac{n \cdot b^{1/2}}{\alpha(R_{\ell^*, |T_{\ell^*, m}|})}\right ) \geq \frac{n}{b^{1/4}},
\]
as the worst case is when 
\[
\frac{\alpha(R_{\ell^*, |T_{\ell^*, m}|}) \cdot n}{b} = \frac{n \cdot b^{1/2}}{\alpha(R_{\ell^*, |T_{\ell^*, m}|})},
\]
which happens when $\alpha(R_{\ell^*, |T_{\ell^*, m}|}) = b^{3/4}$.

Importantly then, one of the above methods for making progress always yields $\widetilde{\Omega} \left ( \frac{n}{b^{1/4}}\right ) =\widetilde{\Omega} \left ( n^{3/4}\right ) $ total progress.

Now, to bound our average progress, we upper bound the number of rounds until this point, i.e., we bound $r_{\ell^*, |T_{\ell^*, m}|}$. For this, by \cref{lem:a-b-bound} and \cref{lem:b-bound}, we know that for some value of $i$, 
\[
r_{\ell^*, |T_{\ell^*, m}|} \leq a_i = \widetilde{O} \left ( \frac{f \cdot b_i}{n}\right ) = \frac{f^{7 \cdot 2^i - 5}}{n^{4 \cdot 2^i - 3}} \cdot \log^{O(2^i)}(n).
\]
In particular, as long as we choose $f = \frac{n^{4/7}}{\log^{c}(n)}$ for a sufficiently large constant $c$, this satisfies 
\[
r_{\ell^*, |T_{\ell^*, m}|} \leq \frac{n^{4 \cdot 2^i - 20/7}}{n^{4 \cdot 2^i - 3}} \cdot \frac{\log^{O(2^i)}(n)}{\log^{7 \cdot 2^i \cdot c - 5c}(n)}  \leq n^{1/7}.
\]

To conclude then, this means that the average progress made via either \cref{eq:indsetprogfinal} or \cref{eq:deleteredundantfinal} is 
\[
\widetilde{\Omega} \left ( \frac{n^{3/4}}{r_{\ell^*, |T_{\ell^*, m}|}}\right ) = \widetilde{\Omega} \left ( \frac{n^{3/4}}{n^{1/7}}\right ) = \widetilde{\Omega} \left ( n^{4/7}\right ) \geq f.
\]
\end{proof}

\paragraph{Completing the Proof of Main Theorem.}
With this, we re now ready to provide a proof of our main theorem.

\begin{proof}[Proof of \cref{thm:3-7rounds}]
    To start, we set $f = \frac{n^{4/7}}{\log^{c}(n)}$ for a sufficiently large constant $c$.
    
    First, we can observe that if the algorithm ever terminates in \cref{line:return-contract} during iteration $i$, then by \cref{lem:independentProg}, we are recovering an independent set of size $\Omega \left ( \frac{\alpha(S_i)}{|S_i|} \cdot \frac{n}{2}\right ) = \Omega\left(\frac{\widehat \alpha(S_i)}{|S_i|} n\right) = \Omega( i \cdot f)$, where we have used the fact here that $|E| \geq n/2$. This ensures that we make average progress $\Omega(f) = \widetilde\Omega(n^{4/7})$.

    Likewise, we can observe that if the algorithm ever terminates in \cref{line:return-delete} during some iteration $i$, then the value of $T$ during iteration $i$ is $\Omega(i \cdot f)$. In particular, 
    \[
    T = \sum_{j \in [i]: S_j \text{ is good}} |S_j| + \sum_{j \in [i]: S_j \text{ is bad}} \frac{|S_j|^{3/2}}{\widehat \alpha(S_j)}.
    \]
    By \cref{thm:balancedProgress}, we know that for any good set, we can delete $\widetilde{\Omega}(|S_j|)$ redundant elements, and for any bad set, we can delete $\widetilde{\Omega}\left ( \frac{|S_j|^{3/2}}{\alpha(S_j)}\right ) = \widetilde{\Omega}\left ( \frac{|S_j|^{3/2}}{\widehat\alpha(S_j)}\right )$ redundant elements. Thus, using \cref{thm:balancedProgress}, we can indeed delete $\widetilde{\Omega}(T)$ elements, thus yielding 
    \[
     \widetilde{\Omega} \left ( \frac{T}{i}\right ) = \widetilde{\Omega}\left (\frac{i \cdot f}{i} \right) =  \widetilde \Omega (f)
    \]
    average progress per round. 

    Otherwise, suppose we do not terminate on \cref{line:return-contract}  or \cref{line:return-delete}, then as we chose $c$ to be a sufficiently large constant, we can invoke \cref{lem:avgProgNoEarlyStop}, with the value of $f = \frac{n^{4/7}}{\log^{c}(n)}$. This again guarantees that there is some iteration $i$ for which our value of $T$ exceeds $\Omega \left ( i\cdot f \right )$, giving a contradiction, as this implies that we do indeed terminate in \cref{line:return-contract} or \cref{line:return-delete}.

    In order to get our desired probability bound, observe that every invocation of \cref{alg:globalOptimal} succeeds in finding a globally optimal set with probability $1 - 2^{- \Omega(n)}$. Additionally, each invocation of \cref{thm:balancedProgress} succeeds with probability $1 - 2^{- \Omega(n)}$.

    Finally, this means that in each invocation of \cref{alg:new-decomp-2}, we invest $i$ rounds, and either contract on an independent set of size $\widetilde{\Omega}(i \cdot n^{4/7})$, or delete $\widetilde{\Omega}(i \cdot n^{4/7})$ redundant elements (each invocation succeeds with probability $1 - 2^{- \Omega(n)}$). If we let $F(n)$ denote the number of rounds required to find a basis of a matroid on $n$ elements, we get the recurrence that 
    \[
    F(n) = i + F(n - \widetilde{\Omega}(i \cdot n^{4/7})),
    \]
    which implies that $F(n) = \widetilde{O}(n^{3/7})$, as we desire. To avoid issues with our probability bound when the number of elements in the matroid becomes small, observe that once $|E| = n^{1/2}$, we can simply invoke \cite{KUW85} and deterministically find a basis of the remaining elements, in $O(n^{1/4})$ additional rounds. Thus, all together, $F(n) = \widetilde{O}(n^{3/7})$, and our probability of success is $1 - 2^{- \Omega(\sqrt{n})}$.
\end{proof}

\section{Acknowledgments}

The authors would like to thank Eric Balkanski for pointing out the applications to faster optimization under matroid constraints.

\bibliographystyle{alpha}
\bibliography{ref}

\appendix

\section{Missing Proofs in \cref{sec:intro}}\label{sec:matroidIntersection}

\subsection{Proof of \cref{thm:matroid-intersection-intro}}
To start, we recall the following lemma of \cite{BT25}:

\begin{lemma}[\cite{BT25}]
Let $\M_1=(E,\I_1),\M_2=(E,\I_2)$ be 2 matroids. Let $n=|E|$ and $r$ be the size of the largest independent set of $\M_1,\M_2$. Then
\begin{itemize}
    \item There is an $\widetilde O\left(\frac{nT(n)}{\varepsilon\Delta}\right)$ rounds independence-query algorithm that finds a common independent set $S\in I_1\cap I_2$ of size $|S|\geq r-(\varepsilon r+\Delta)$, given that there is a $T(n)$ round independence-query algorithm that finds a basis of a matroid on $n$ elements.
    \item Given $S\in I_1 \cap I_2$, in a single round of independence query, one can compute an $S'\in I_1 \cap I_2$ of size $|S'|=|S|+1$ or decide that $S$ is of maximum possible size.
\end{itemize}
\end{lemma}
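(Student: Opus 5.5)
Since this lemma is quoted from \cite{BT25}, I will not re-derive its first item. I will treat it as a black box, with only a sketch of why it has that form. I will prove the second item directly, and then show how the lemma is combined with \cref{thm:3-7rounds} to get \cref{thm:matroid-intersection-intro}.

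\textbf{Second item.} This is the classical exchange-graph argument of Edmonds, and it fits in one round. Given $S\in\I_1\cap\I_2$, I would issue all of the following queries in parallel:
\begin{itemize}
\item $\mathrm{Ind}_1(S+x)$ and $\mathrm{Ind}_2(S+x)$ for every $x\notin S$;
\item $\mathrm{Ind}_1(S-y+x)$ and $\mathrm{Ind}_2(S-y+x)$ for every pair $y\in S$, $x\notin S$.
\end{itemize}
These are $O(n^2)$ queries. Their answers determine the whole exchange graph: the sources, the sinks, and every exchange arc for both matroids. Finding a shortest source--sink path is then pure offline computation. By the standard shortest-augmenting-path lemma, augmenting $S$ along that path yields a common independent set of size $|S|+1$. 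If no source--sink path exists, the min--max (Edmonds) certificate shows that $S$ is maximum.

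\textbf{First item.} The sketch is as follows. \cite{BT25} run a phased, Hopcroft--Karp-style augmentation. Each phase builds a layered (approximate) distance structure of the exchange graph. The expensive primitive in a phase is computing maximal independent sets or bases in restricted and contracted versions of $\M_1,\M_2$. Each such call costs $T(n)$ rounds when done with a basis-finding algorithm. A counting argument bounds the number of phases: the current solution has deficit at least $\varepsilon r+\Delta$, and shortest augmenting paths then have length $O(n/(\varepsilon\Delta))$-type bounds, which limits how many phases are needed. Multiplying the two factors gives $\widetilde O(nT(n)/(\varepsilon\Delta))$ rounds.

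\textbf{Deriving the theorem.} I plug $T(n)=\widetilde O(n^{3/7})$ from \cref{thm:3-7rounds} into the first item. This returns a common independent set of size at least $r-(\varepsilon r+\Delta)$. I then apply the second item at most $\varepsilon r+\Delta\le\varepsilon n+\Delta$ more times, one round each, to reach a maximum common independent set. The total round count is
\[
\widetilde O\!\left(\frac{n^{10/7}}{\varepsilon\Delta}+\varepsilon n+\Delta\right).
\]
Setting $\varepsilon n=\Delta=x$ makes the first term $n^{17/7}/x^2$. Balancing it against $x$ gives $x^3=n^{17/7}$, i.e.\ $x=n^{17/21}$, which is the claimed bound.

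\textbf{Main obstacle.} The nontrivial part is the phase-count argument inside the first item. It depends on the detailed structure of \cite{BT25} and is not reproved here. Everything else is either one round of exhaustive queries or the arithmetic balancing above.
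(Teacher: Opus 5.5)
Your proposal is correct. On the first item it coincides with the paper, which gives no proof of this lemma and simply imports both items from \cite{BT25}.

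What you add is an actual argument for the second item, and it is the right one. Given $S$, the $O(n^2)$ queries to $S+x$ and $S-y+x$ in both oracles are non-adaptive, so they fit in one round, and they determine the whole exchange graph. A shortest source--sink path is shortcut-free, so augmenting along it keeps the set independent in both matroids. If no such path exists, the set of vertices that can reach a sink is Edmonds' optimality certificate.

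In the downstream application you choose $\varepsilon=n^{-4/21}$ and $\Delta=n^{17/21}$, which do not depend on the unknown $r$. The paper instead takes $\varepsilon=n^{10/21}r^{-2/3}$ and $\Delta=\varepsilon r$. This gives the $r$-sensitive bound $\widetilde O(n^{10/21}r^{1/3})$, which it then relaxes using $r\le n$. Both choices give $\widetilde O(n^{17/21})$.

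One caution about your heuristic sketch of the first item: $n/(\varepsilon\Delta)$ is not an augmenting-path length. A deficit $k\ge\varepsilon r+\Delta$ already forces a shortest augmenting path of length $O(r/k)=O(1/\varepsilon)$, by Cunningham's bound. The factor should instead be read as the number of (weighted) basis computations, each costing $T(n)$ rounds. This is harmless, since neither you nor the paper relies on that sketch.
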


Now, we show \cref{thm:matroid-intersection-intro}:

\begin{theorem}[\cref{thm:matroid-intersection-intro} restated]
For two matroids $\M_1,\M_2$ on $n$ elements, there is an $\widetilde{O}(n^{17/21})$ round algorithm, making polynomially many independence queries per round, which recovers a maximum common independent set of $\M_1$ and $\M_2$ with high probability. 
\end{theorem}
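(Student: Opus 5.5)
We combine the two items of the lemma of \cite{BT25} with $T(n)=\widetilde{O}(n^{3/7})$ from \cref{thm:3-7rounds}. The algorithm has two phases. First, we run the approximate algorithm to get a common independent set $S$ with $|S|\ge r-(\varepsilon r+\Delta)$. Then we repeatedly apply the single-round augmentation step: each round either enlarges $S$ by one element or certifies that $S$ is maximum. The total number of rounds is
\[
\widetilde O\left(\frac{n\,T(n)}{\varepsilon\Delta}\right) + \varepsilon r + \Delta + 1
\le \widetilde O\left(\frac{n^{10/7}}{\varepsilon\Delta}\right) + \varepsilon n + \Delta,
\]
using $r\le n$.

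Next we balance the parameters. Set $\varepsilon n=\Delta$, so that $\varepsilon=\Delta/n$. The first term becomes $\widetilde O(n^{17/7}/\Delta^2)$. Equating it with $\Delta$ gives $\Delta^3=n^{17/7}$, i.e.\ $\Delta=n^{17/21}$ and $\varepsilon=n^{-4/21}$. Each of the three terms is then $\widetilde O(n^{17/21})$, which gives the claimed bound. We should check that $\varepsilon\in(0,1)$ and that $\Delta\le n$; both hold. If $r$ happens to be small, the augmentation phase only gets shorter, so no separate case is needed.

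It remains to handle correctness and the success probability. The augmentation phase is deterministic given $S$, and the \cite{BT25} reduction invokes the basis-finding subroutine at most polynomially many times. Each invocation succeeds with probability $1-2^{-\Omega(\sqrt n)}$ by \cref{thm:3-7rounds}, so a union bound gives high probability overall. The final set is maximum because the augmentation phase stops only when item two of the lemma certifies maximality.

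The main point needing care is checking that the \cite{BT25} reduction uses the basis algorithm as a black box. In particular, its subroutine calls are on matroids with at most $n$ elements, possibly contractions or restrictions of $\M_1,\M_2$, whose independence oracles we can simulate query-for-query. Once that is confirmed, the rest is the arithmetic above.
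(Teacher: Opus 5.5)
Your proof is correct and follows the paper's route: plug $T(n)=\widetilde O(n^{3/7})$ into the \cite{BT25} approximate reduction, then finish with single-round augmentations. The only difference is the parameter choice: the paper takes the $r$-dependent $\varepsilon=n^{10/21}r^{-2/3}$, $\Delta=\varepsilon r$ to get $\widetilde O(n^{10/21}r^{1/3})$ before invoking $r\le n$, whereas your $r$-oblivious choice $\varepsilon=n^{-4/21}$, $\Delta=n^{17/21}$ gives the same worst-case bound without needing to know $r$ or to handle $\varepsilon>1$ when $r<n^{5/7}$.
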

\begin{proof}
We set $\varepsilon = n^{10/21}r^{-2/3}$ and $\Delta = \varepsilon r= n^{10/21}r^{1/3}$. We can first find an $S\in I_1\cap I_2$ of size $|S|\geq r-(\varepsilon r +\Delta)$ in
\[
\widetilde O\left( \frac{n\cdot n^{3/7}}{\varepsilon \Delta} \right) = \widetilde O(n^{10/21} r^{1/3})
\]
rounds and then augment it to optimal in $O(\varepsilon r+\Delta)=O(n^{10/21}r^{1/3})$ rounds. As $r\leq n$, the total rounds of adaptivity needed is $\widetilde O(n^{17/21})$.
\end{proof}

\subsection{Proof of \cref{thm:submodular-maximization}}

\begin{definition}[\cite{BRS19b}]
Given a matroid $\M=(E,\I)$, we say that $(a_1,\dots,a_{\rank(\M)})$ is a \emph{random feasible sequence} if for all $i\in [\rank(\M)]$, $a_i$ is an element chosen uniformly at random from $\{a\in E\setminus \{a_1,\dots,a_{i-1}\}:\{a_1,\dots,a_{i-1},a\}\in \I\}$.
\end{definition}

\begin{lemma}\label{lem:generate-random-sequence}
Let $\M=(E,\I)$ be a matroid on $n$ elements. There is an $\tilde O(n^{3/7})$ round algorithm that generates a random feasible sequence in $\M$.
\end{lemma}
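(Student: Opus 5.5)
The plan is to generate a random feasible sequence by finding a basis of a \emph{randomly reweighted} matroid, using \cref{thm:3-7rounds} as a black box. The key observation is that a random feasible sequence is exactly the output of the greedy algorithm run on a uniformly random permutation, with one caveat. If $\pi$ is a uniformly random ordering of $E$, the greedy algorithm keeps each element that is independent of the previously kept ones. The first element it keeps is then uniform among the non-loops. However, conditioned on the kept prefix $a_1,\dots,a_{i-1}$, the next kept element is the earliest (in $\pi$) element of the set $A_i=\{a\notin \mathrm{span}(a_1,\dots,a_{i-1})\}$. By exchangeability of the unrevealed part of $\pi$, this element is uniform over $A_i$. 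The caveat is that $A_i$ excludes spanned elements, which is smaller than the feasible set in the definition only when $\{a_1,\dots,a_{i-1},a\}$ is dependent. But feasible means exactly $a\notin\mathrm{span}(a_1,\dots,a_{i-1})$, so the two sets coincide. Hence the greedy basis on a random permutation, listed in greedy order, is distributed exactly as a random feasible sequence. I would verify this exchangeability claim carefully by conditioning on the set of elements that appear before the $(i-1)$st kept element.

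The remaining task is to compute the lexicographically-first (greedy) basis with respect to a given order $\pi$ in $\widetilde O(n^{3/7})$ rounds. The algorithm of \cref{thm:3-7rounds} finds \emph{some} basis, not necessarily the greedy one. I would therefore reduce greedy-basis computation to arbitrary-basis computation on an auxiliary matroid whose bases are unique, or at least whose every basis determines the greedy one. The approach I would try first follows the standard reduction used in \cite{BRS19b} and \cite{KUW85}. An element $e_{\pi(j)}$ lies in the greedy basis if and only if $e_{\pi(j)}\notin \mathrm{span}(\{e_{\pi(1)},\dots,e_{\pi(j-1)}\})$. That test is a rank comparison, not a single independence query, so the plan is to compute a basis $B_j$ of every prefix $P_j=\{e_{\pi(1)},\dots,e_{\pi(j)}\}$ in parallel. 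All $n$ prefixes are run simultaneously through the algorithm of \cref{thm:3-7rounds}; since each uses polynomially many queries per round, the total remains polynomial per round and the round count stays $\widetilde O(n^{3/7})$. Then $\rank(P_j)=|B_j|$, and $e_{\pi(j)}$ is greedy-selected exactly when $|B_j|>|B_{j-1}|$. This gives the full greedy sequence with one extra round.

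The success probability follows by a union bound over the $n$ parallel invocations, each of which succeeds with probability $1-2^{-\Omega(\sqrt n)}$. The main obstacle I expect is that the algorithm of \cref{thm:3-7rounds} is randomized, so its output basis $B_j$ is correlated with internal randomness. This does not matter here, because only the ranks $|B_j|$ are used, and these are deterministic quantities once every run succeeds. Thus the output distribution is exactly that of greedy on a uniform $\pi$, conditioned on a high-probability success event. The only point to check carefully is that the paper's notion of ``generates'' tolerates this $2^{-\Omega(\sqrt n)}$ failure probability, which I would handle by outputting an arbitrary sequence on failure.
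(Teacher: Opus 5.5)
Your proposal is correct and follows essentially the same route as the paper: sample a uniformly random permutation, compute the $\rank$ of every prefix by running the basis algorithm of \cref{thm:3-7rounds} on all $n$ prefixes in parallel, and output the elements at which the rank increases. The only difference is that the paper cites Lemma 15 of \cite{BRS19b} for the claim that this sequence has the random-feasible-sequence distribution, whereas you prove it directly via exchangeability of the unrevealed suffix of $\pi$; the opening ``reweighting'' and ``auxiliary matroid'' ideas are never needed.
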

\begin{proof}
We follow Algorithm 4 of \cite{BRS19b}. First, we sample a random permutation $\{e_1,\dots,e_n\}$ of the ground set $E$, and compute $r_i=\rank(\{e_1,\dots,e_i\})$ for every $i\in [n]$. This step takes $\tilde O(n^{3/7})$ rounds using $n$ parallel calls to our basis-finding algorithm from \cref{thm:3-7rounds}. Then for every $i\in [\rank(\M)]$, we let $a_i$ be the $i$th element $e_j$ such that $r_j-r_{j-1}=1$. The resulting sequence $(a_1,\dots,a_{\rank(\M)})$ is a random feasible sequence. See Lemma 15 of \cite{BRS19b} for a proof.
\end{proof}

\begin{lemma}[\cite{BRS19b}]\label{lem:random-sequence}
Let $\M$ be a matroid on $n$ elements. Given that there is a $T(n)$ round independence-query algorithm that generates a random feasible sequence in $\M$, for every $\epsilon>0$, there is an $\tilde O(T(n)\cdot \epsilon ^{-3})$ round algorithm that obtains an $1-1/e-O(\epsilon)$ approximation for maximizing a monotone submodular function under the matroid constraint $\M$ with high probability.
\end{lemma}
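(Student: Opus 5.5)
The plan is to prove \cref{lem:random-sequence} by reproducing the analysis of \cite{BRS19b} for their accelerated continuous-greedy algorithm under a matroid constraint. The only change is to replace every call to their basis/rank subroutine with the hypothesized $T(n)$-round random-feasible-sequence generator. Concretely, I would run the algorithm of \cite{BRS19b} over $O(\epsilon^{-1})$ outer phases. Each phase maintains a current solution $S$ and its contracted matroid $\M/S$. Within a phase, there are $\widetilde O(\epsilon^{-2})$ inner iterations. Each inner iteration does two things: it filters the ground set to the elements whose marginal contribution (with respect to $f$ and the current partial solution) is near the current threshold, and it uses random feasible sequences in the restricted/contracted matroid to select a large batch of elements whose expected marginal value stays above threshold.

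The key step is to check that the guarantee \cite{BRS19b} prove about a batch does not depend on how the random feasible sequence is produced, but only on its distribution. That guarantee says that a prefix of a random feasible sequence, truncated at the first point where the average marginal drops, retains a $(1-\epsilon)$ fraction of the threshold value. Since the statement of the lemma supplies exactly a sampler with the correct distribution (\cref{lem:generate-random-sequence} gives one for our setting), their per-iteration approximation guarantee transfers verbatim.

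With that in hand, the approximation analysis is the standard one. Each phase adds value at least $(1-O(\epsilon))\epsilon(\mathrm{OPT}-f(S))$, so after $1/\epsilon$ phases we get $(1-1/e-O(\epsilon))\mathrm{OPT}$. The high-probability claim follows from repeating each sampling step polynomially many times in parallel and taking a union bound.

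For the round count, every inner iteration needs only $O(1)$ rounds of value and independence queries, plus one invocation of the sampler. All polynomially many samples needed in an iteration are drawn in parallel, which costs $T(n)$ rounds rather than $T(n)$ rounds per sample. Multiplying by the $\widetilde O(\epsilon^{-3})$ total iterations gives $\widetilde O(T(n)\epsilon^{-3})$.

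The main obstacle I expect is the bookkeeping in the inner loop. The sampler must be invoked on $\M/S$ restricted to the currently filtered ground set, and I need to confirm that independence queries to this minor are simulable with queries to $\M$: query $X\cup S$, and for the restriction just drop elements. I also need to confirm that the number of filtering steps per phase is $\widetilde O(\epsilon^{-2})$, as in \cite{BRS19b}, and is not affected by using independence rather than rank queries. This is where I would check their Lemma on the number of iterations carefully.
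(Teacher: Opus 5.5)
Your plan matches the paper: it gives no separate proof and imports this lemma from \cite{BRS19b}, whose accelerated continuous greedy depends on the oracle model only through the cost of drawing random feasible sequences in minors of $\M$ (simulable with independence queries to $\M$ as you describe), so running the $T(n)$-round sampler in parallel within each of the $\widetilde O(\epsilon^{-3})$ adaptive-sequencing iterations gives the bound. One correction to your recollection: \cite{BRS19b} truncate the random sequence at the first prefix after which at most a $(1-\epsilon)$ fraction of the current candidate set is still feasible and above threshold (not where an average marginal drops), and this rule both gives each selected element expected marginal at least $(1-\epsilon)$ times the threshold and shrinks the candidate set geometrically, which is where the iteration bound comes from.
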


\begin{theorem}[\cref{thm:submodular-maximization} restated]
For any $\epsilon > 0$, there is an $\widetilde{O}(n^{3/7}\epsilon^{-3})$-round algorithm that, given any matroid $\M=(E,\I)$ on $n$ elements and any monotone submodular function $f : 2^E \to \R_{\ge 0}$, 
makes polynomially many independence queries per round and outputs, with high probability, a $(1 - 1/e - O(\epsilon))$-approximation to the maximum of $f$ under the matroid constraint~$\M$.
\end{theorem}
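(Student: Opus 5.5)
The proof is a direct composition of the two results just stated. \cref{lem:generate-random-sequence} gives an $\widetilde{O}(n^{3/7})$-round procedure that generates a random feasible sequence in $\M$. This procedure uses independence queries only, via parallel invocations of the basis-finding algorithm of \cref{thm:3-7rounds} on prefixes of a random permutation. We therefore instantiate \cref{lem:random-sequence} with $T(n) = \widetilde{O}(n^{3/7})$. That lemma reduces $(1-1/e-O(\epsilon))$-approximate monotone submodular maximization under $\M$ to $\widetilde{O}(\epsilon^{-3})$ sequential calls to a random-feasible-sequence generator. The result is an algorithm using $\widetilde{O}(n^{3/7}\cdot \epsilon^{-3})$ rounds, which is the claimed bound.

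The points to verify are bookkeeping. First, each round must still issue only polynomially many independence queries. The generator runs $n$ parallel basis computations, each polynomial per round. Since the reduction of \cite{BRS19b} may invoke several generators in parallel within a phase, polynomially many such copies remain polynomial. Second, the calls to the generator inside \cref{lem:random-sequence} may be on contracted or restricted matroids, for example $\M/S$ for a current partial solution $S$. Independence queries to such minors are simulated by single queries to $\M$ (query $T\cup S$), and the minor has at most $n$ elements, so the $\widetilde{O}(n^{3/7})$ bound applies uniformly.

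Third, the high-probability guarantee must be maintained. Each basis computation fails with probability at most $2^{-\Omega(\sqrt n)}$, and there are only $\mathrm{poly}(n)\cdot \epsilon^{-3}$ calls in total. A union bound preserves high probability, at least for $\epsilon \ge 1/\mathrm{poly}(n)$. For smaller $\epsilon$ the claimed bound already exceeds the trivial $O(n)$-round exact greedy-type approach, so the statement is immediate there.

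The main obstacle, mild here, is making sure \cref{lem:random-sequence} as proved in \cite{BRS19b} only accesses $\M$ through random feasible sequences of minors and independence queries. If so, substituting our generator in a black-box way is legitimate, and the distributional correctness of each sampled sequence, Lemma 15 of \cite{BRS19b}, is unaffected by how the ranks $r_i$ are computed, since only their exact values matter.
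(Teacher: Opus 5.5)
Your proposal is correct and is exactly the paper's argument. The paper proves the theorem in one line by plugging the $\widetilde{O}(n^{3/7})$-round generator of \cref{lem:generate-random-sequence} into \cref{lem:random-sequence}; your bookkeeping (simulating queries to minors by queries to $\M$, the polynomial query count, the union bound) only makes that composition explicit.

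One small correction concerns your aside on tiny $\epsilon$. It is unnecessary: with per-call failure probability $2^{-\Omega(\sqrt{n})}$, the union bound over $\mathrm{poly}(n)\cdot\epsilon^{-3}$ calls already works for every $\epsilon \ge 2^{-c\sqrt{n}}$, for a small enough constant $c>0$, not just for $\epsilon \ge 1/\mathrm{poly}(n)$. Its justification is also wrong as stated, because plain greedy only guarantees a $1/2$-approximation under a general matroid constraint; for that regime you would need a sequential continuous-greedy-type algorithm instead.
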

\begin{proof}
It follows immediately from \cref{lem:generate-random-sequence} and \cref{lem:random-sequence}.
\end{proof}

\section{Proof of \cref{lem:alphaGrowth}} \label{sec:lem-proof}

\newcommand{\Hyp}{\text{Hyp}}
\newcommand{\Var}{\textbf{Var}}

We only prove the last claim of \cref{lem:alphaGrowth} here as the rest is given by Theorem 4.16 in \cite{KPS25}.

\begin{lemma}
Let $\mathcal{M}$ be a matroid, and let $S_1, \dots S_k$ be a sequence of sets that are peeled off in accordance with \cref{alg:iterativePeel}. Then, for any $i < j \in [k]$,
\[
\frac{\alpha(S_j)}{|S_j|} = \Omega \left(\frac{\alpha(S_i)}{|S_i|}\right).
\]
\end{lemma}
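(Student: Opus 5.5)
The plan is to compare both ratios against the density of independent random subsets of the residual matroid at the time $S_i$ is peeled. Write $\M_i = \M - S_1 - \dots - S_{i-1}$ with ground set $E_i$ and $n_i = |E_i|$. Since $j > i$, we have $S_j \subseteq E_j \subseteq E_i$.

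\textbf{Step 1: $\alpha(S_i)/|S_i|$ is a valid independence density for $\M_i$.} I will use \cref{clm:independentRecovery}, applied to the globally-optimal set $S_i$ in $\M_i$ (equivalently Claim 5.1 of \cite{KPS25}, by \cref{rmk:globalisgreedy}). It says that a uniformly random subset of $E_i$ of size $\ell_i = \frac{\alpha(S_i)}{10|S_i|} n_i$ is independent with probability at least $1/4$.

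\textbf{Step 2: transfer this density to $S_j$.} A uniformly random $\ell_i$-subset of $E_i$ meets $S_j$ in a hypergeometric number of elements, with mean $\mu = \ell_i |S_j|/n_i = \frac{\alpha(S_i)}{10|S_i|}|S_j|$. Conditioned on that intersection having size $t$, it is a uniformly random $t$-subset of $S_j$. Independence is downward closed, so whenever the big sample is independent, its intersection with $S_j$ is independent too.

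Define $g(t) = \Pr_{U\sim\binom{S_j}{t}}[U\in\I]$; this is nonincreasing in $t$. Hypergeometric concentration (I will use a Chebyshev bound with the variance at most $\mu$) shows that the intersection size is at least $\mu/2$ with probability at least $1 - 4/\mu$. Combining with Step 1,
\[
\tfrac14 \le \Pr[\text{sample independent}] \le \Pr[|U\cap S_j| < \mu/2] + g(\lceil \mu/2\rceil).
\]
Hence $g(\lceil\mu/2\rceil) \ge 1/4 - 4/\mu$.

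\textbf{Step 3: from positive independence probability to a lower bound on $\alpha(S_j)$ (the main obstacle).} We have $g(\mu/2) \ge 1/8$, not $> 1/2$, so this does not directly give $\alpha(S_j) \ge \mu/2$. To fix this I will use a standard boosting fact: if a random $t$-subset of $S_j$ is independent with probability at least $1/8$, then a random $ct$-subset is independent with probability greater than $1/2$ for a small constant $c$.

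I will argue this by splitting a random $t$-subset into $1/c$ disjoint random blocks of size $ct$, each marginally uniform. If each block were dependent with probability at least $1/2$, then the whole set would contain a dependent block, hence be dependent, with probability bounded away from $7/8$. Making this precise needs care: the events are positively correlated in the wrong direction, so a simple union bound does not suffice. The cleaner route, which I will try first, is to reuse the KPS machinery. Their analysis of $\alpha$ via the estimator $\widehat\alpha$ (Definition 4.6 and Claim 4.8 of \cite{KPS25}) already shows that $\Pr[\mathrm{Ind}]$ on random $t$-subsets decays at most geometrically across a constant-factor change in $t$. In particular, the first $t$ with independence probability at most $1/8$ is within a constant factor of the median point $\alpha$. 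Either way this yields $\alpha(S_j) = \Omega(\mu) = \Omega\!\left(\frac{\alpha(S_i)}{|S_i|}|S_j|\right)$.

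\textbf{Small-$\mu$ case.} If $\mu$ is below a large constant, the Chebyshev bound in Step 2 is too weak. In that case I will instead note that $\alpha(S_j) \ge 1$ trivially, or better $\alpha(S_j) \ge$ a constant, since circuits of size at most $50$ were removed by RemoveSmallCircuits. Then $\alpha(S_j) = \Omega(\mu)$ holds anyway.

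Dividing by $|S_j|$ in either case gives the claimed inequality $\frac{\alpha(S_j)}{|S_j|} = \Omega\!\left(\frac{\alpha(S_i)}{|S_i|}\right)$.
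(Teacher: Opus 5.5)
Your proof is correct, but it is organized differently from the paper's.

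\textbf{The paper's route.} The paper argues by contradiction. Assume $\alpha(S_j)/|S_j| \le \frac{1}{100}\,\alpha(S_i)/|S_i|$ and take a random $\ell$-subset with $\ell = \frac{\alpha(S_i)}{10|S_i|}n$. With probability at least $1/2$ it contains fewer than $\alpha(S_i)$ elements of $S_i$ (Markov) and at least $\alpha(S_j)$ elements of $S_j$ (Chebyshev). Applying the \emph{definition} of $\alpha$ to both sets, conditionally with probability at least $1/4$ the $S_i$-part is independent while the $S_j$-part is dependent. So the first circuit avoids $S_i$ with probability at least $1/8$, contradicting $q_{S_i}\ge 1-2^{-20}$.

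\textbf{Your route, compared.} You factor through \cref{clm:independentRecovery}, which already packages the $S_i$-side (Markov plus the circuit mass $q_{S_i}$). You then need only one hypergeometric variable $|U\cap S_j|$ together with downward closure. The price is that \cref{clm:independentRecovery} only gives independence probability $1/4$, below the median threshold $1/2$ that defines $\alpha$, so you need an amplification step. The paper avoids this: it spends the slack in $q_{S_i}$ directly and never leaves the ``probability $\ge 1/2$'' regime where the definition of $\alpha$ applies. What your route buys is modularity. It shows the monotonicity is just a consequence of $\alpha(S_i)/|S_i|$ being an independence density for the whole residual ground set $E_i$. It also applies verbatim to any $T\subseteq E_i$, not only to later peeled sets. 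Your small-$\mu$ case ($\alpha(S_j)\ge 1$) is fine.

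\textbf{Fixing Step 3.} Tighten the justification for the amplification step. The fact you need is exactly \cref{clm:alpha-dependent} (Claim 4.7 of \cite{KPS25}): a random $d\,\alpha(S)$-subset is dependent with probability at least $1-2^{-d}$. In words, the independence probability decays \emph{at least} (not at most, as you wrote) geometrically beyond $\alpha$. With $d=3$ and $\mu>32$, suppose $\alpha(S_j)\le \mu/6$. Monotonicity of $g$ would then force $g(\lceil \mu/2\rceil)\le g(3\alpha(S_j))\le 1/8 < 1/4-4/\mu$. This is a contradiction, so $\alpha(S_j)>\mu/6$. Cite that claim rather than the estimator result (Claim 4.8). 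Also drop the block-splitting sketch: as written it is not a proof, because the block-independence events are not independent. Making it rigorous requires exactly the negative-correlation argument that Claim 4.7 supplies.
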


\begin{proof}
Let $\ell = \frac{\alpha(S_i)}{10|S_i|} n$. Suppose $U$ is a subset of the ground set $E$ of cardinality $\ell$, we define random variables $X_i = |U\cap S_i|, X_j =|U \cap S_j|$. Note that $X_i\sim \Hyp(n,|S_i|,\ell),X_j\sim \Hyp(n,|S_j|,\ell)$ and they are negatively correlated.

For the sake of contradiction, suppose
\[
\frac{\alpha(S_j)}{|S_j|} \leq \frac{1}{100} \cdot\frac{\alpha(S_i)}{|S_i|}
\]
We aim to show that
\[
\Pr[X_i< \alpha(S_i) \land X_j\geq \alpha(S_j)] \geq 1/2.
\]
Consider the $i$-th iteration of \cref{alg:iterativePeel}, the above implies that for a random permutation $\pi$ of $[n]$, there is a $\geq 1/2$ probability that there are less than $\alpha(S_i)$ elements from $S_i$ and more than $\alpha(S_j)$ elements from $S_j$ in the first $\ell$ elements of $\pi$. Conditioned on this, with at least $1/4$ probability, there is no circuit within $S_i$ but there is a circuit in $S_j$. This implies that with $\geq 1/8$ probability, the first circuit appears outside of $S_i$ when adding elements according to the order of a random permutation $\pi$. But on the other hand, since $S_i$ is a greedility-optimal set in the $i$-th iteration, by \cref{clm:boundedErrorEstimation}, we have $q(S) \geq 1-2^{-20}-1/n^2\geq 1/2$. This is a contradiction.

Since we have
\[
\E[X_i] = \frac{\alpha(S_i)}{10},
\]
by Markov's inequality,
\[
\Pr[X_i < \alpha(S_i)]  = 1-\Pr[X_i\geq \alpha(S_i)] = 1- \Pr[X_i\geq 10\E[X_i]] \geq \frac{9}{10}.
\]
On the other hand, we have
\[
\E[X_j] = \ell\frac{|S_j|}{n} = \frac{\alpha(S_i)|S_j|}{10|S_i|} \geq 10\alpha(S_j), \quad \Var[X_j] = \ell \frac{|S_j|}{n}\left(1-
\frac{|S_j|}{n}\right) \leq \E[X_j].
\]
By Chebyshev's inequality,
\begin{align*}
\Pr[X_i\geq \alpha(S_j)] & = 1-\Pr[X_j<\alpha(S_j)] \\
& \geq 1-\Pr[X_j\leq 0.1 \cdot \E[X_j]] \\
& \geq 1- \Pr[|X_j-\E[X_j]|\geq 0.9 \cdot \E[X_j]]\\
& \geq 1- \frac{\Var[X_j]}{0.81 \cdot \E[X_j]^2}\\
& \geq 1- \frac{1}{0.81\cdot\E[X_j]}\\
& \geq 1-\frac{1}{8.1 \cdot \alpha(S_j)} \geq \frac{7}{8}.
\end{align*}

Since $X_i,X_j$ are negatively correlated, we have
\[
\Pr[X_i<\alpha(S_i)\land X_j\geq \alpha(S_j)] \geq \Pr[X_i<\alpha(S_i)] \cdot \Pr[X_j\geq \alpha(S_j)] \geq \frac{9}{10}\cdot \frac{7}{8} \geq \frac{1}{2}.
\]
as desired.
\end{proof}

\section{A Proof of \cref{clm:basicRecoverRedundant}}\label{sec:appendixS2alpha2}

In this section, we provide a proof of \cref{clm:basicRecoverRedundant}. Note that this proof exactly follows \cite{KPS25}'s proof, but neglects to make some simplifications. We include a restatement of the claim below:

\begin{claim}[Restatement of \cref{clm:basicRecoverRedundant}.]
    Let $\M=(E,\I)$ be a matroid and $S\subseteq E$, such that for every $x \in S$, $p_{x, \M|_S} \geq 1 / |S|^2$. Then, there is a one round algorithm making polynomially many queries which recovers 
\[
\widetilde{\Omega} \left ( \sum_{x \in S} \min \left (1, p_{x, \M|_S} \cdot \frac{|S|^2}{\alpha(S)^2} \right ) \right )
\]
redundant elements in $S$ with probability $1 - 2^{- \Omega(n)}$.
\end{claim}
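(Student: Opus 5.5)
The plan is to reconstruct the one-round deletion procedure of \cite{KPS25} (Lemma 6.7), which is built on random prefixes of length $\Theta(\alpha(S))$ and the circuits they expose, and to track the per-element contribution instead of collapsing it into a uniform bound. Set $a=\alpha(S)$ and $s=|S|$, and work in $\M|_S$ throughout.

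\textbf{Step 1: the one-round procedure.} We sample $\mathrm{poly}(n)$ independent random permutations $\pi$ of $S$. For each one we query all prefixes, so we learn the first dependent prefix. Hence we learn the circuit $C_\pi$ and the element $e_\pi$ that closes it; this element lies in $\mathrm{span}$ of the preceding prefix, which is an independent set of length about $a$ (up to constant factors with constant probability, by the definition of $\alpha(S)$). The natural deletion rule is a \emph{random sub-sampling} scheme, as in \cite{KPS25}. We sample a random set $K\subseteq S$ of density $\rho\approx a/s$ to keep. We delete $x\notin K$ whenever some observed circuit $C$ containing $x$ has $C-x\subseteq K$. Then $x\in\mathrm{span}(K)$, so the deleted set $D$ satisfies $D\subseteq \mathrm{span}(S\setminus D)$ and is redundant. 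Since the choice of $K$ is internal randomness, all needed queries are already made in the single round, and we may try polynomially many $K$ and output the best.

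\textbf{Step 2: per-element accounting.} Fix $x$. For a random $K$ of density $\rho$, each prefix $P_\pi$ of size about $a$ that closes at $x$ survives, meaning $P_\pi\subseteq K$ after we condition on $x\notin K$, with probability roughly $\rho^{|C_\pi|-1}$. That is too weak in general. So, following \cite{KPS25}, I would instead use a different keep-set: $K$ is the union of $t$ random prefixes of length about $a$, each an independent set drawn from its own permutation. Then $|K|\approx ta$. An element $x$ is deleted if one of the sampled permutations has its first circuit closed by $x$ with all earlier elements inside $K$. The probability that a fresh permutation closes at $x$ within the first $a$ steps is about $p_x/\mathrm{polylog}$, by symmetry of the closing position within the circuit. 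Choosing $t\approx s^2/a^2$ prefixes gives $x$ roughly $t\,p_x$ chances. Each chance independently succeeds, so $\Pr[x\text{ deleted}]\gtrsim\min(1,\,t\,p_x)=\min(1,p_x s^2/a^2)$. Summing over $x$ gives the expected count $\widetilde{\Omega}(\sum_x\min(1,p_x s^2/a^2))$. The lower bound $p_x\ge 1/|S|^2$ ensures every needed event has inverse-polynomial probability, so polynomially many samples estimate it.

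\textbf{Step 3: concentration and bookkeeping.} We must ensure that $K$ itself does not consume the deleted elements, i.e.\ $|K|=ta\le s/2$ is needed so that most deletions are genuine. When $ta>s$ the truncation at $|S|$ is what the $\min$ already records, and we use a smaller $t$. Then we boost to probability $1-2^{-\Omega(n)}$ by repeating the experiment $\mathrm{poly}(n)$ times and taking the best outcome, using Chernoff to compare against the expectation.

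\textbf{Main obstacle.} The hardest part is the Step 2 estimate that a prefix witnessing $x$ lies entirely inside the kept set with the claimed probability without losing more than polylog factors. In other words, the difficulty is exactly making the \cite{KPS25} argument yield the per-element $\min(1,p_x s^2/a^2)$ rather than the averaged $\min(s,s^2/a^2)$. I would first try to redo their analysis verbatim while keeping each element's sum term separate, stopping before the step where they lower bound $p_x$ by $\widetilde{\Omega}(1/s)$.
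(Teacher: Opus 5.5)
There is a genuine gap in Step~2. The deletion rule you describe cannot produce the factor $|S|^2/\alpha(S)^2$, and the accounting is inconsistent.

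With $t\approx s^2/a^2$ prefixes of length $\approx a$, the kept set has size $\approx s^2/a\gg s$. Your own constraint $|K|\le s/2$ therefore always forces $t\lesssim s/a$. Under your rule, each sampled permutation certifies at most one element, namely its closing element. The probability that $x$ closes the first circuit is also not $p_x/\mathrm{polylog}$. Swapping $x$ with the closing element gives a bijection, which shows $\Pr_\pi[x \text{ closes } C_\pi]=\E_\pi[\mathbf{1}[x\in C_\pi]/|C_\pi|]$; this is only $p_x/a$ when circuits have size $\approx a$. Fresh permutations do not help either, since $C_\pi-x\subseteq K$ is exponentially unlikely for circuits of size $\approx a$. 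So you get at most $\min(1,(s/a)p_x)$ per element and at most $\approx s/a$ deletions overall.

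Concretely, take the uniform matroid $U_{r,s}$ with $r\approx\sqrt{s}$. Then $a=r+1$ and $p_x=a/s$, so the target is $\sum_x\min(1,s/a)=s$. Your procedure deletes only $O(\sqrt{s})$ elements.

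The missing idea is to let each prefix certify \emph{every} element it spans, by testing outside elements against it. The paper takes $\ell=s/(4t)$ random prefixes $A_i$ of length $t=\Theta(\alpha(S)\log n)$. For all $j\le t$ and all $x\notin A_i$ it queries $\mathrm{Ind}(\{\pi(1),\dots,\pi(j)\})$ and $\mathrm{Ind}(\{\pi(1),\dots,\pi(j)\}\cup\{x\})$. It deletes each $x\notin\bigcup_i A_i$ that is spanned by an independent prefix, and keeps $\bigcup_i A_i$, which has size at most $s/4$, so $x$ avoids it with probability at least $1/2$.

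The chance that one prefix catches $x$ is then $p_{x,1}$: the probability that $x$ lies in the first circuit when $x$ is placed \emph{first}. The gain comes from positional monotonicity. We have $p_{x,r}\ge p_{x,r+1}$, $p_x=\frac1s\sum_r p_{x,r}$, and $p_{x,r}\le n^{-10}$ for $r>t$ by \cref{clm:alpha-dependent}. Together these give $p_{x,1}\ge\frac{s}{2t}p_x$, where $p_x\ge 1/|S|^2$ is used to absorb the tail. Hence
$1-(1-p_{x,1})^{\ell}\ge\min\{\tfrac12,\ \ell p_{x,1}/2\}\ge\min\{\tfrac12,\ s^2p_x/(16t^2)\}$, which is $\widetilde{\Omega}(\min(1,p_xs^2/a^2))$.

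So the obstacle you single out, getting the witnessing prefix inside $K$, is not the real one. Containment is automatic once $K$ is the union of the prefixes and $x$ is tested externally. The substance of the proof is the lower bound on $p_{x,1}$.
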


Before providing the proof, we also recall some bounds on when dependences arise in relation to the $\alpha$ value of $S$:

\begin{claim} \label{clm:alpha-dependent}[Claim 4.7 of \cite{KPS25}]
For any set $S$ and any integer $d>1$, a random subset of $S$ of cardinality $d\cdot\alpha(S)$ is dependent with probability at least $1-2^{-d}$.
\end{claim}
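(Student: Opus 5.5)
The plan is to reduce the claim to a \emph{submultiplicativity} property of the function
\[
g(k) \;=\; \Pr_{T\sim\binom{S}{k}}[\mathrm{Ind}(T)=1],
\]
namely $g(d\cdot\alpha(S)) \le g(\alpha(S))^d$. Once this holds, the definition of $\alpha(S)$ gives $g(\alpha(S))\le 1/2$, hence $g(d\alpha(S))\le 2^{-d}$. The event that a random $d\alpha(S)$-subset is dependent is exactly the complement, so its probability is at least $1-2^{-d}$.

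To get submultiplicativity, I will show that $k\mapsto g(k)$ is log-concave on $\{0,1,\dots,|S|\}$, i.e.\ $g(k)^2\ge g(k-1)g(k+1)$. Together with $g(0)=1$, this means the ratios $\rho_k=g(k+1)/g(k)$ are nonincreasing. Then for every $a$,
\[
g(d a)=\prod_{k=0}^{da-1}\rho_k\le\Big(\prod_{k=0}^{a-1}\rho_k\Big)^{d}=g(a)^d,
\]
because each block of $a$ consecutive ratios is dominated by the first block. (If some $g(k)=0$, all later values are also $0$ by downward closure, and the inequality is trivial.)

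Writing $I_k$ for the number of independent $k$-subsets of $\M|_S$, we have $g(k)=I_k/\binom{|S|}{k}$. So log-concavity of $g$ is exactly \emph{ultra-log-concavity} of $(I_k)$, which is the strong Mason conjecture. This was proved by Anari--Liu--Oveis Gharan--Vinzant and by Br\"and\'en--Huh via Lorentzian polynomials. My first choice is to cite this result directly, applied to the restriction $\M|_S$.

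The main obstacle is avoiding this heavy hammer, if an elementary argument is preferred. In that case, I would argue directly that $\rho_k$ is the probability that $T+x$ is independent, where $T$ is a uniformly random independent $k$-set and $x$ is a uniform element of $S\setminus T$. That is,
\[
\rho_k=\E_T\!\left[\frac{|S|-|\mathrm{span}(T)|}{|S|-k}\right],
\]
and I would try to show this is nonincreasing in $k$ by coupling a random independent $(k+1)$-set with a random independent $k$-set, using exchange and span monotonicity. I expect making this coupling uniform-preserving to be the genuinely hard step; this is precisely the difficulty that the Lorentzian-polynomial proof resolves.

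A weaker but sufficient fallback is to prove the block bound $g(a+b)\le g(a)g(b)$ directly. One writes a random $(a+b)$-set as $A\cup B$ with $B$ uniform on $S\setminus A$, and uses that independence of $A\cup B$ forces $B$ to be independent in $\M/A$. If needed, I would accept a constant-factor loss here, e.g.\ $2^{-\Omega(d)}$, since only that form is used downstream.
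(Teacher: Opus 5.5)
Your proof is correct, but it takes a much heavier route than the claim needs. Since $g(k)=I_k/\binom{|S|}{k}$, log-concavity of $g$ is exactly ultra-log-concavity of $(I_k)$ of order $|S|$. That is the Anari--Liu--Oveis Gharan--Vinzant / Br\"and\'en--Huh theorem applied to $\M|_S$. Your comparison of blocks of nonincreasing ratios then gives $g(d\alpha(S))\le g(\alpha(S))^d\le 2^{-d}$.

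The paper gives no proof of its own, since the claim is imported from \cite{KPS25}. A two-line elementary argument suffices:
\begin{itemize}
\item Draw $T_1,\dots,T_d$ \emph{independently} and uniformly from $\binom{S}{\alpha(S)}$, with overlaps allowed.
\item Pad $T_1\cup\dots\cup T_d$ with uniformly random elements of its complement up to size $d\alpha(S)$, and call the result $T$.
\item The law of $T$ is invariant under every permutation of $S$, so $T$ is uniform on $\binom{S}{d\alpha(S)}$.
\item If $T$ is independent, then every $T_i\subseteq T$ is independent. This is an intersection of $d$ independent events, each of probability $g(\alpha(S))\le 1/2$.
\end{itemize}

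The same device proves your fallback $g(a+b)\le g(a)g(b)$ immediately. The obstacle you anticipated there comes from sampling $A$ and $B$ disjointly. As sketched, that route would still need $\Pr_{B\sim\binom{S\setminus A}{b}}[A\cup B\in\I]\le g(b)$ for every independent $A$. This is true (by a one-element swap coupling), but it is not in your proposal.

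As for what each route buys:
\begin{itemize}
\item The elementary argument uses only downward closure, so it holds verbatim for any independence system. Ultra-log-concavity, by contrast, fails for general downward-closed families and genuinely needs the matroid exchange axiom.
\item Your route yields the whole log-concave profile, e.g.\ $g(k)\le g(\alpha(S))^{k/\alpha(S)}$ for every integer $k\ge\alpha(S)$, not only multiples of $\alpha(S)$. This is more than anything used downstream.
\end{itemize}
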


\begin{proof}[Proof of \cref{clm:basicRecoverRedundant}.]
    To start, we recall the algorithm used in \cite{KPS25}:

    \begin{algorithm}[H]\caption{RecoverRedundantElements$(S)$}\label{alg:recover-reduant-elements}
    $t\gets 20\log n\cdot \alpha(S), \ell \gets \frac{|S|}{4t}$\\
    \For{$i \in [\ell]$ in parallel} {
        Draw a random permutation (bijection) $\pi:[|S|]\to S$.\\
        $A_i \gets \{\pi(1),\dots, \pi(t)\},B_i\gets \emptyset$.\\
        \For{$j \in [t]$ in parallel}{
            \For{$x\in E\setminus A_i$ in parallel} {
                Query $\mathrm{Ind}(\{\pi(1),\dots,\pi(j)\})$ and $\mathrm{Ind}(\{\pi(1),\dots,\pi(j)\}\cup \{x\})$.\\
                \If{$\mathrm{Ind}(\{\pi(1),\dots,\pi(j)\})=1\land\mathrm{Ind}(\{\pi(1),\dots,\pi(j)\}\cup \{x\}) =0$} {
                    $B_i\gets B_i\cup \{x\}.$
                }
            }
        }
    }
    \Return{$\bigcup_{i\in[\ell]} B_i \setminus (\bigcup_{i\in[\ell]} A_i)$}
\end{algorithm}

First, we note that
\[
\left(\bigcup_{i\in[\ell]} B_i \setminus \left(\bigcup_{i\in[\ell]} A_i\right)\right) \subseteq \text{span}\left(\bigcup_{i\in[\ell]} A_i\right)
\]
Thus, the recovered set is indeed redundant. We now focus on bounding the size of this set in the following.

Now, for every $x\in S$, we say that $p_{x,r}$ is the probability over a random order of picking elements of $S$ such that $x$ is the $r$-th element added, that $x$ participates in the first circuit that appears. Note that in this proof, we always use $p_x$ to mean $p_{x, \M|_S}$; that is to say, everything is being done with respect to the subset $S$ of elements. 

In particular, we can establish some simple inequalities:
\begin{enumerate}
    \item $p_{x,r} \geq p_{x,r + 1}$.
    \item $p_{x, \M|_S} = \frac{1}{|S|} \sum_{r = 1}^{|S|} p_{x,r}$.
    \item $\sum_{r = t+1}^{|S|} p_{x,r} \leq 1 / n^{10}$.
\end{enumerate}
The third inequality follows from \cref{clm:alpha-dependent}: a random subset of $S$ of cardinality $t=20\widehat\alpha(S)\log n\geq 10\alpha(S)\log n$ is dependent with probability at least $1-1/n^{10}$.

With this, we can see that for any $x\in S$,
\[
    p_x =\frac{1}{|S|} \sum_{r = 1}^{|S|} p_{x,r} = \frac{1}{|S|}\left( \sum_{r = 1}^{t} p_{x,r} + \frac{1}{n}\right) \leq \frac{t}{|S|} \cdot p_{x,1} + \frac{1}{n^{10}}.
\]
In particular, this implies that 
\[
    p_{x,1} \geq \frac{|S|}{t} \cdot \left(p_x - \frac{1}{n^{10}}\right) \geq \frac{|S|}{t} \cdot (p_x)/2,
\]
as we assumed that $p_{x, \M|_S} \geq 1 / |S|^2 \geq 1 / n^2$.

Now, let us revisit the above algorithm. Our first step will be to understand the probability that an element $x$ appears in one of the sets $A_1, \dots A_{\ell}$. For this, observe that each set $A_i$ is of size $t$. Thus, 
\[
    \Pr\left[x \notin \bigcup_{i\in[\ell]} A_i \right] = \Pr[x \notin A_1]^{\ell} = \left ( 1 - \frac{t}{|S|}\right )^{\ell} =\left ( 1 - \frac{t}{|S|}\right )^{\frac{|S|}{4t}} \geq  e^{-1/2} \geq 1/2.
\]
The first inequality is because $\frac{t}{|S|} = \frac{20 \log n \cdot \widehat \alpha(S)}{|S|}\leq \frac{40\log n\cdot \alpha(S)}{|S|} \leq \frac{1}{2}$ and for every $0\leq x\leq \frac{1}{2}, 1-x\geq e^{-2x}$.

Now, let us introduce the value $q_i$ such that 
\[
    q_x = \Pr\left[x \notin \bigcup_{i\in[\ell]} A_i \wedge x \in \bigcup_{i\in[\ell]} B_i\right] = \Pr\left[x \notin \bigcup_{i\in[\ell]} A_i\right] \cdot \Pr\left[x \in \bigcup_{i\in[\ell]} B_i\;\middle|\; x \notin \bigcup_{i\in[\ell]} A_i\right].
\]
Note that the samples $A_1, \dots A_{\ell}$ are all done independently of one another. Hence,
\[
\Pr\left[x \in \bigcup_{i\in[\ell]} B_i\;\middle|\; x \notin \bigcup_{i\in[\ell]} A_i\right] = 1 - \Pr\left[x \notin \bigcup_{i\in[\ell]} B_i\;\middle|\; x \notin \bigcup_{i\in[\ell]} A_i\right] = 1-\Pr[x\not\in B_1\mid x\notin A_1]^{\ell}.
\]
Now, let us understand $\Pr[x\in B_1 | x\notin A_1]$. This is exactly the probability of $x$ appearing in the first circuit when we randomly add the set $A_1$ of elements to $x$. Since $A_1$ is disjoint from $x$, this is exactly $p_{x,1}$. Hence, we obtain that
\begin{align*}
1 - \Pr[x \notin B_1 | x \notin A_1]^{\ell} & = 1 - (1 - p_{x,1})^{\ell}\\
& \geq 1 - \left ( 1 - \frac{|S| \cdot p_{x, \M|_S}}{2t} \cdot \right)^{\frac{|S|}{4t}} \\
& \geq 1-\exp\left(\frac{|S|^2 \cdot p_{x, \M|_S}}{8t^2\log n}\right)\\
& \geq \min\left\{\frac{1}{2},\frac{|S|^2 \cdot p_{x, \M|_S}}{16t^2\log n}\right\}
\end{align*}
The last inequality follows from the fact that $1-e^{-x}\geq \min\{1/2,x/2\}$ when $x\geq 0$.

To conclude, we obtain that 
\[
q_x \geq \frac{1}{2} \cdot (1 - \Pr[x \notin B_1 | x \notin A_1]^{\ell}) \geq \min\left \{\frac{1}{4}, \frac{|S|^2 \cdot p_{x, \M|_S}}{32t^2\log n} \right \}
\]
Finally then, we see that 
\begin{align*}
\E\left[\left|\bigcup_{i\in[\ell]} B_i \setminus \left(\bigcup_{i\in[\ell]} A_i\right)\right|\right] & = \sum_{x\in S} q_x\\
& \geq \sum_{x\in S}\min\left \{\frac{1}{4}, \frac{|S|^2 \cdot p_{x, \M|_S}}{32t^2\log n} \right \}\\
& = \widetilde{ \Omega} \left ( \sum_{x\in S}\min\left \{1, \frac{|S|^2 \cdot p_{x, \M|_S}}{\alpha(S)^2} \right \}\right ). 
\end{align*}
Repeating the above $\text{poly}(n_0)$ times achieves at least this expectation with probability at least $1 - 2^{-n_0}$ by an invocation of Hoeffding's inequality. This concludes the claim. 
\end{proof}

\section{An $\widetilde{O}(n^{4/9})$ Round Algorithm}\label{sec:appendix49}

In this section, we show how, with careful analysis, just using the subroutines for making progress from \cite{KPS25} enables an algorithm which requires $\widetilde{O}(n^{4/9})$ rounds for finding a basis. Recall that \cite{KPS25} had two primary ways of making progress on the sets $S_i$ that are peeled off during the decomposition:
\begin{enumerate}
	\item A $1$ round algorithm for finding $\widetilde{\Omega} \left (\frac{\alpha(S_i)}{|S_i|} \cdot n \right )$ independent elements.
	\item A $1$ round algorithm for finding $\widetilde{\Omega} \left ( \frac{|S_i^2|}{\alpha(S_i)^2}\right )$ redundant elements inside $S_i$. 
	 \end{enumerate}
	 
Additionally, we take advantage of their characterization of the \emph{growth} of $\alpha$ values as we continue to peel off sets (see \cref{lem:alphaGrowth}). With the setting established, we now fix our target progress per round as $t=n^{5/9}$ and modify the decomposition algorithm as in \cref{alg:new-decomp} (see below).

\begin{algorithm}[H]
\caption{NewDecomposition$(\M)$}
\label{alg:new-decomp}
    $\M\gets \text{RemoveSmallCircuits}(\M)$\\
    $T\gets 0$\\
    $i\gets 0$\\
    \While{$|E|\geq n/2$} {
        $i\gets i+1$\\
        $S_i\gets \text{FindGreedilyOptimal}(\M)$ \tcp{Algorithm 6 of \cite{KPS25}}
        Let $\widehat \alpha(S_i)$ be the estimation of $\alpha(S_i)$, which satisfies $(\alpha(S_i)-1)/2\leq \widehat \alpha(S_i)\leq 2\alpha(S_i)$ with high probability. \tcp{see Definition 4.6 and Claim 4.8 of \cite{KPS25}}
        \If{$\widehat \alpha(S_i)\leq |S_i|^{1/2}$}  {
            $T\gets T+|S_i|$\\
            \If{$T\geq i\cdot t$} {
                \Return \label{line:return-1}
            }
        }
        \Else {
            \Return $S_i$ \label{line:return-2}
        }
        $\M\gets \M\setminus S_i$
    }
\end{algorithm}

\begin{claim} \label{clm:number-small}
Suppose \cref{alg:new-decomp} does not return on \cref{line:return-1}. Then, before the first $S_i$ with $\alpha(S_i)>|S_i|^{1/2}$, there are at most $O(\sqrt{t\alpha(S_i)/|S_i|})$ rounds.
\end{claim}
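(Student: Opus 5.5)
Let $i^*$ be the index of the first set with $\alpha(S_{i^*})>|S_{i^*}|^{1/2}$, and let $S_1,\dots,S_{i^*-1}$ be the earlier sets, all of which satisfy $\widehat\alpha(S_j)\le |S_j|^{1/2}$. The plan is to compare two facts about these earlier sets. First, because the algorithm never returned on \cref{line:return-1}, their total size is small compared with $i\cdot t$. Second, by \cref{lem:alphaGrowth}, their ratios $\alpha/|S|$ are essentially monotone, so they are at most $O(\alpha(S_{i^*})/|S_{i^*}|)$. Setting $\rho=\alpha(S_{i^*})/|S_{i^*}|$, the goal is to show that many rounds would force a set whose size exceeds the budget, and hence $i^*-1=O(\sqrt{t/\rho}\,)$ up to the exact form of the stated bound. (I will be careful about which direction of $\rho$ the stated bound requires; see the last paragraph.)

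\textbf{Step 1 (budget).} For every $j<i^*$ we add $|S_j|$ to $T$ and do not return. So for every prefix $j\le i^*-1$,
\[
\sum_{j'\le j}|S_{j'}|<j\cdot t .
\]
By averaging, at least half of $S_1,\dots,S_{i^*-1}$ have size at most $2t$.

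\textbf{Step 2 (bucketing).} Group these small sets into $\log n$ dyadic size classes and take the most populous class $[2^{\ell-1},2^\ell]$. It contains $\gamma\ge (i^*-1)/(2\log n)$ sets, and $2^\ell\le 4t$. Apply \cref{lem:alphaGrowth} to this class: the last set $S$ in it satisfies $\alpha(S)=\Omega(\gamma^2)$. Its size is at most $4t$, so $\alpha(S)/|S|=\Omega(\gamma^2/t)$.

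\textbf{Step 3 (monotonicity).} Since $S$ precedes $S_{i^*}$, the last part of \cref{lem:alphaGrowth} gives $\alpha(S)/|S|=O(\rho)$. Combining with Step 2, $\gamma^2/t=O(\rho)$, so $\gamma=O(\sqrt{t\rho})$ and therefore $i^*=\widetilde O(\sqrt{t\,\alpha(S_{i^*})/|S_{i^*}|}\,)$. This matches the claimed form up to the $\log$ loss from pigeonholing. To remove that loss, I would apply the argument directly to the dominant class without halving, or accept $\widetilde O$.

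\textbf{Main obstacle.} The hard part is getting the claim's exact form. If it is meant without logarithms, the averaging in Step 1 and the bucketing in Step 2 cost $\log n$ factors. To avoid this, I would instead use the quadratic growth $\alpha(S_{a_i})=\Omega(i^2)$ inside each class together with the ratio monotonicity: each class $c$ with $\gamma_c$ members and size about $2^c$ satisfies $\gamma_c^2/2^c=O(\rho)$, so $\gamma_c=O(\sqrt{\rho\,2^c})$. Summing over classes with $2^c\lesssim t$ gives a geometric sum dominated by its top term, $O(\sqrt{\rho t})$. Sets larger than $t$ cannot appear, because a single such set would already exceed the budget at index $1$ relative to $j\cdot t$ for small $j$. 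That case needs care, and I would handle it by showing that any set of size greater than $jt$ at index $j$ triggers the return.
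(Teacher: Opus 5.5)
Your argument is essentially the paper's proof. The paper bounds each dyadic class by $\gamma_c = O(\sqrt{\rho\,2^c})$, using quadratic $\alpha$-growth within the class together with ratio monotonicity against $S_{i^*}$. It then sums these bounds geometrically over classes up to size $2t$ and combines this with the averaging of your Step 1. Your Steps 1--3 already give the $\widetilde O$ version, which is all that \cref{lem:old-progress} needs.

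One correction to your log-free refinement. The averaging in Step 1 costs only a factor of $2$; the only $\log n$ loss is the pigeonhole in Step 2. You must keep Step 1, because the remark you would replace it with is false. The budget only forces $|S_j| < j\cdot t$ at index $j$, so sets of size anywhere in $(2t, j t)$ can occur, and your proposed check does not exclude them. These large sets are handled exactly by $i^* \le 2\,|\{j : |S_j| \le 2t\}|$, as in the paper.
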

\begin{proof}
As the algorithm does not return on \cref{line:return-1}, we have that
\[
\sum_{j\leq i} |S_j| \leq i\cdot t,
\]
which implies at most half of $S_j$'s has $|S_j|> 2t$, which means
\[
i \leq 2\sum_{j\leq i} \textbf{1}[|S_j|\leq 2t].
\]

By \cref{lem:alphaGrowth}, for any $\ell\in [\log n]$, we have
\[
|\{j<i:S_j \in [2^\ell,2^{\ell+1}]\}| \leq \sqrt{\frac{\alpha(S_i)}{|S_i|} \cdot 2^{\ell+1}}.
\]
Therefore,
\[
i\leq 2\sum_{\ell=0}^{\log t+1} |\{j<i:S_j \in [2^\ell,2^{\ell+1}]\}| \leq 2 \sqrt{\frac{\alpha(S_i)}{|S_i|} \cdot 2^{\ell+1}} \leq 2\sqrt{\frac{\alpha(S_i)}{|S_i|} \cdot t}.
\]
\end{proof}

\begin{lemma} \label{lem:old-progress}
\cref{alg:new-decomp} makes $\tilde\Omega(n^{5/9})$ progress per round on average.
\end{lemma}
\begin{proof}
When \cref{alg:new-decomp} peels off an greedily optimal set $S$ (see Definition 4.9 in \cite{KPS25}), in one additional round, we can either find an independent set of size $\frac{\alpha(S)}{|S|} n$ by \cref{lem:independentProg}, or recover
\[
\widetilde{\Omega} \left ( \sum_{i \in S} \min \left (1, p_{i, \M|_S} \cdot \frac{|S|^2}{\alpha(S)^2} \right ) \right )
\]
redundant elements by \cref{clm:basicRecoverRedundant}. Since $S$ is a greedily optimal set, we have $p_{i,\M|_S}=\tilde \Omega(\frac{1}{|S|})$ for every $i\in S$ and thus
\[
\widetilde{\Omega} \left ( \sum_{i \in S} \min \left \{1, p_{i, \M|_S} \cdot \frac{|S|^2}{\alpha(S)^2} \right \} \right ) = \widetilde \Omega \left(\min\left\{|S|,\frac{|S|^2}{\alpha(S)^2}\right\}\right).
\]
Therefore, if $\alpha(S_i)=O( |S_i|^{1/2})$, we can recover $\widetilde{\Omega} (|S|)$ redundant elements.

If \cref{alg:new-decomp} returns on \cref{line:return-1}, as every $S_j$ that the algorithm recovers satisfies $\alpha(S_j) = \Theta(\widehat \alpha(S_j)) = O( |S_j|^{1/2})$, we can delete $\sum_{j\leq i} \widetilde \Omega(|S_j|)$ elements as shown in the preceding paragraph. As $T=\sum_{j\leq i} |S_j|$ and $T\geq i\cdot t$, we see that the average progress the algorithm makes is
\[
\frac{\sum_{j\leq i} \widetilde \Omega(|S_j|)}{i} = \frac{\widetilde \Omega(T)}{i} = \widetilde{\Omega}(t) = \widetilde{\Omega}(n^{5/9}).
\]

Otherwise, if the algorithm returns on \cref{line:return-2}, by \cref{clm:number-small}, there are at most $O(\sqrt{t\alpha(S_i)/|S_i|})$ rounds before the algorithm finds the first $S_i$ with $\widehat\alpha(S_i)>|S_i|^{1/2}$. After finding such $S_i$, as shown in the first paragraph, we can make
\[
\widetilde \Omega \left(\max\left\{\frac{\alpha(S_i)}{|S_i|}n,\frac{|S_i|^2}{\alpha(S_i)^2}\right\}\right)
\]
progress by either contracting an independent set or removing redundant elements. Therefore the average progress per round is at least
\[
\widetilde \Omega \left( \frac{\max\left\{\frac{\alpha(S_i)}{|S_i|}n,\frac{|S_i|^2}{\alpha(S_i)^2}\right\}}{\sqrt{\frac{\alpha(S_i)}{|S_i|}t}} \right)\geq \widetilde \Omega\left(\frac{n^{5/6}}{\sqrt{t}}\right) = \widetilde \Omega (n^{5/9}).
\]

Suppose the algorithm does not hit a break point in both \cref{line:return-1} and \cref{line:return-2}, then we know that $\sum_{j\leq i}|S_j|\geq n/2$. As all these $S_j$'s satisfies $\alpha(S_j)\geq |S_j|^{1/2}$ (otherwise the algorithm should have return on \cref{line:return-2}), we have that $T=\sum_{j\leq i}|S_j|\geq n/2$. By \cref{lem:alphaGrowth}, $i=O(n^{1/3})$, so $i\cdot t = O(n^{1/3}\cdot n^{5/9}) = O(n^{8/9})\leq T$, which implies that the algorithm should return on \cref{line:return-1}, giving a contradiction.
\end{proof}

\begin{theorem}
For an arbitrary matroid $\M$ on $n$ elements, there is a $\widetilde{O}(n^{4/9})$ round algorithm, making polynomially many independence queries per round, which recovers a basis of $\M$ with high probability.
\end{theorem}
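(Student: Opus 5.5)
The plan is to iterate \cref{alg:new-decomp} and invoke \cref{lem:old-progress} on each invocation, in the same way the proof of \cref{thm:3-7rounds} closes. Fix $t = n^{5/9}$, with $n$ the size of the original ground set. Each invocation of \cref{alg:new-decomp} on the current matroid spends some number $i$ of peeling rounds, plus one additional round for the progress step. By \cref{lem:old-progress}, this step either contracts an independent set or deletes a set of redundant elements, and the number of such elements is $\widetilde\Omega(i\cdot t)$.

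Concretely, there are three cases.
\begin{enumerate}
\item If the invocation returns on \cref{line:return-1}, we delete the redundant elements guaranteed by \cref{clm:basicRecoverRedundant} for all peeled sets in a single parallel round.
\item If it returns on \cref{line:return-2} with a set $S_i$, we run in parallel the independent-set procedure of \cref{lem:independentProg} and the redundancy procedure of \cref{clm:basicRecoverRedundant} on $S_i$, and keep whichever produces more.
\end{enumerate}
In the case where the algorithm contracts, the contracted set is independent and extends to a basis, so we recurse on $\M/I$. In the case where it deletes, the deleted set $D$ lies in $\mathrm{span}$ of the remaining elements, so $\rank$ is preserved and we recurse on $\M - D$. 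In both cases a basis of the residual matroid, together with $I$ in the contraction case, yields a basis of $\M$.

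For the round count, let $F(n')$ be the number of rounds needed on a residual matroid with $n' \ge n/2$ elements. The progress guarantee gives
\[
F(n') \le (i+1) + F\bigl(n' - \widetilde\Omega(i\cdot n^{5/9})\bigr).
\]
Summing over invocations, the total number of rounds is at most $\widetilde O(n/n^{5/9}) = \widetilde O(n^{4/9})$ before the size halves. Repeating over $O(\log n)$ halvings only adds logarithmic factors. One subtlety is that \cref{alg:new-decomp} uses the current $n'$ in its thresholds, so the per-round target becomes $n'^{5/9}$. This is within a constant of $n^{5/9}$ while $n' \ge n/2$, and the geometric sum $\sum_j (n/2^j)^{4/9}$ is still $O(n^{4/9})$. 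Once $n' \le \sqrt n$, we finish with the KUW algorithm in $O(n^{1/4})$ additional rounds.

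For the probability bound, each call to FindGreedilyOptimal, each estimate $\widehat\alpha$, and each progress procedure succeeds with probability $1-2^{-\Omega(n')}$. There are only polynomially many calls, so a union bound over all of them while $n' \ge \sqrt n$ gives success probability $1-2^{-\Omega(\sqrt n)}$.

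The main obstacle I expect is a gap in \cref{lem:old-progress} as stated. Its second case bounds the rounds before the first large-$\alpha$ set by $O(\sqrt{t\alpha(S_i)/|S_i|})$, and then it optimizes
\[
\max\Bigl\{\tfrac{\alpha}{|S|}n,\ \tfrac{|S|^2}{\alpha^2}\Bigr\}.
\]
It is not obvious that this yields $n^{5/6}/\sqrt t$ in every regime, because $|S|^2/\alpha^2$ is bounded by $|S|$ only when $\alpha \ge |S|^{1/2}$. Since the lemma is given, I would invoke it as a black box. If I had to justify it, I would first set $x = \alpha/|S|$, note that the progress is at least $\max\{xn,\, x^{-2}\}$, and check that
\[
\frac{\max\{xn,\ x^{-2}\}}{\sqrt{xt}}
\]
is minimized at $x = n^{-1/3}$, which gives $n^{2/3}/\sqrt{t}\cdot n^{1/6} = n^{5/6}/\sqrt t$.
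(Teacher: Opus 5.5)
Your proposal is correct and follows the paper's own argument. You iterate \cref{alg:new-decomp}, charge the $i$ rounds of each invocation against the $\widetilde\Omega(i\cdot n^{5/9})$ contracted or deleted elements from \cref{lem:old-progress}, and unroll $F(n)=i+F(n-\widetilde\Omega(i\cdot n^{5/9}))$; your halving, KUW-tail and union-bound details mirror the end of the proof of \cref{thm:3-7rounds}.

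The obstacle you flag is not a real gap. \cref{line:return-2} is reached only when $\widehat\alpha(S_i)>|S_i|^{1/2}$, so $\alpha(S_i)\ge\widehat\alpha(S_i)/2>|S_i|^{1/2}/2$, and hence $\min\{|S_i|,|S_i|^2/\alpha(S_i)^2\}\ge |S_i|^2/(4\alpha(S_i)^2)$. That is exactly what justifies optimizing $\max\{xn,x^{-2}\}$. Also, the third, non-returning case you announce but do not list is already ruled out inside \cref{lem:old-progress}.
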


\begin{proof}
In each invocation of \cref{alg:new-decomp}, we invest $i$ rounds, and by \cref{lem:old-progress}, we either contract an independent set of size $\widetilde \Omega(i\cdot n^{5/9})$, or delete $\widetilde \Omega(i\cdot n^{5/9})$ redundant elements (each invocation succeeds with probability $1-2^{-\Omega(n)}$). If we let $F(n)$ denote the number of rounds required to find a basis of a matroid on $n$ elements, we get the recurrence that 
    \[
    F(n) = i + F(n - \widetilde{\Omega}(i \cdot n^{5/9})),
    \]
    which implies that $F(n) = \widetilde{O}(n^{4/9})$, as we desire.
\end{proof}

\end{document}